\def\dOi{10(2:2)2014}
\subjclass{F.3.2, F.4.1}
\newif\ifUSEEXTERNALGRAPHICS
  \tikzset{morphism/.style={->,font=\scriptsize}}
  \tikzset{injection/.style={right hook->,font=\scriptsize}}
  \tikzset{twocell/.style = {double equal sign distance,double,-implies,shorten <= .3cm,shorten >=.3cm,font=\scriptsize,draw}}
  \tikzset{crossover/.style={preaction={solid,draw=white,-,line width=6pt}}}
  \tikzset{over/.style={fill=white,inner sep=1.5pt}}
  \newcommand{\pullback}[5][.7cm]{%
    \node[coordinate] (a) at (#2) {} ; %
      \node[coordinate] (b) at (#3) {} ; %
      \node[coordinate] (c) at (#4) {} ; %
      \path let
      \p1= ($(a) - (b)$) , %
      \p2= ($(c) - (b)$), %
      \n1={veclen(\x1,\y1)}, %
      \n2={veclen(\x2,\y2)}, %
      \n3={min(#1, .4 * min(\n1,\n2))},%
      %  \n3={.4 * min(\n1,\n2)},%
      \p3=($(\x1/\n1,\y1/\n1)$),%
      \p4=($(\x2/\n2,\y2/\n2)$),%
      \p5=($(\n3 * \x3, \n3 * \y3)$),%
      \p6=($(\n3 * \x4, \n3 * \y4)$) in%
      (b) ++ (\p5)  node[coordinate] (x) {} %%
      (b) ++ (\p6)  node[coordinate] (y) {} %%
      ;
    \node[coordinate] (c) at (barycentric cs:x=1,y=1,b=-1) {} ; %
      \path (x) -- node[pos=.02] (xup){} (c) ; %
      \path (y) -- node[pos=.02] (yup){} (c) ; %
      \path[#5] (xup) -- (c) -- (yup) ; %
  }
\newcommand\inter[2]{[\![#2]\!]\,=\,#1}
\newcommand\Rule[3]{\frac{\Big.\quad{#1}\quad}{\Big.#2}\quad\hbox{\scriptsize #3}}
\newcommand\C{\mathbb{C}}
\newcommand\Set{\mathsf{Set}}
\newcommand\Span{\mathsf{Span}}
\newcommand\Poly{\mathsf{PolyDiag}}
\newcommand\PolyFun{\mathsf{PolyFun}}
\newcommand\PDSim{\mathsf{PDSim}}
\newcommand\PFSim{\mathsf{PFSim}}
\newcommand\FSim{\mathsf{FSim}}
\newcommand\Nat{\mathsf{Nat}}
\newcommand\Sl[2]{{#1}/_{\!#2}}       %slice category
\newcommand\sq{\mathbin{\scriptstyle\square}}
\newcommand\tr{\mathbin{\triangleright}}
\DeclareMathOperator{\Lan}{\mathrm{Lan}}
\DeclareMathOperator{\Ran}{\mathrm{Ran}}
\DeclareMathOperator{\refl}{\mathrm{refl}}
\DeclareMathOperator{\id}{id}
\DeclareMathOperator{\tw}{tw}
\newcommand\op{\mathrm{op}}
\newcommand\End[1]{\int_{#1}}
\newcommand\Coend[1]{\int^{#1}}
\newcommand\fun{\boldsymbol\lambda}
\newcommand\iso{\cong}
\newcommand\Zero{\mathbf{0}}
\newcommand\One{\mathbf{1}}
\newcommand\Plus{\oplus}
\newcommand\Tensor{\otimes}
\newcommand\Linear{\multimap}
\DeclareSymbolFont{fontexp}{OT1}{cmr}{bx}{n}
\DeclareMathSymbol{!}{\mathalpha}{fontexp}{`!}
\newcommand\Pow{\mathcal{P}}
\newcommand\Mf{\ensuremath{\mathcal{M}_{\!f}}}
\renewcommand\iff{\Leftrightarrow}
\newcommand\eqdef{\stackrel{\smash{\scriptscriptstyle\mathsf{def}}}{=}}
\newcommand\BLANK{{\texttt{\char"5F}}}
\newcommand\Id[2]{\mathrm{Id}\left(#1,#2\right)}
\newcommand\IdX[3]{\mathrm{Id}_{#1}\left(#2,#3\right)}
\newcommand\eq{\mathrm{eq}}
\newcommand\DU[1]{\boldsymbol{\left[}{#1}\boldsymbol{\right]}}
\newcommand\AU[1]{\boldsymbol{\left\langle}{#1}\boldsymbol{\right\rangle}}
\newcommand\D{\mathrm{D}}
\newcommand\Sem[1]{[\![#1]\!]}          %extension of a polynomial
\begin{document}

\title[A Linear Category of Polynomial Functors]
      {A Linear Category of Polynomial Functors (extensional part)}

\author{Pierre Hyvernat}
\thanks{This work was partially funded by the French ANR project r\'ecr\'e ANR-11-BS02-0010.}
\address{Laboratoire de Math\'ematiques\\
  CNRS UMR 5126 -- Universit\'e de Savoie\\
  73376 Le Bourget-du-Lac Cedex\\
  France}
%\email{\href{mailto:pierre.hyvernat@univ-savoie.fr}{pierre.hyvernat@univ-savoie.fr}}
%\urladdr{\url{http://lama.univ-savoie.fr/~hyvernat/}}
\email{pierre.hyvernat@univ-savoie.fr}
\urladdr{\url{http://lama.univ-savoie.fr/~hyvernat/}}

\keywords{polynomial functors; linear logic; Day convolution}

\date{\today}

\begin{abstract} %%%<<<1

  We construct a symmetric monoidal closed category of \emph{polynomial
  endofunctors} (as objects) and \emph{simulation cells} (as morphisms). This
  structure is defined using universal properties without reference to
  representing polynomial diagrams and is reminiscent of Day's convolution on
  presheaves. We then make this category into a model for intuitionistic linear
  logic by defining an additive and exponential structure.

\end{abstract} %%%>>>1

\maketitle

%\tableofcontents

\section*{Introduction}  %%%<<<1

%%%<<<2
Polynomial functors are (generalizations of) functors~$X\mapsto \sum_k C_k
X^{E_k}$ in the category of sets and functions. Both the
``coefficients''~$C_k$ and the ``exponents''~$E_k$ are sets; and sums,
products and exponentiations are to be interpreted as disjoint unions,
cartesian products and function spaces. All the natural parametrized algebraic
datatypes arising in programming can be expressed in this way. For example,
the following datatypes are polynomial:
\begin{itemize}
  \item $X\mapsto \mathrm{List}(X)$ for lists of elements of~$X$, whose
    polynomial is~$\mathrm{List}(X) = \sum_{n\in\mathbb{N}}X^{[n]}$
    where~$[n]=\{0,\dots,n-1\}$;

  \item $X\mapsto \mathrm{LBin}(X)$ for ``left-leaning'' binary trees with
    nodes in~$X$, whose polynomial can be written
    as~$\mathrm{LBin}(X)=\sum_{t\in T} X^{N(t)}$, where~$T$ is the set of
    \emph{unlabeled} left-leaning trees and $N(t)$ is the set of nodes of~$t$;

  \item $X\mapsto \mathrm{Term}_S(X)$ for well-formed terms built from a first-order
    multi-sorted signature~$S$ with variables of sort~$\tau$ taken
    in~$X_\tau$.
\end{itemize}
In the last example~$X$ is a family of
sets indexed by sorts rather than a single set, and expressing it as a
polynomial requires ``indexed'' or ``multi-variables'' polynomial functors.

Because of this, those functors have recently received a lot of attention from
a computer science point of view. In this context, they are often called
\emph{containers}~\cite{cont,indexed-containers} and coefficients and
exponents are called \emph{shapes} and \emph{positions}. An early use of them (with
yet another terminology) goes back to Petersson and Synek~\cite{treeset}:
\emph{tree-sets} are a generalization of so called W-types from dependent type
theory. They are inductively generated and are related to the free monads of
arbitrary polynomial functors. In the presence of extensional equality, they
can be encoded using usual W-types~\cite{polyMonads}.

Polynomial functors form the objects of a category with a very rich structure.
The objects (i.e., polynomial functors), the morphisms (called
\emph{simulations}) and many operations (coproduct, tensor, composition, etc.)
can be interpreted using a ``games'' intuition (refer
to~\cite{giovanni,polyDiagrams} for more details):
\begin{itemize}
  \item a polynomial is a two-players game, where moves of the first player are given by its
    ``coefficients'' and counter-moves of the opponent are given by its
    ``exponents'',
  \item a simulation between two such games is a witness for a kind of back-and-forth property
    between them.
\end{itemize}
This category has enough structure to model intuitionistic linear
logic~\cite{polyDiagrams}. The simplest way to define this structure is to use
representations of polynomial functors, called \emph{polynomial diagrams} as
the objects. The fact that such representations give rise to functors isn't
relevant! This paper gives a functorial counterpart: a model for
intuitionistic linear logic where formulas are interpreted by polynomial
\emph{functors}, without reference to their representations.

The difference between the two approaches is subtle. Differentiation of plain
polynomials over the real numbers provides a useful analogy. It can be defined
in two radically different ways:
\begin{enumerate}
  \item on \emph{representations} of polynomials with the formula:
    \[
      P = \sum_{k=0}^{n} a_k X^k
      \quad \mapsto \quad
      P' = \sum_{k=1}^{n} k a_{k} X^{k-1}\ ,
    \]

  \item on polynomials \emph{as functions} with the formula:
    \[
      F \quad \mapsto \quad
      F' = x \mapsto \lim_{\varepsilon \to 0} \frac{F(x+\varepsilon)-F(x)}{\varepsilon} \ .
    \]
\end{enumerate}
Both definitions are useful and neither is obviously reducible to the other.
The first one is easier to work with for concrete polynomials but the second
one applies to a larger class of functions.
The model of intuitionistic linear logic previously
defined~\cite{polyDiagrams} was ``intensional'': just like point~(1), it
used representations of polynomial functors.
%Just like the two definitions of differentiation of plain polynomials aren't
%obviously equivalent, the two versions of the model aren't, in any sense of
%the word, obviously equivalent.
The model described here is ``extensional'': just like point~(2), it
applies to arbitrary functors, even though it needs not be defined for non
polynomial ones.

More precisely, the category defined in~\cite{polyDiagrams} is equivalent
(Proposition~\ref{prop:PEequivPEFun}) to a subcategory of a category of
arbitrary functors with simulations, where the tensor and its adjoint are not
always defined.

\medbreak
We first (Section~\ref{section:prelim}) recall some notions about polynomial
functors. We then define the category of polynomial functors with simulations
and show that it is symmetric monoidal closed (Section~\ref{section:PE}). We
relate this structure to a generalized version of Day's convolution product on
presheaves in Section~\ref{sub:dayConvolution}. We finish by showing how the
additive and exponential structure from~\cite{polyDiagrams} can be
recovered.

%%%>>>2

\subsection*{Related Works} %%%<<<2

The starting point of this work is the characterization of strong natural
transformations between polynomial functors due to N.~Gambino and J.~Kock
\cite{polyMonads}.
However, our notion of morphism between polynomial functors is more general
than what appears in~\cite{notesJoachim,polyMonads}
or~\cite{cont,indexed-containers} (where polynomial functors are called
\emph{indexed containers}). In particular, there can be morphisms between
polynomial functors that do not share their domains and codomains.

Another inspiration is the model of ``predicate transformers''
from~\cite{denotPT}. This is a model for \emph{classical} linear logic where
formulas are interpreted by monotonic operators on subsets and proofs are
interpreted by ``simulations''. We show here that ``monotonic operator
on~$\Pow(I)$'' can be replaced by ``functor on~$\Sl\Set I$'' with the
following restrictions:
\begin{itemize}
  \item we give up classical logic;%\footnote{Classical logic relied on the
    %strictly decreasing operator~$X \mapsto I\setminus X$ which doesn't have any
    %corresponding interpretation as a contravariant functor on~$\Sl\Set I$.}
  \item we only consider \emph{polynomial} functors.
%  \item we restrict to \emph{strong} natural transformations.
\end{itemize}

%%%>>>2

%%%>>>1

\section{Preliminaries: Polynomial Functors}  %%%<<<1
\label{section:prelim}

Basic knowledge about locally cartesian closed categories and their internal
language (extensional dependent type theory) is assumed throughout the paper.
Refer to Appendix~\ref{app:LCCC} and~\ref{app:DTT} for the relevant
definitions and notation. The first half of~\cite{polyMonads} is of crucial
importance and we start by recalling some results, referring to the original
article for details.
From now on, $\C$ will always denote a locally cartesian closed category.

\subsection{Polynomials and Polynomial Functors} %%%<<<2

A polynomial is, in the usual sense, a function of several variables that can
be written as a sum of monomials, where each monomial is a product of a
(constant) coefficient and several variables. A polynomial functor is similar,
with the following differences
\begin{itemize}
  \item it acts on sets rather than numbers;
  \item sums and products are the corresponding set-theoretic operations;
  \item it may have arbitrarily many arguments: instead of a
    tuple~$(X_1,\dots,X_n)$ of variables, it acts on \emph{families}~$(X_i)_{i\in I}$, for
    a given set~$I$;
  \item the sum of monomial isn't necessarily finite and can be 
    indexed by any set;
  \item because~$C\times Y \iso \sum_{c\in C} Y$, we don't need
    constant coefficients in monomials;
  \item each monomial is an indexed product of variables~$X_i$.
\end{itemize}
A single polynomial functor is thus made up from the following data:
\begin{itemize}
  \item a set~$I$ indexing the variables,

  \item a set~$A$ indexing the sum of monomials,

  \item an~$A$-indexed family of sets~$(D_v)_{v\in A}$ where for each~$v\in
    A$, i.e., for each monomial appearing in the sum, the set~$D_v$ indexes
    the variables composing the monomial,

  \item for each~$v\in A$, a function~$D_v \to I$ giving the indices of the
    variables of the monomial. For example, if the monomial uses a single
    variable, this function is constant...
\end{itemize}
The corresponding functor is given by
\[
  \big(X_i\big)_{i\in I} \quad \mapsto \quad \sum_{v\in A} \ \prod_{u\in D_v} X_{n(u)}
  \ .
\]
Categorically speaking, a~$K$-indexed family of sets can be represented by a function
with codomain~$K$. If~$\gamma : S \to K$, the associated family~$\Gamma$ will be given
by~$\Gamma_k = \gamma^{-1}(k)$. A polynomial functor can thus be represented
with
\begin{itemize}
  \item a set~$I$ indexing the variables;
  \item a set~$A$ indexing the sum;
  \item a function~$d:D\to A$ giving an~$A$-indexed family of sets indexing
    the products of variables;
  \item a function~$n:D\to I$ giving, for each variable in each product, its
    index.
\end{itemize}
Written in full, the corresponding functor is
\[
  \big(X_i\big)_{i\in I} \quad \mapsto \quad \sum_{v\in A} \ \prod_{u\in
  d^{-1}(v)} X_{n(u)}
  \ .
\]
In order to compose such functors, we need to consider functors acting on
families of sets, and giving families of sets as a result. We can thus have
polynomial functors taking~$I$-indexed families and giving~$J$-indexed
families of sets. Such a functor is simply a~$J$-indexed family of functors
in the above sense and is thus of the form:
\begin{equation}\label{eqn:ext_in_set}
  \big(X_i\big)_{i\in I}
  \quad \mapsto \quad
  \Bigg(\sum_{v\in A_j} \ \prod_{u\in D_v} X_{n(u)}\Bigg)_{j\in J}
  \ .
\end{equation}
The only difference is that instead of having a set~$A$ indexing the sum, we
have a~$J$-indexed family of sets, each~$A_j$ indexing the sum of the~$j$
component of the functor. Categorically speaking, it amounts to replacing the
set~$A$ by a function~$\alpha : A \to J$ representing a~$J$-indexed family of
sets.
The following definition now makes sense in any category:
\begin{defi}
  If $I$ and $J$ are objects of~$\C$, a \emph{polynomial diagram} from~$I$
  to~$J$ is a diagram~$P$ in~$\C$ of the shape
  \[
    P :
%    \tikzset{external/remake next}
    \begin{tikzpicture}[baseline=(m-1-1.base)]
      \matrix (m) [matrix of math nodes, column sep=3em,text height=1ex, text
      depth=.25ex]
        { I & D & A & J \\ };
      \path[morphism]
        (m-1-2) edge node[above]{$n$} (m-1-1)
        (m-1-2) edge node[above]{$d$} (m-1-3)
        (m-1-3) edge node[above]{$a$} (m-1-4);
    \end{tikzpicture}
    \ .
  \]
  We write~$\Poly_\C[I,J]$ for the collection of polynomial diagrams from~$I$ to~$J$.
\end{defi}

\begin{defi}
  For each~$P\in\Poly_\C[I,J]$, there is an associated functor~$\Sem P$ from~$\Sl\C
  I$ to~$\Sl\C J$:
  \[
    \Sem P :
    \begin{tikzpicture}[baseline=(m-1-1.base)]
      \matrix (m) [matrix of math nodes, column sep=3em,text height=1ex, text
      depth=.25ex]
        { \Sl\C I & \Sl\C D & \Sl\C A & \Sl\C J \\ };
      \path[morphism]
        (m-1-1) edge node[above]{$\Delta_n$} (m-1-2)
        (m-1-2) edge node[above]{$\Pi_d$} (m-1-3)
        (m-1-3) edge node[above]{$\Sigma_a$} (m-1-4);
    \end{tikzpicture}
    \ .
  \]
  $\Sem P$ is called the \emph{extension} of~$P$ and~$P$ is called a
  \emph{representation} of~$\Sem P$. Any functor naturally isomorphic to
  some~$\Sem P$ is called a \emph{polynomial functor}. We
  write~$\PolyFun_\C[I,J]$ for the collection of such functors.
\end{defi}
In the locally cartesian closed category of sets and functions, the
operations~$\Delta_n$,~$\Pi_d$ and~$\Sigma_a$ are given by:
\begin{itemize}
  \item $\Delta_n : \big(X_i\big)_{i\in I} \mapsto \big(X_{n(d)}\big)_{d\in D}$

  \item $\Pi_d : \big(Y_u\big)_{u\in D} \mapsto \big(\prod_{u\in d^{-1}(a)} Y_u\big)_{a\in A}$

  \item $\Sigma_a : \big(Z_v\big)_{v\in A} \mapsto \big(\sum_{v\in a^{-1}(j)}
    Z_v\big)_{j\in J}$
\end{itemize}
and the extension of a polynomial functor corresponds exactly to the
formula~$(\ref{eqn:ext_in_set})$.
We have
\begin{prop} \label{prop:PFComposition}
  The composition of two polynomial functors is a polynomial functor.
\end{prop}

\begin{proof}[Sketch of proof]
  The composition~$Q\circ P$ of~$P\in\Poly_\C[I,J]$
  and~$Q\in\Poly_\C[J,K]$ uses no less than four pullbacks:
  \begin{equation}\label{diag:composition}
    Q\circ P \quad \eqdef \quad
%    \tikzset{external/remake next}
    \begin{tikzpicture}[baseline=(m-2-3.base)]
      \matrix (m) [matrix of math nodes, column sep={2.5em,between origins},
      row sep={2.5em,between origins},text depth=.25ex, text height=1.5ex]
        {   &   &    &\bullet&    &\cdotp&   &\bullet&   \\
            &   &\cdotp&   &\cdotp&      &   &      &    \\
            & D &      & A &      & E    &   & B    &    \\
          I &   &      &   & J    &      &   &      & K  \\ };
      \draw ($(m-1-6)!.5!(m-3-8)$) node[auto]{$\scriptstyle(i)$};
      \path[morphism]
        (m-1-4) edge (m-1-6)
        (m-1-6) edge (m-1-8)
        (m-1-6) edge (m-3-6)
        (m-1-4) edge (m-2-3)
        (m-1-6) edge node[over]{$\epsilon$} (m-2-5)
        (m-1-8) edge (m-3-8)
        (m-2-3) edge (m-2-5)
        (m-2-3) edge (m-3-2)
        (m-2-5) edge (m-3-4)
        (m-2-5) edge (m-3-6)
        (m-3-2) edge (m-3-4)
        (m-3-2) edge (m-4-1)
        (m-3-4) edge (m-4-5)
        (m-3-6) edge (m-3-8)
        (m-3-6) edge (m-4-5)
        (m-3-8) edge (m-4-9);
      \pullback{m-3-4}{m-2-5}{m-3-6}{draw,-};
      \pullback{m-3-2}{m-2-3}{m-2-5}{draw,-};
      \pullback[.5cm]{m-3-6}{m-1-6}{m-1-8}{draw,-};
      \pullback{m-2-3}{m-1-4}{m-1-6}{draw,-};
    \end{tikzpicture}
  \end{equation}
  where square~$(i)$ is a distributivity square (diagram~$(\ref{diag:distr})$
  in Appendix~\ref{app:LCCC}).
  One can then show that
  \[
    \Sem Q\circ\Sem P \quad \iso \quad \Sem{Q\circ P}
  \]
  by a sequence of Beck-Chevalley and distributivity isomorphisms (see
  Appendix~\ref{app:LCCC}).
%%%
\end{proof}

The identity functor from~$\Sl\C I$ to itself is trivially the extension of the
polynomial
\[
%  \tikzset{external/remake next}
  \begin{tikzpicture}[baseline=(m-1-1.base)]
    \matrix (m) [matrix of math nodes, column sep=3em,text height=1ex, text
    depth=.25ex]
      { I & I & I & I \\ };
    \path[morphism]
      (m-1-2) edge node[above]{$1$} (m-1-1)
      (m-1-2) edge node[above]{$1$} (m-1-3)
      (m-1-3) edge node[above]{$1$} (m-1-4);
  \end{tikzpicture}
  \ .
\]
We obtain a \emph{bicategory}~$\Poly_\C$ where objects are objects of~$\C$ and
morphisms are polynomial diagrams; and a \emph{category}~$\PolyFun_\C$ where
objects are slice categories and
morphisms are polynomial functors.

%Characterizing polynomial functors without
%referring to polynomial diagrams is difficult in general but we have:
%\begin{prop} \label{prop:PFConnectedLimits}
%  A functor $P:\Sl\Set I \to \Sl\Set J$ is polynomial iff it commutes with all
%  connected limits.
%\end{prop}
%Note that the left-to-right implication holds not just for~$\Set$ but for
%every locally cartesian closed category~$\C$.
%%%>>>2

\subsection{Strong Natural Transformations} %%%<<<2

Each polynomial functor~$P$ from~$\Sl\C I$ to~$\Sl\C J$ is equipped
with a \emph{strength}:
\[
  \tau_{A,x} \quad :\quad
  A \odot P(x) \to P(A\odot x)
  \ ,
\]
naturally in~$A\in\C$ and~$x\in\Sl\C I$, where~$A \odot x \eqdef
\Sigma_x\Delta_x\Delta_I(A)$.
%Because~$\C$ is cartesian closed, this is equivalent to saying that~$P$ is
%compatible with the~$\C$-enrichment described in the appendix on
%page~\pageref{sub:enrichment}.
A natural transformation between two strong functors is itself \emph{strong}
when it is compatible with their strengths.
%Because~$\C$ is cartesian closed, this is also equivalent to saying that the
%transformation is enriched. Such a natural transformation is called a
%\emph{strong natural transformation}.
This gives the category~$\PolyFun_\C$ a 2-category structure: objects are
slice categories, morphisms are polynomial functors and 2-cells are strong
natural transformations. Note that when the base category~$\C$ is~$\Set$,
all endofunctors are strong, and so are all natural transformations.
%\TODO{check / reference ??? cf ``Comonadic Notions of Computation
%'', T. Uustalu & V. Vene}

The following proposition is crucial as it allows to represent
strong natural transformations between polynomial functors by diagrams inside
the category~$\C$~\cite{polyMonads}.

\begin{prop}\label{prop:strongNat}
  Every strong natural transformation~$\rho : P_1 \Rightarrow P_2$ between polynomial
  functors can be uniquely represented (up-to a choice of pullbacks) by a
  diagram
  \begin{equation}\label{diag:strongNat}
%    \tikzset{external/remake next}
    \begin{tikzpicture}[baseline=(m-2-3).base]
      \matrix (m) [matrix of math nodes, column sep={3.5em,between origins}, row
      sep={3.5em,between origins},text height=1.5ex, text depth=.25ex]
      { P_1 : & I & D_1 & A_1 & J \\
              &   & X   & A_1 &   \\
        P_2 : & I & D_2 & A_2 & J  \\};
      \path[morphism]
        (m-1-3) edge node[above]{$n_1$} (m-1-2)
        (m-1-3) edge node[above]{$d_1$} (m-1-4)
        (m-1-4) edge node[above]{$a_1$} (m-1-5)
        (m-3-3) edge node[below]{$n_2$} (m-3-2)
        (m-3-3) edge node[below]{$d_2$} (m-3-4)
        (m-3-4) edge node[below]{$a_2$} (m-3-5);
      \draw[double equal sign distance,double]
        (m-1-2) -- (m-3-2)
        (m-1-5) -- (m-3-5)
        (m-1-4) -- (m-2-4);
      \path[morphism,densely dashed]
        (m-2-3) edge node[above]{$f$} (m-2-4)
        (m-2-3) edge node[left]{$\beta$} (m-1-3)
        (m-2-3) edge node[left]{$g$} (m-3-3)
        (m-2-4) edge node[right]{$\alpha$} (m-3-4);
      \pullback[.5cm]{m-3-3}{m-2-3}{m-2-4}{draw,-};
    \end{tikzpicture}
    \ .
  \end{equation}
The strong natural transformation associated to such a diagram is defined as
\[\begin{array}{ccll}
    \Sigma_{a_1}\Pi_{d_1}\Delta_{n_1}
  &\implies\Big.&
    \Sigma_{a_1}\Pi_{d_1}\Pi_\beta\Delta_\beta\Delta_{n_1}
    &\qquad\hbox{\scriptsize(unit of~$\Delta_\beta\dashv\Pi_\beta$)}\\
  &\iso\Big.&
    \Sigma_{a_1}\Pi_f\Delta_g\Delta_{n_2}
    &\qquad\hbox{\scriptsize($n_1\beta=n_2 g$ and $d_1 \beta=f$)}\\
  &\iso\Big.&
    \Sigma_{a_1}\Delta_\alpha\Pi_{d_2}\Delta_{n_2}
    &\qquad\hbox{\scriptsize(Beck-Chevalley isomorphism)}\\
  &\iso\Big.&
    \Sigma_{a_2}\Sigma_\alpha\Delta_\alpha\Pi_{d_2}\Delta_{n_2}
    &\qquad\hbox{\scriptsize($a=b\alpha$)}\\
  &\implies\Big.&
    \Sigma_{a_2}\Pi_{d_2}\Delta_{n_2}
    &\qquad\hbox{\scriptsize(counit of~$\Sigma_\alpha\dashv\Delta_\alpha$)}
  \ .
\end{array}\]
\end{prop}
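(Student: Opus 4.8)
The plan is to recognise each polynomial functor as a coproduct of representable functors and then apply an enriched Yoneda lemma, the enrichment being exactly the strength. First I would record the representability of individual monomials: working in the internal language of~$\C$, for a generic shape~$v\in A_1$ the monomial functor~$\Pi_{d_1}\Delta_{n_1}$ is represented over~$\Sl\C I$ by the object $R_v\eqdef\big(d_1^{-1}(v)\xrightarrow{\,n_1\,}I\big)$. This comes from the adjunction chains $\Sigma_{n_1}\dashv\Delta_{n_1}$ and $\Delta_{d_1}\dashv\Pi_{d_1}$, which yield $\Hom_{\Sl\C I}\big(R_v,X\big)\iso\Pi_{d_1}\Delta_{n_1}(X)$ (evaluated at~$v$); over~$\Set$ this is the elementary identity $\prod_{u\in d_1^{-1}(v)}X_{n_1(u)}\iso\prod_i X_i^{n_1^{-1}(i)\cap d_1^{-1}(v)}$. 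Consequently, fibrewise over~$J$, the functor~$\Sem{P_1}$ sends~$X$ to $\sum_{v\in a_1^{-1}(j)}\Hom_{\Sl\C I}(R_v,X)$, i.e.\ it is the coproduct~$\Sigma_{a_1}$ of these representables.

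Next I would invoke Yoneda. Since~$\Sem{P_1}$ is a coproduct of (enriched) representables and a strong natural transformation is precisely a $\C$-enriched one, the universal property of the coproduct together with the enriched Yoneda lemma gives a bijection between strong natural transformations $\rho:\Sem{P_1}\Rightarrow\Sem{P_2}$ and families assigning to each generic shape~$v\in A_1$ an element of~$\Sem{P_2}$ at the representing object~$R_v$, subject to lying over the index~$a_1(v)\in J$. This $J$-compatibility is forced because the transformation must respect the $J$-grading of the two functors into~$\Sl\C J$. Carrying out all reasoning in the internal language means that over~$\C=\Set$ the argument specialises to the elementary computation with sums and products.

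It then remains to unwind such an element. Writing $\Sem{P_2}(R_v)=\Sigma_{a_2}\Pi_{d_2}\Delta_{n_2}(R_v)$ and decomposing over~$J$, an element consists of a choice of target shape $\alpha(v)\in A_2$ with $a_2\alpha(v)=a_1(v)$, together with, for every position $u'\in d_2^{-1}(\alpha(v))$, a position of the source monomial of matching index, i.e.\ an $\beta(v,u')\in d_1^{-1}(v)$ with $n_1\beta(v,u')=n_2(u')$. Packaging the pairs~$(v,u')$ into the pullback $X\iso A_1\times_{A_2}D_2$ displayed in~$(\ref{diag:strongNat})$, with $f$ and $g$ its two projections, this is exactly a map $\alpha:A_1\to A_2$ over~$J$ together with a map $\beta:X\to D_1$ satisfying $d_1\beta=f$ and $n_1\beta=n_2 g$. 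The Yoneda correspondence is a bijection, which gives uniqueness, and its sole non-canonical ingredient is the choice of the pullback~$X$, whence the proviso ``up-to a choice of pullbacks''. One then checks that the transformation reconstructed from~$(\alpha,\beta)$ coincides with the displayed composite of adjunction units and counits and Beck--Chevalley isomorphisms.

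The main obstacle is the enrichment bookkeeping in a general locally cartesian closed~$\C$: one must verify carefully that strong natural transformations coincide with $\C$-enriched natural transformations, so that the enriched Yoneda lemma legitimately applies to the representables~$R_v$. Over~$\Set$ this is vacuous, since every natural transformation is strong, but in general it is the crux of the argument. A secondary verification, routine but needed for the \emph{uniqueness} clause, is that the passage from~$\rho$ to~$(\alpha,\beta)$ and back is inverse to the assignment spelt out immediately after the statement.
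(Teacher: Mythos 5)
Your argument is correct and is essentially the proof this paper relies on: the paper states Proposition~\ref{prop:strongNat} without proof, deferring to Gambino and Kock, whose argument is precisely your route --- polynomial functors are fibrewise (familially) representable by the objects $R_v=\big(d_1^{-1}(v)\to I\big)$, strong natural transformations are identified with $\C$-enriched ones, and the pair $(\alpha,\beta)$ with the pullback $X\iso A_1\times_{A_2}D_2$ then falls out of an (enriched) Yoneda/generic-element computation, the choice of $X$ being the only non-canonical ingredient. The point you flag as the crux --- that strength coincides with $\C$-enrichment over a general locally cartesian closed $\C$, vacuous over $\Set$ --- is indeed exactly where the cited source does its real work, so your assessment of the proof's structure is accurate.
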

A corollary to Proposition~\ref{prop:strongNat} is
\begin{cor} \label{cor:isoBetweenPolynomials}
  If $\Sem{P_1} \iso \Sem{P_2}$ in~$\PolyFun_\C$, then $P_1$ and $P_2$ are
  related by
  \[
%    \tikzset{external/remake next}
    \begin{tikzpicture}[baseline=(m-2-2).base]
      \matrix (m) [matrix of math nodes, column sep={3.5em,between origins}, row
      sep={3.5em,between origins},,text height=1.5ex, text depth=.25ex]
        { P_1: & I & D_1 & A_1 & J \\
          P_2: & I & D_2 & A_2 & J \\};
      \path[morphism]
        (m-1-3) edge node[above]{$n_1$} (m-1-2)
        (m-1-3) edge node[above]{$d_1$} (m-1-4)
        (m-1-4) edge node[above]{$a_1$} (m-1-5)
        (m-2-3) edge node[below]{$n_2$} (m-2-2)
        (m-2-3) edge node[below]{$d_2$} (m-2-4)
        (m-2-4) edge node[below]{$a_2$} (m-2-5);
    \path[morphism]
        (m-2-3) edge node[above,sloped]{$\thicksim$} (m-1-3)
        (m-1-4) edge node[above,sloped]{$\thicksim$} (m-2-4);
    \draw[double equal sign distance,double]
      (m-1-2) -- (m-2-2)
      (m-1-5) -- (m-2-5);
    \end{tikzpicture}
    \ .
  \]
\end{cor}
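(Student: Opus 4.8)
The corollary states: if two polynomial diagrams $P_1$ and $P_2$ have isomorphic extensions ($\Sem{P_1} \iso \Sem{P_2}$), then there's a diagram relating them with isomorphisms $A_1 \iso A_2$ (actually $A_2 \to A_1$, shown as $\thicksim$) and $D_1 \iso D_2$ (shown $D_1 \to D_2$), commuting with the structure maps $n, d, a$ and fixing $I$ and $J$.

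**How to prove this as a corollary of Prop strongNat:**

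An isomorphism $\Sem{P_1} \iso \Sem{P_2}$ consists of a pair of strong natural transformations $\rho: \Sem{P_1} \Rightarrow \Sem{P_2}$ and $\sigma: \Sem{P_2} \Rightarrow \Sem{P_1}$ that are mutually inverse.

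By Prop strongNat, each is represented by a diagram of the form (diag:strongNat).

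**Key insight:** An isomorphism should correspond to the diagram (strongNat) where the pullback object $X$ equals $D_2$ (so $f = \text{id}$ essentially), and the maps $\alpha, \beta$ are isomorphisms.

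Let me think about the structure of the representing diagram for $\rho$:
- $\beta: X \to D_1$, $g: X \to D_2$, $f: X \to A_1$, $\alpha: A_1 \to A_2$
- $X$ is pullback of $d_2$ along $\alpha$
- Constraints: $n_1 \beta = n_2 g$, $d_1 \beta = f$, $a_1 = a_2 \alpha$

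**The plan:**

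First I would recall that an isomorphism $\Sem{P_1} \iso \Sem{P_2}$ in $\PolyFun_\C$ amounts to a pair of mutually-inverse strong natural transformations $\rho : \Sem{P_1} \Rightarrow \Sem{P_2}$ and $\sigma : \Sem{P_2} \Rightarrow \Sem{P_1}$. By Proposition~\ref{prop:strongNat}, each of these is represented by a diagram of shape~$(\ref{diag:strongNat})$; denote the representing data of~$\rho$ by~$(X,\beta,g,f,\alpha)$, so that~$\alpha : A_1 \to A_2$, the maps satisfy~$n_1\beta = n_2 g$, $d_1\beta = f$, $a_1 = a_2\alpha$, and~$X$ is the pullback of~$d_2$ along~$\alpha$. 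Similarly, write~$(X',\beta',g',f',\alpha')$ for the data representing~$\sigma$, with~$\alpha' : A_2 \to A_1$.

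\begin{proof}
The core of the argument is to show that, because~$\rho$ and~$\sigma$ are mutually inverse, the components~$\alpha$ and~$\alpha'$ must be mutually inverse isomorphisms between~$A_1$ and~$A_2$, and that the induced maps on the~$D$'s are likewise isomorphisms. The plan is to track what invertibility of~$\rho$ and~$\sigma$ forces on the representing diagrams. First I would invoke the uniqueness clause of Proposition~\ref{prop:strongNat}: composing~$\sigma$ after~$\rho$ yields a strong natural endo-transformation of~$\Sem{P_1}$ equal to the identity, and by uniqueness its representing diagram (up to the chosen pullbacks) must be the canonical one representing~$\id_{\Sem{P_1}}$, namely the diagram with~$X = D_1$ and all the vertical maps identities. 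Concretely, the composite of the two representing diagrams is computed by gluing them along~$P_2$ and forming the evident pullback, and setting this composite equal to the identity diagram pins down the relations between~$(\alpha,\beta,f,g)$ and~$(\alpha',\beta',f',g')$.

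Carrying this out in both orders ($\sigma\rho = \id$ and $\rho\sigma = \id$), I would deduce that~$\alpha'\alpha = \id_{A_1}$ and~$\alpha\alpha' = \id_{A_2}$, so that~$\alpha : A_1 \to A_2$ is an isomorphism commuting with the~$a$'s via~$a_1 = a_2\alpha$ (and with~$J$). This is exactly the lower horizontal isomorphism~$A_2 \xrightarrow{\thicksim} A_1$ of the desired square (reading~$\alpha^{-1}$). Once~$\alpha$ is known to be an isomorphism, its pullback~$X$ along~$d_2$ is isomorphic to~$D_2$ via~$g$; combined with~$d_1\beta = f$ and the pullback square, this forces~$\beta : X \to D_1$ to be an isomorphism as well, yielding the upper horizontal isomorphism~$D_1 \xrightarrow{\thicksim} D_2$. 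The compatibility conditions~$n_1\beta = n_2 g$ and~$d_1\beta = f = $ (the pullback leg into~$A_1$) then translate directly into the commutativities asserted in the statement: the two~$n$'s agree through the~$D$-isomorphism over~$I$, and the two~$d$'s agree through the~$D$- and~$A$-isomorphisms.

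The main obstacle I anticipate is the bookkeeping of the \emph{four} pullbacks involved in composing two diagrams of shape~$(\ref{diag:composition})$--$(\ref{diag:strongNat})$ and then matching the result against the identity diagram; in particular one must use the uniqueness in Proposition~\ref{prop:strongNat} carefully, since the representing data is only determined up to a choice of pullbacks, so the conclusion is naturally an isomorphism rather than an on-the-nose equality. Rather than compute the composite explicitly, I would argue more economically: invertibility of the natural transformation at each object of~$\Sl\C I$ forces the underlying fibrewise data to be invertible, and then appeal once to uniqueness to upgrade these pointwise isomorphisms to the single commuting diagram displayed. The fact that~$I$ and~$J$ are held fixed throughout (the double bars in~$(\ref{diag:strongNat})$) guarantees the resulting isomorphisms sit over~$I$ and~$J$ exactly as drawn.
\end{proof}
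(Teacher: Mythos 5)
Your overall route is the intended one: the paper gives no separate argument for this corollary precisely because it is meant to follow by applying Proposition~\ref{prop:strongNat} to the isomorphism and to its (automatically strong) inverse, composing the two representing diagrams, and matching each composite against the representation of the identity transformation; uniqueness then forces $\alpha'\alpha = \id_{A_1}$ and $\alpha\alpha' = \id_{A_2}$, giving the $A$-level isomorphism, and correspondingly at the $D$ level. Your first paragraph sets this up correctly, including the caveat that uniqueness holds only up to the choice of pullbacks.

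However, one explicit deduction in your second paragraph is wrong as stated. You claim that once $\alpha$ is an isomorphism, the isomorphism $g : X \to D_2$ ``combined with $d_1\beta = f$ and the pullback square forces $\beta : X \to D_1$ to be an isomorphism as well.'' It does not: take $\alpha = \id_{A}$ (so that $X = D_2$, $g = \id$, $f = d_2$) and let $\beta : D_2 \to D_1$ be any non-invertible map over $A$ and over $I$; this is a perfectly good strong natural transformation with invertible $\alpha$ and $g$ but non-invertible $\beta$ (and non-invertible $\rho$). Invertibility of $\beta$ must instead be extracted from the $D$-level identities that your own composite analysis produces: writing the composite of the representing data, $\sigma\rho = \id_{\Sem{P_1}}$ gives (up to the canonical pullback isomorphisms) $\beta \circ \Delta_\alpha(\beta') = \id_{D_1}$, while $\rho\sigma = \id_{\Sem{P_2}}$ gives $\beta' \circ \Delta_{\alpha'}(\beta) = \id_{D_2}$; applying $\Delta_\alpha$ to the latter shows $\Delta_\alpha(\beta')$ is also split mono, hence an isomorphism, so $\beta$ is its two-sided inverse and is an isomorphism. (The ``pointwise'' fallback you sketch at the end is also viable --- for instance the component of $\rho$ at the terminal object $\id_I$ of $\Sl\C I$ is literally $\alpha$, and evaluating at suitable generic families recovers $\beta$ --- but over a general locally cartesian closed $\C$, rather than $\Set$, this requires the internal language, not a fibrewise set-theoretic argument.) A final cosmetic point: in the corollary's diagram the $D$-level isomorphism runs $D_2 \to D_1$, i.e.\ it is your $\beta \circ g^{-1}$; harmless since these are isomorphisms, but worth matching the stated direction.
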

This is particularly important as it means that instead of working on
polynomial functors up-to strong natural isomorphism, we can work on
their representing polynomials.

%%%>>>2

\subsection{Spans and Polynomial Functors} %%%<<<2

There are two ways to lift a span to a polynomial:
\begin{defi}
  Given a span $R = \langle f,g\rangle :I\leftarrow X\to J$, we define two
  polynomials in~$\Poly_\C[I,J]$:
  \[
    \AU{R}\quad\eqdef\quad
%    \tikzset{external/remake next}
    \begin{tikzpicture}[baseline=(m-1-1.base)]
      \matrix (m) [matrix of math nodes, column sep=3em,text height=1ex, text depth=.25ex]
        { I & X & X & J \\ };
      \path[morphism]
        (m-1-2) edge node[above]{$f$} (m-1-1)
        (m-1-2) edge node[above]{$1$} (m-1-3)
        (m-1-3) edge node[above]{$g$} (m-1-4);
    \end{tikzpicture}
    \quad
  \]
  and
  \[
    \DU{R}\quad\eqdef\quad
%    \tikzset{external/remake next}
    \begin{tikzpicture}[baseline=(m-1-1.base)]
      \matrix (m) [matrix of math nodes, column sep=3em,text height=1ex, text depth=.25ex]
        { I & X & J & J \\ };
      \path[morphism]
        (m-1-2) edge node[above]{$f$} (m-1-1)
        (m-1-2) edge node[above]{$g$} (m-1-3)
        (m-1-3) edge node[above]{$1$} (m-1-4);
    \end{tikzpicture}
    \quad .
  \]
  Any functor of the form~$\Sem{\AU{R}}$ is called a \emph{linear} polynomial
  functor.
\end{defi}
The terminology ``linear'' comes from the fact that~$\Sem{\AU{R}}$ commutes
with arbitrary colimits. More precisely, we have
\begin{lem}\label{lem:AULeftAdjointLeft}
  If one writes~$R^\sim$ for the span~$R$ with its ``legs''
  reversed, we have an adjunction
  \[
    \Sem{\,\AU{R}\,} \quad\dashv\quad \Sem{\,\DU{R^\sim}\,}
    \quad.
  \]
\end{lem}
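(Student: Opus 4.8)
The plan is to compute both extensions explicitly and then recognize the claimed adjunction as a composite of the standard adjunctions available in any locally cartesian closed category. Writing $R=\langle f,g\rangle$ with $f:X\to I$ and $g:X\to J$, the polynomial $\AU{R}$ is $I\xleftarrow{f}X\xrightarrow{1}X\xrightarrow{g}J$, so by definition of the extension its functor is $\Sem{\AU{R}}=\Sigma_g\Pi_1\Delta_f$. Since $\Pi_1\iso\id$ (the product over an identity-indexed family is trivial), this reduces to $\Sem{\AU{R}}\iso\Sigma_g\Delta_f:\Sl\C I\to\Sl\C J$. The reversed span is $R^\sim=\langle g,f\rangle:J\leftarrow X\to I$, so $\DU{R^\sim}$ is the polynomial $J\xleftarrow{g}X\xrightarrow{f}I\xrightarrow{1}I$, whose extension is $\Sem{\DU{R^\sim}}=\Sigma_1\Pi_f\Delta_g\iso\Pi_f\Delta_g:\Sl\C J\to\Sl\C I$. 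Note that this functor has the correct domain and codomain to serve as a right adjoint to $\Sem{\AU{R}}$.

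Next I would invoke the two fundamental adjunctions attached to each morphism $h$ of a locally cartesian closed category, namely $\Sigma_h\dashv\Delta_h\dashv\Pi_h$ (recalled in Appendix~\ref{app:LCCC}). In particular $\Delta_f\dashv\Pi_f$ and $\Sigma_g\dashv\Delta_g$. Finally I would conclude using the fact that adjunctions compose contravariantly: taking $F=\Delta_f$ with $F\dashv\Pi_f$ and $F'=\Sigma_g$ with $F'\dashv\Delta_g$, one obtains $F'F\dashv\Pi_f\Delta_g$, that is $\Sigma_g\Delta_f\dashv\Pi_f\Delta_g$. Combined with the isomorphisms computed above, this is exactly $\Sem{\AU{R}}\dashv\Sem{\DU{R^\sim}}$.

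There is essentially no genuine obstacle here beyond bookkeeping; the content of the lemma is that the identity legs in $\AU{R}$ and $\DU{R^\sim}$ collapse $\Pi_1$ and $\Sigma_1$ to identities, exposing a composite of the canonical base-change adjunctions. The only points deserving care are verifying that reversing the legs of the span produces precisely the functor of the opposite variance needed on the right, and observing that the unit and counit of the resulting adjunction $\Sem{\AU{R}}\dashv\Sem{\DU{R^\sim}}$ are obtained from those of the two composed adjunctions only after transport along the canonical isomorphisms $\Pi_1\iso\id$ and $\Sigma_1\iso\id$. These coherence isomorphisms are harmless, so the triangle identities for the composite follow immediately from those of the constituent adjunctions.
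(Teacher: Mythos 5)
Your proof is correct and takes essentially the same route as the paper's: identify $\Sem{\AU{R}}\iso\Sigma_g\Delta_f$ and $\Sem{\DU{R^\sim}}\iso\Pi_f\Delta_g$, then compose the base-change adjunctions $\Delta_f\dashv\Pi_f$ and $\Sigma_g\dashv\Delta_g$. The only difference is that you spell out the collapsing of $\Pi_1$ and $\Sigma_1$ to identities and the transport of units/counits along these canonical isomorphisms, details the paper leaves implicit.
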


\begin{proof}
  If $R$ is~$\langle f,g\rangle$:
  \[
%    \tikzset{external/remake next}
    \begin{tikzpicture}[baseline=(m-2-3.base)]
      \matrix (m) [matrix of math nodes, column sep=1.5em, row sep=1.5em,text
      height=1ex, text depth=.25ex]
        {   & X &   \\
          I &   & J \\ };
      \path[morphism]
        (m-1-2) edge node[above left]{$f$} (m-2-1)
        (m-1-2) edge node[above right]{$g$} (m-2-3);
    \end{tikzpicture}
    \ ,
  \]
  then the extension of $\AU{R}$ is~$\Delta_f\,\Sigma_g$ and the extension
  of~$\DU{R^\sim}$ is~$\Delta_g\,\Pi_f$. The result follows from the two
  adjunctions~$\Delta_f\dashv \Pi_f$ and~$\Sigma_g\dashv \Delta_g$.
%%%
\end{proof}
Composition of spans via pullbacks and composition of polynomials are compatible:
\begin{lem}\label{lem:compositionSpans}
  The operations~$\AU{\BLANK}$ and~$\DU{\BLANK}$ from~$\Span_\C$ to~$\Poly_\C$
  are functorial, in a ``bicategorical'' sense.
\end{lem}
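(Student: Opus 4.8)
The plan is to exhibit $\AU{\BLANK}$ and $\DU{\BLANK}$ as pseudofunctors (homomorphisms of bicategories), identity on objects, and to produce the required structural isomorphisms by transporting Beck--Chevalley isomorphisms through the extension $\Sem{\BLANK}$. Recall from the computation preceding Lemma~\ref{lem:AULeftAdjointLeft} that for a span $R=\langle f,g\rangle:I\leftarrow X\to J$ the middle leg of each defining diagram is an identity, so $\Pi_{1}$ (resp.\ $\Sigma_{1}$) collapses and one has $\Sem{\AU R}=\Sigma_g\Delta_f$ and $\Sem{\DU R}=\Pi_g\Delta_f$.

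First I would treat composition. Given a second span $S=\langle h,k\rangle:J\leftarrow Y\to K$, the composite $S\circ R$ in $\Span_\C$ is $\langle fp,kq\rangle$, where $p,q$ are the projections of the pullback $X\times_J Y$ (so $gp=hq$). Using Proposition~\ref{prop:PFComposition} to identify polynomial composition with composition of extensions, I compute
\[
  \Sem{\AU S\circ\AU R}
  \;\iso\; \Sigma_k\Delta_h\Sigma_g\Delta_f
  \;\iso\; \Sigma_k\Sigma_q\Delta_p\Delta_f
  \;=\; \Sigma_{kq}\Delta_{fp}
  \;=\; \Sem{\AU{S\circ R}}\,,
\]
where the middle step is the Beck--Chevalley isomorphism $\Delta_h\Sigma_g\iso\Sigma_q\Delta_p$ attached to the pullback square defining $X\times_J Y$, and the last equalities use the (pseudo)functoriality $\Sigma_k\Sigma_q=\Sigma_{kq}$, $\Delta_p\Delta_f=\Delta_{fp}$. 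The dual computation, using $\Delta_h\Pi_g\iso\Pi_q\Delta_p$, gives $\Sem{\DU S\circ\DU R}\iso\Pi_{kq}\Delta_{fp}=\Sem{\DU{S\circ R}}$. Corollary~\ref{cor:isoBetweenPolynomials} then lifts each of these isomorphisms of extensions to an isomorphism of the representing polynomial diagrams, and these are the structural comparison $2$-cells $\AU S\circ\AU R\iso\AU{S\circ R}$ (resp.\ for $\DU{\BLANK}$).

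Next I would supply the remaining data. The identity span $\id_I=\langle 1_I,1_I\rangle$ is sent by both operations to the identity polynomial (all legs identities), so units are preserved on the nose. For $2$-cells, a cell $\phi:R\Rightarrow R'$ is a map $\phi:X\to X'$ with $f'\phi=f$ and $g'\phi=g$; writing $\Delta_f=\Delta_\phi\Delta_{f'}$ and $\Sigma_g=\Sigma_{g'}\Sigma_\phi$ and using the counit $\Sigma_\phi\Delta_\phi\Rightarrow\id$ yields $\Sem{\AU R}\Rightarrow\Sem{\AU{R'}}$, so $\AU{\BLANK}$ is \emph{covariant} on $2$-cells; dually the unit $\id\Rightarrow\Pi_\phi\Delta_\phi$ yields $\Sem{\DU{R'}}\Rightarrow\Sem{\DU R}$, so $\DU{\BLANK}$ is \emph{contravariant} on $2$-cells (a pseudofunctor $\Span_\C^{\mathrm{co}}\to\Poly_\C$), the variance reflecting the adjunction of Lemma~\ref{lem:AULeftAdjointLeft}. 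Functoriality of these assignments and naturality of the comparison isomorphisms in $R$ and $S$ follow from naturality of Beck--Chevalley.

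The main obstacle is not any single isomorphism but the bookkeeping needed for the coherence axioms of a pseudofunctor: the associativity (pentagon) and unit (triangle) conditions relating the comparison cells above to the associators and unitors of $\Span_\C$ and $\Poly_\C$. Since both bicategories draw their associators from universal properties of pullbacks, I expect these to reduce to the standard coherence of iterated Beck--Chevalley and distributivity isomorphisms --- the same package invoked in the sketch of Proposition~\ref{prop:PFComposition} --- so the real work is to check that the canonical pullback choices on the two sides agree, rather than to prove anything genuinely new.
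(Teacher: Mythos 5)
The paper states Lemma~\ref{lem:compositionSpans} without any proof, so there is no in-paper argument to compare against; judged on its own terms, your proposal is correct, and it uses precisely the toolkit the paper sets up: the collapsed extensions $\Sem{\AU{R}}\iso\Sigma_g\Delta_f$ and $\Sem{\DU{R}}\iso\Pi_g\Delta_f$, the compatibility of polynomial composition with composition of extensions (Proposition~\ref{prop:PFComposition}), the Beck--Chevalley isomorphisms $\Delta_h\Sigma_g\iso\Sigma_q\Delta_p$ and $\Delta_h\Pi_g\iso\Pi_q\Delta_p$ for the pullback defining $S\circ R$, and Corollary~\ref{cor:isoBetweenPolynomials} to lift isomorphisms of extensions back to the representing diagrams. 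Your variance observation --- $\AU{\BLANK}$ covariant on $2$-cells via the counit of $\Sigma_\phi\dashv\Delta_\phi$, $\DU{\BLANK}$ contravariant via the unit of $\Delta_\phi\dashv\Pi_\phi$, i.e., a pseudofunctor out of $\Span_\C^{\mathrm{co}}$ --- is a correct and genuine sharpening of the bare statement.

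Two remarks. First, for $\AU{\BLANK}$ there is an even shorter route that avoids extensions and Corollary~\ref{cor:isoBetweenPolynomials} altogether: instantiating the composition diagram~$(\ref{diag:composition})$ with both middle legs equal to identities makes the distributivity square degenerate, and the composite polynomial is literally $I\xleftarrow{fp}X\times_J Y\xrightarrow{1}X\times_J Y\xrightarrow{kq}K$, i.e., $\AU{S\circ R}$ up to the chosen pullback; the comparison cell is then essentially an identity, and the pseudofunctor coherence for $\AU{\BLANK}$ reduces to the coherence of pullback choices in $\Span_\C$ itself. For $\DU{\BLANK}$ no such collapse occurs (a genuine $\Pi$ is involved), so your extension-level Beck--Chevalley computation is the natural argument there. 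Second, a small caution where you invoke Corollary~\ref{cor:isoBetweenPolynomials}: it applies to isomorphisms in $\PolyFun_\C$, i.e., \emph{strong} natural isomorphisms; the Beck--Chevalley and adjunction cells you use are the canonical strong ones, so the step is sound, but for general $\C$ (where not every natural transformation is strong, unlike $\Set$) this deserves an explicit word. Your closing assessment is accurate: the only content you leave unchecked is the pentagon/triangle bookkeeping, which indeed reduces to standard coherence of iterated Beck--Chevalley isomorphisms, consistent with the paper treating the lemma as routine.
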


\section{Symmetric Monoidal Closed Structure}  %%%<<<1
\label{section:PE}

\subsection{SMCC Structure for Polynomial \emph{Diagrams}} %%%<<<2

We start by recalling the main definition and result from~\cite{polyDiagrams}.
\goodbreak
\begin{defi}
  The category~$\PDSim_\C$ has:
  \begin{itemize}
    \item ``endo'' polynomial diagrams~$I\leftarrow D \rightarrow A \rightarrow I$ as objects
    \item equivalence classes of ``simulation diagrams'' as morphisms, where
     a simulation diagram from~$P_1 = I_1\leftarrow D_1 \rightarrow A_1 \rightarrow
     I_1$ to~$P_2 = I_2\leftarrow D_2 \rightarrow A_2 \rightarrow I_2$ is given by a
     diagram like
  \begin{equation*}\label{diag:sim}
%  \tikzset{external/remake next}
    \begin{tikzpicture}[baseline=(m-2-2)]
      \matrix (m) [matrix of math nodes, column sep={3.5em,between origins}, row
      sep={3.5em,between origins},text height=1.5ex, text depth=.25ex]
      { P_1 : & I_1 & D_1 & A_1 & I_1 \\
              %& R &R{\cdotp}D_2&R{\cdotp}A_1& R \\
              & R &\cdot&\cdot& R \\
        P_2 : &  I_2 & D_2 & A_2 & I_2 & .\\};
      \path[morphism]
        (m-1-3) edge (m-1-2)
        (m-1-3) edge (m-1-4)
        (m-1-4) edge (m-1-5)
        (m-3-3) edge (m-3-2)
        (m-3-3) edge (m-3-4)
        (m-3-4) edge (m-3-5);
      \path[morphism,densely dashed]
        (m-2-2) edge node[left]{$r_1$} (m-1-2)
        (m-2-2) edge node[left]{$r_2$} (m-3-2)
        (m-2-5) edge node[right]{$r_1$} (m-1-5)
        (m-2-5) edge node[right]{$r_2$} (m-3-5)
        (m-2-3) edge node[above]{$\gamma$} (m-2-2)
        (m-2-3) edge (m-2-4)
        (m-2-4) edge (m-2-5)
        (m-2-3) edge node[right]{$\beta$} (m-1-3)
        (m-2-3) edge (m-3-3)
        (m-2-4) edge (m-1-4)
        (m-2-4) edge node[right]{$\alpha$} (m-3-4);
      \pullback[.6cm]{m-1-4}{m-2-4}{m-2-5}{draw,-};
      \pullback[.6cm]{m-3-3}{m-2-3}{m-2-4}{draw,-};
    \end{tikzpicture}
  \end{equation*}
  The equivalence relation between such diagrams is detailed
  in~\cite{polyDiagrams} and corresponds to the equivalence between spans that
  form the sides of simulations.
  \end{itemize}
\end{defi}
\noindent
We have (\cite{polyDiagrams}):
\begin{prop}\label{prop:SMCCDiagrams}
  The operation~$\Tensor$ that acts on objects in a pointwise manner:
\[
%  \tikzset{external/remake next}
  P_1\Tensor P_2 \quad\eqdef\quad
  \begin{tikzpicture}[baseline=(m-1-1.base)]
    \matrix (m) [matrix of math nodes, column sep=3.5em,text height=1ex, text
    depth=.25ex]
      { I_1\times I_2 & D_1\times D_2 & A_1\times A_2 & J_1\times J_2 \\ };
    \path[morphism]
    (m-1-2) edge node[above]{$n_1{\times}n_2$} (m-1-1)
    (m-1-2) edge node[above]{$d_1{\times}d_2$} (m-1-3)
    (m-1-3) edge node[above]{$a_1{\times}a_2$} (m-1-4);
  \end{tikzpicture}
\]
  is a tensor product. It gives the category~$\PDSim_\C$ a symmetric monoidal closed
  structure: there is a functor~$\BLANK \Linear \BLANK :
  \PDSim_\C^\op\times\PDSim_\C\to\PDSim_\C$ and an isomorphism
  \[
    \PDSim_\C[P_1\Tensor P_2\ ,\ P_3]
    \quad\iso\quad
    \PDSim_\C[P_1\ ,\ P_2\Linear P_3]
    \ ,
  \]
  natural in~$P_1$ and~$P_3$.
\end{prop}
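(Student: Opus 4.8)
The plan is to construct the internal hom $P_2 \Linear P_3$ explicitly as a polynomial diagram and then verify the natural isomorphism by a direct computation on simulation diagrams. Since this proposition is quoted as a recollection from~\cite{polyDiagrams}, the intended argument is presumably the one given there, so my reconstruction should hit the essential structural points rather than every equivalence-class verification.

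First I would pin down the tensor on morphisms, since $\Tensor$ is only stated on objects. A simulation diagram has, in its middle row, a span $R$ (appearing on both the left and right legs, via $r_1,r_2$) together with the data $\gamma,\beta,\alpha$ connecting the $D$- and $A$-components; the tensor of two simulations should again act componentwise, taking products $R \times R'$ of the connecting spans and products of the maps $\gamma,\beta,\alpha$. I would check that this respects the equivalence relation on simulation diagrams (the span-equivalence mentioned in the definition) and that it is functorial, making $\Tensor$ a bifunctor with the evident symmetry and with unit the one-point polynomial $\One$ (the identity-shaped diagram on the terminal object). Associativity, symmetry and unit coherence reduce to the corresponding isomorphisms of products in~$\C$.

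Next I would define $P_2 \Linear P_3$. Guided by the games intuition and by the universal property it must satisfy, the internal hom should be built so that a morphism $P_1 \to (P_2 \Linear P_3)$ is exactly a morphism $P_1 \Tensor P_2 \to P_3$. Concretely, for $P_2 = I_2 \leftarrow D_2 \to A_2 \to I_2$ and $P_3 = I_3 \leftarrow D_3 \to A_3 \to I_3$, the object $P_2 \Linear P_3$ will have base $I_2 \times I_3$ (as $\Tensor$ acts on bases by products, the exponent must invert this on the $I_2$ factor) and its $A$- and $D$-components will be assembled from $A_2,A_3,D_2,D_3$ using products, exponentials and dependent sums in~$\C$, so that the back-and-forth structure of a simulation out of a tensor is internalised. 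I would write this diagram out and confirm it is a well-formed endo polynomial diagram over $I_2 \times I_3$.

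The main obstacle, and the technical heart of the proof, will be establishing the adjunction isomorphism
\[
  \PDSim_\C[P_1 \Tensor P_2,\ P_3]
  \quad\iso\quad
  \PDSim_\C[P_1,\ P_2 \Linear P_3]
\]
together with its naturality in $P_1$ and $P_3$. I would set up a bijection between simulation diagrams $P_1 \Tensor P_2 \to P_3$ and simulation diagrams $P_1 \to (P_2 \Linear P_3)$ by currying the span and the connecting maps: a middle-row span $S : I_1 \times I_2 \leftarrow \bullet \to I_3$ on the tensor side corresponds, via the product/exponential adjunctions in the locally cartesian closed $\C$, to a span $I_1 \leftarrow \bullet \to I_2 \times I_3$ on the hom side, and similarly for the $\beta,\gamma,\alpha$ data. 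The delicate part is checking that this transposition is compatible with the equivalence relation on simulations in both directions (so that it descends to equivalence classes) and that it commutes with pre- and post-composition, which gives naturality. Here I expect to lean heavily on Beck--Chevalley and distributivity isomorphisms, exactly as in the proof of Proposition~\ref{prop:strongNat}, and on Lemma~\ref{lem:compositionSpans} to keep the span bookkeeping functorial.
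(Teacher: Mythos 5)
First, a point of reference: the paper itself contains no proof of this proposition --- it is explicitly recalled from~\cite{polyDiagrams} --- so the only meaningful comparison is with the intensional, diagrammatic argument given there. Your architecture does match that argument in outline: extend $\Tensor$ to simulation diagrams componentwise (products of pullback squares are pullbacks, so this is fine), construct $P_2\Linear P_3$ as an explicit endo diagram over $I_2\times I_3$, and establish the adjunction by a currying bijection on simulation diagrams that descends to equivalence classes and commutes with composition.

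The genuine gap is that the one step carrying all the mathematical content --- actually constructing $P_2\Linear P_3$ --- is left as a promissory note (``I would write this diagram out and confirm it is well-formed''). This is not an equivalence-class verification that can safely be elided: the paper itself flags the concrete definition as ``rather verbose'' (Definition~3.7 / Lemma~3.8 of~\cite{polyDiagrams}), and its shape is not guessable from the base $I_2\times I_3$ alone. The $A$-component must encode a \emph{strategy}: a map between the $A$-parts together with a dependent choice function on the $D$-parts, i.e.\ a type of shape $\sum_{f}\prod(\cdots)$ --- visible in the $\Set$ instance~(\ref{eqn:LinearFormulaPresheaf}), where the coefficient object is $\sum_{f\in A_1\to A_2}\prod_{a_1\in A_1}D_1(a_1)^{D_2(f(a_1))}$. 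Producing that shape from a simulation out of a tensor is exactly where the distributivity isomorphism (the type-theoretic axiom of choice, isomorphism~(\ref{diag:AC})) does the work; your sketch invokes only Beck--Chevalley and distributivity generically, ``as in Proposition~\ref{prop:strongNat}'', without identifying this as the pivotal move, and Beck--Chevalley alone will not curry the data into the $\sum_f\prod$ form. A smaller inaccuracy: transposing the middle-row span $I_1\times I_2\leftarrow S\to I_3$ into $I_1\leftarrow S\to I_2\times I_3$ uses no exponentials whatsoever --- it is mere regrouping of a single map $S\to I_1\times I_2\times I_3$ (this is precisely how $R'$ is obtained from $R$ in the proof of Proposition~\ref{prop:linearRan}); the locally cartesian closed structure is needed only for transposing the $\alpha$, $\beta$, $\gamma$ data and for building $\Linear$ itself. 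So your plan points in the right direction, but as written it stops just short of the theorem's actual content.
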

Seen from the angle of ``games semantics'' hinted at in the introduction,
this operation is a kind of synchronous, ``lockstep'' parallel composition: a
move in the tensor of two games \emph{must be} a move in each of the games,
and a counter-move / response from the opponent \emph{must be} a response for
each move, in each of the two games.
%%%>>>2

\subsection{Simulations} %%%<<<2
\label{sub:simulations}

We start by characterizing simulations between polynomials as special 2-cells
involving their extensions.

\begin{prop}\label{prop:sim}
  If $P_1$ and~$P_2$ are polynomial diagrams and~$R$ is a span, any 2-cell
  \[
%    \tikzset{external/remake next}
    \begin{tikzpicture}[baseline=(m-2-1.base)]
      \matrix (m) [matrix of math nodes, column sep=3.5em, row sep=3.5em,text
      height=1ex, text depth=.25ex]
        { \Sl\C{I_1} & \Sl\C{I_1}\\
          \Sl\C{I_2} & \Sl\C{I_2}\\};
      \path[morphism]
        (m-1-1) edge node[above]{$\Sem{P_1}$} (m-1-2)
        (m-2-1) edge node[below]{$\Sem{P_2}$} (m-2-2);
      \path[morphism]
        (m-1-1) edge node[over]{$\Sem{\AU{R}}$} (m-2-1)
        (m-1-2) edge node[over]{$\Sem{\AU{R}}$} (m-2-2);
      \draw[twocell]
        (m-1-2) -- (m-2-1);
      \draw ($(m-1-2)!.5!(m-2-1)$) node[over]{$\scriptstyle\rho$};
    \end{tikzpicture}
  \]
  (where~$\rho$ is a strong natural transformation) is
  uniquely represented (up-to a choice of pullbacks) by a diagram of the shape
  \begin{equation}\label{diag:simbis}
%    \tikzset{external/remake next}
    \begin{tikzpicture}[baseline=(m-2-2)]
      \matrix (m) [matrix of math nodes, column sep={4em,between origins}, row
      sep={4em,between origins},text height=1.5ex, text depth=.25ex]
        {P_1: & I_1 & D_1 & A_1 & I_1 \\
            & R &R{\cdotp}D_2&R{\cdotp}A_1& R \\
         P_2: & I_2 & D_2 & A_2 & I_2 & .\\};
      \path[morphism]
        (m-1-3) edge node[above]{$n_1$} (m-1-2)
        (m-1-3) edge node[above]{$d_1$} (m-1-4)
        (m-1-4) edge node[above]{$a_1$} (m-1-5)
        (m-3-3) edge node[below]{$n_2$} (m-3-2)
        (m-3-3) edge node[below]{$d_2$} (m-3-4)
        (m-3-4) edge node[below]{$a_2$} (m-3-5);
      \path[morphism,densely dashed]
        (m-2-2) edge node[left]{$r_1$} (m-1-2)
        (m-2-2) edge node[left]{$r_2$} (m-3-2)
        (m-2-5) edge node[right]{$r_1$} (m-1-5)
        (m-2-5) edge node[right]{$r_2$} (m-3-5)
        (m-2-3) edge node[above]{$\gamma$} (m-2-2)
        (m-2-3) edge (m-2-4)
        (m-2-4) edge (m-2-5)
        (m-2-3) edge node[right]{$\beta$} (m-1-3)
        (m-2-3) edge (m-3-3)
        (m-2-4) edge (m-1-4)
        (m-2-4) edge node[right]{$\alpha$} (m-3-4);
      \pullback[.6cm]{m-1-4}{m-2-4}{m-2-5}{draw,-};
      \pullback[.6cm]{m-3-3}{m-2-3}{m-2-4}{draw,-};
    \end{tikzpicture}
  \end{equation}
\end{prop}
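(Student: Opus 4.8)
The plan is to identify the 2-cell $\rho$ with a strong natural transformation between two \emph{composite} polynomial functors, and then to invoke Proposition~\ref{prop:strongNat}. Reading the square, the two paths from $\Sl\C{I_1}$ to $\Sl\C{I_2}$ are the composites $\Sem{\AU{R}}\circ\Sem{P_1}$ (right-then-down) and $\Sem{P_2}\circ\Sem{\AU{R}}$ (down-then-right); since strengths compose, a strong 2-cell filling the square is exactly a strong natural transformation $\rho\colon\Sem{\AU{R}}\circ\Sem{P_1}\Rightarrow\Sem{P_2}\circ\Sem{\AU{R}}$. By Proposition~\ref{prop:PFComposition} both composites are polynomial functors, so Proposition~\ref{prop:strongNat} applies: it represents $\rho$, uniquely up to a choice of pullbacks, by a diagram of shape~$(\ref{diag:strongNat})$ between any chosen representations of the two composites. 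The entire content of the proof is then to pick those representations and to check that~$(\ref{diag:strongNat})$ collapses to~$(\ref{diag:simbis})$.

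First I would compute the two composites, exploiting that $\AU{R}$ is \emph{linear}: as in the proof of Lemma~\ref{lem:AULeftAdjointLeft} its middle leg is the identity, so $\Pi_1=\id$ and $\Sem{\AU{R}}=\Sigma_{r_2}\Delta_{r_1}$ carries no $\Pi$ factor. Consequently the four-pullback composition scheme~$(\ref{diag:composition})$ degenerates. For $\AU{R}\circ P_1$ the only Beck--Chevalley step pulls $r_1$ back along $a_1$, producing the sum-indexing object $R\cdot A_1=A_1\times_{I_1}R$ with its projections to $A_1$ and to $R$; this is exactly the object sitting below $A_1$ in the middle row of~$(\ref{diag:simbis})$, and the further leg $R\to I_1$ is $r_1$. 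For $P_2\circ\AU{R}$ the analogous step pulls $n_2$ back along $r_2$, giving $R\cdot D_2=D_2\times_{I_2}R$; here the distributivity square~$(\ref{diag:distr})$ also intervenes, so the \emph{full} representation of $P_2\circ\AU{R}$ acquires a distributivity object as its sum-index.

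Next I would feed these representations into Proposition~\ref{prop:strongNat}. The apex of~$(\ref{diag:strongNat})$ is, by construction, the pullback of the bottom exponent against the top sum-index $R\cdot A_1$ over the bottom sum-index. The decisive simplification is that, although the bottom sum-index is the complicated distributivity object, this apex is computed fibrewise and equals $D_2\times_{A_2}(R\cdot A_1)$ --- precisely the object $R\cdot D_2$ sitting below $D_1$ in~$(\ref{diag:simbis})$, together with its two pullback squares (over $I_1$ for $R\cdot A_1$, over $A_2$ for $R\cdot D_2$). Because the composite data of $\AU{R}\circ P_1$ and $P_2\circ\AU{R}$ factor through the original maps of $P_1$ and $P_2$ via the pullback projections, the outer rows of the diagram can be taken to be $P_1$ and $P_2$ themselves; the maps $\beta$, $\gamma$, $\alpha$ and the span legs $r_1,r_2$ are then read off by post-composing the maps of~$(\ref{diag:strongNat})$ with those projections, and the required commutations are the defining equations of the two pullbacks.

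The main obstacle is exactly this last collapse: one must verify that the generic representing diagram, whose middle data a priori involve the distributivity object created by $P_2\circ\AU{R}$, reduces to the two plain pullbacks $R\cdot A_1$ and $R\cdot D_2$, with $P_1$ and $P_2$ reappearing on the outer rows. Everything hinges on the linearity of $\AU{R}$: the identity middle leg is what allows the apex pullback over the distributivity object to be rewritten as a pullback over $A_2$, eliminating the distributivity object entirely. Uniqueness up to a choice of pullbacks is then inherited directly from the corresponding clause of Proposition~\ref{prop:strongNat}, together with the harmless choices of pullbacks made when forming the two composites.
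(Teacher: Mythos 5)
Your first half coincides with the paper's proof: the paper likewise observes that $\rho$ is a strong natural transformation $\Sem{\AU{R}}\Sem{P_1}\Rightarrow\Sem{P_2}\Sem{\AU{R}}$ between polynomial functors (Proposition~\ref{prop:PFComposition}), applies Proposition~\ref{prop:strongNat} to get the expanded representing diagram~$(\ref{diag:expandedSimulation})$ (in which the sum-index of $P_2\,\AU{R}$ is indeed a distributivity object), and then collapses it to~$(\ref{diag:simbis})$ by the pullback lemma, showing the square $(U,R{\cdotp}A_1,A_2,D_2)$ is a pullback and $U\iso R{\cdotp}D_2$. This is the direction the paper calls ``easy''. (A small slip on the way: you first introduce $R{\cdotp}D_2$ as $D_2\times_{I_2}R$ --- that is the object $Y$ of the expanded diagram --- whereas the $R{\cdotp}D_2$ of~$(\ref{diag:simbis})$ is $D_2\times_{A_2}(R{\cdotp}A_1)$, which depends on $\alpha$; the two are different objects in general.)

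The genuine gap is your last sentence: uniqueness is \emph{not} ``inherited directly'' from Proposition~\ref{prop:strongNat}. That proposition gives a bijection between strong 2-cells and diagrams of shape~$(\ref{diag:strongNat})$; to conclude that 2-cells are uniquely represented by diagrams of shape~$(\ref{diag:simbis})$ you must show that your collapse is \emph{invertible}, i.e., that the reduced data $(\alpha,\beta,\gamma)$ determine the full expanded diagram --- in particular the maps into the distributivity object, which disappear under your collapse and could a priori differ for two distinct 2-cells with the same collapsed diagram. This reconstruction is exactly the part the paper calls ``slightly messier'' and where most of its proof lives: writing $\varphi$ and $\psi$ for the transposition isomorphisms of $\Sigma\dashv\Delta$ and $\Delta\dashv\Pi$, one sets $g\eqdef\varphi(\gamma)$ and $f\eqdef\psi\varphi(\gamma)\in\Sl\C{A_2}[\alpha,\Pi_{d_2}\Delta_{n_2}(l_2)]$, proves $\epsilon\circ\overrightarrow{\Delta}_{d_2}(f)=g$ by naturality of $\psi^{-1}$ (so the triangle into the distributivity square commutes), constructs $h$ from $\beta$ via the universal property of the pullback $X$, and checks the remaining rectangle using $r_1\gamma=n_1\beta$. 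Without this converse construction the statement's ``uniquely represented'' is unproved (one does not even know what it means for a diagram of shape~$(\ref{diag:simbis})$ to represent a 2-cell), and the later application --- fullness \emph{and} faithfulness of the functor $\PDSim_\C\to\PFSim_\C$ in Proposition~\ref{prop:PEequivPEFun} --- would not follow.
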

\begin{proof}%[Proof of Proposition~\ref{prop:sim}]
  As compositions of polynomial functors, both~$\Sem{\AU{R}}
  \Sem{P_1}$ and~$\Sem{P_2} \Sem{\AU{R}}$ are polynomial
  (Proposition~~\ref{prop:PFComposition}).
  We can use Proposition~\ref{prop:strongNat} to represent the strong natural
  transformation~$\rho$ by the following diagram:
  \begin{equation}\label{diag:expandedSimulation}
%  \tikzset{external/remake next}
    \begin{tikzpicture}[baseline=(m-3-3.base)]
      \matrix (m) [matrix of math nodes, column sep={4em,between origins}, row
      sep={4em,between origins}, text height=1.5ex,text depth=0.25ex]
        { I_1 &     & D_1  & A_1          & I_1 \\
              &     & X    & R{\cdotp}A_1 & R   \\
          I_1 &     & U    & R{\cdotp}A_1 & I_2 \\
          R   & Y   &\bullet& \bullet       &     \\
          I_2 & D_2 & D_2  & A_2          & I_2 \\};
      \draw ($(m-5-3)!.5!(m-4-4)$) node[auto]{$\scriptstyle(i)$};
      \path[morphism,densely dashed]
        (m-3-3) edge (m-3-4)
        (m-3-3) edge (m-2-3)
        (m-3-3) edge (m-4-3)
        (m-3-4) edge (m-4-4);
      \draw[double equal sign distance,double]
        (m-5-2) -- (m-5-3);
      \draw[double equal sign distance,double]
        (m-3-4) -- (m-2-4);
      \path[morphism]
        (m-1-3) edge node[above]{$n_1$} (m-1-1)
        (m-1-3) edge node[above]{$d_1$} (m-1-4)
        (m-1-4) edge node[above]{$a_1$} (m-1-5)
        (m-5-2) edge node[below]{$n_2$} (m-5-1)
        (m-5-3) edge node[below]{$d_2$} (m-5-4)
        (m-5-4) edge node[below]{$a_2$} (m-5-5)
        (m-4-1) edge node[left]{$r_1$} (m-3-1)
        (m-4-1) edge node[left]{$r_2$} (m-5-1)
        (m-2-5) edge node[right]{$r_1$} (m-1-5)
        (m-2-5) edge node[right]{$r_2$} (m-3-5)
        (m-2-3) edge (m-1-3)
        (m-2-4) edge (m-1-4)
        (m-2-3) edge (m-2-4)
        (m-2-4) edge (m-2-5)
        (m-4-2) edge (m-4-1)
        (m-4-2) edge (m-5-2)
        (m-4-3) edge node[over]{$\epsilon$} (m-4-2)
        (m-4-3) edge (m-5-3)
        (m-4-3) edge (m-4-4)
        (m-4-4) edge (m-5-4) ;
      \draw[double equal sign distance,double]
        (m-1-1) -- (m-3-1)
        (m-3-5) -- (m-5-5);
      \pullback[.4cm]{m-4-1}{m-4-2}{m-5-2}{draw,-};
      \pullback[.4cm]{m-4-4}{m-4-3}{m-5-3}{draw,-};
      \pullback[.4cm]{m-3-4}{m-3-3}{m-4-3}{draw,-};
      \pullback[.4cm]{m-1-3}{m-2-3}{m-2-4}{draw,-};
      \pullback[.4cm]{m-1-4}{m-2-4}{m-2-5}{draw,-};
    \end{tikzpicture}
  \end{equation}
  where~$(i)$ is a distributivity square.
  The plain arrows represent the polynomials~$\AU{R} P_1$ and~$P_2 \AU{R}$,
  whose extensions are~$ \Sem{\AU{R}} \Sem{P_1}$ and~$\Sem{P_2} \Sem{\AU{R}}$; and
  the dashed arrows represent the strong natural transformation between them,
  as in diagram~$(\ref{diag:strongNat})$.

  Going from diagram~$(\ref{diag:expandedSimulation})$ to diagram~$(\ref{diag:simbis})$ is
  easy: just follow the arrows and use the pullback lemma to show that
  square~$(U,R{\cdotp}A_1,A_2,D_2)$ is a pullback and that~$U$ is indeed
  isomorphic to~$R{\cdotp}D_2$. The morphisms~$\alpha$,~$\beta$ and~$\gamma$
  can be read on the diagram.

  Going from diagram~$(\ref{diag:simbis})$ to diagram~$(\ref{diag:expandedSimulation})$
  is slightly messier. We choose~$U$ to be~$R{\cdotp}D_2$ and look at the
  following:
  \[
%    \tikzset{external/remake next}
    \begin{tikzpicture}[baseline=(m-3-3.base)]
      \matrix (m) [matrix of math nodes, column sep={4em,between origins}, row
      sep={4em,between origins}, text height=1.5ex,text depth=0.25ex]
        { I_1 &     & D_1          & A_1          & J_1 \\
              &     & X            & R{\cdotp}A_1 & R   \\
          I_1 &     & R{\cdotp}D_2 & R{\cdotp}A_1 & J_2 \\
          R   & Y   & \cdotp       & \cdotp       &     \\
          I_2 & D_2 & D_2          & A_2          & J_2 \\};
      \draw ($(m-5-3)!.5!(m-4-4)$) node[auto]{$\scriptstyle(i)$};
      \draw ($(m-5-2)!.7!(m-3-3)$) node[auto]{$\scriptstyle(ii)$};
      \path[morphism,densely dashed]
        (m-3-3) edge (m-3-4)
        (m-3-3) edge node[left]{$h$} (m-2-3)
        (m-3-3) edge (m-4-3)
        (m-3-4) edge node[left]{$f$} (m-4-4);
      \draw[double equal sign distance,double]
        (m-5-2) -- (m-5-3);
      \draw[double equal sign distance,double]
        (m-3-4) -- (m-2-4);
      \path[morphism]
        (m-1-3) edge node[above]{$n_1$} (m-1-1)
        (m-1-3) edge node[above]{$d_1$} (m-1-4)
        (m-1-4) edge node[above]{$a_1$} (m-1-5)
        (m-5-2) edge node[below]{$n_2$} (m-5-1)
        (m-5-3) edge node[below]{$d_2$} (m-5-4)
        (m-5-4) edge node[below]{$a_2$} (m-5-5)
        (m-4-1) edge node[left]{$l_1$} (m-3-1)
        (m-4-1) edge node[left]{$l_2$} (m-5-1)
        (m-2-5) edge node[right]{$r_1$} (m-1-5)
        (m-2-5) edge node[right]{$r_2$} (m-3-5)
        (m-2-3) edge (m-1-3)
        (m-2-4) edge (m-1-4)
        (m-2-3) edge (m-2-4)
        (m-2-4) edge (m-2-5)
        (m-4-2) edge (m-4-1)
        (m-4-2) edge node[right]{$d$} (m-5-2)
        (m-4-3) edge node[over]{$\epsilon$} (m-4-2)
        (m-4-3) edge (m-5-3)
        (m-4-3) edge (m-4-4)
        (m-4-4) edge (m-5-4) ;
      \draw[double equal sign distance,double]
        (m-1-1) -- (m-3-1)
        (m-3-5) -- (m-5-5);
      \path[morphism]
        (m-3-3) edge[bend left=-25] node[above left]{$\gamma$} (m-4-1)
        (m-3-3) edge[bend left=35] node[below left]{$\beta$} (m-1-3)
        (m-3-4) edge[bend left=25] node[right]{$\alpha$} (m-5-4);
      \path[morphism,densely dashed]
        (m-3-3) edge[bend left=-20] node[above left]{$g$} (m-4-2);
      \pullback[.5cm]{m-4-1}{m-4-2}{m-5-2}{draw,-};
      \pullback[.5cm]{m-4-4}{m-4-3}{m-5-3}{draw,-};
      \pullback[.5cm]{m-3-4}{m-3-3}{m-4-3}{draw,-};
      \pullback[.5cm]{m-1-3}{m-2-3}{m-2-4}{draw,-};
      \pullback[.5cm]{m-1-4}{m-2-4}{m-2-5}{draw,-};
    \end{tikzpicture}
  \]
  The morphisms~$\alpha$,~$\beta$ and~$\gamma$ come from
  diagram~$(\ref{diag:simbis})$.

  To define~$f$ and~$h$, write~$\varphi$ for the natural isomorphism~$\Sl\C
  U[\Sigma_k\BLANK,\BLANK] \implies \Sl\C V[\BLANK,\Delta_k\BLANK]$ and~$\psi$
  for the natural isomorphism~$\Sl\C V[\Delta_k\BLANK,\BLANK]\implies\Sl\C
  U[\BLANK,\Pi_k\BLANK]$. In particular,~$\epsilon$ is~$\psi^{-1}(1)$. Then:
  \begin{itemize}

  \item $f$ is constructed from~$\alpha$ and~$\gamma$:
  \[\begin{array}{cll}
         & \Big.
     \gamma \ \in\  \Sl\C{I_2}\big[\ \Sigma_{n_2}\Delta_{d_2}(\alpha)  \ ,\ l_2\ \big]
         & \quad\hbox{\footnotesize see diagram~$(\ref{diag:simbis})$} \\
    \iff & \Big.
     g \eqdef \varphi(\gamma) \ \in\  \Sl\C{D_2}\big[\ \Delta_{d_2}(\alpha)  \ ,\ \Delta_{n_2}(l_2)\ \big]
         & \\
    \iff & \Big.
     f \eqdef \psi \varphi(\gamma) \ \in\  \Sl\C{A_2}\big[\ \alpha  \ ,\ \Pi_{d_2}\Delta_{n_2}(l_2)\ \big]
     \ .
  \end{array}\]
  By naturality of~$\psi^{-1}$, the following commutes:
  \[
%    \tikzset{external/remake next}
    \begin{tikzpicture}[baseline=(m-2-2.base)]
      \matrix (m) [matrix of math nodes, column sep=3em, row
      sep=3em, text height=1.5ex,text depth=0.25ex]
        {
            \Sl\C{A_2}\big[\Pi_{d_2}(d) , \Pi_{d_2}(d)\big]
          & \Sl\C{D_2}\big[\Delta_{d_2}\Pi_{d_2}(d),d\big] \\
            \Sl\C{A_2}\big[\alpha,\Pi_{d_2}(d)\big]
          & \Sl\C{D_2}\big[\Delta_{d_2}(\alpha),d\big] \\
        };
      \path[morphism]
        (m-1-1) edge node[above]{$\psi^{-1}$} (m-1-2)
        (m-1-1) edge node[left]{$\BLANK\circ f$} (m-2-1)
        (m-1-2) edge node[right]{$\BLANK\circ \overrightarrow{\Delta}_{d_2}(f)$} (m-2-2)
        (m-2-1) edge node[below]{$\psi^{-1}$} (m-2-2);
    \end{tikzpicture}
  \]
  where~$\overrightarrow{\Delta}_{d_2}(\BLANK)$ is the action of the
  functor~$\Delta_{d_2}$ on morphisms.  Starting from the identity,
  we get~$\psi^{-1}(1)\circ \overrightarrow{\Delta}_{d_2}(f) =
  \psi^{-1}(f)$, i.e., $\epsilon\circ\overrightarrow{\Delta}_{d_2}(f) = g$. This
  shows that the triangle~$(ii)$ commutes.

  \item We can then construct~$h$ from~$\beta$ by using the fact that~$X$ is a
  pullback.
  \end{itemize}
  The only remaining thing to check is that the upper left rectangle commutes.
  It follows from~$r_1\gamma = n_1\beta$ in diagram~$(\ref{diag:simbis})$ and
  the construction.
%%%
\end{proof}
Proposition~\ref{prop:sim} makes it natural to define simulations for
arbitrary endofunctors:
\begin{defi}\label{def:simulationAsNatTransformation}
  If $F_1$ and $F_2$ are two endofunctors over~$\Sl\C{I_1}$ and~$\Sl\C{I_2}$,
  a \emph{simulation from~$F_1$ to~$F_2$} is given by a span~$I_1\leftarrow
  R\rightarrow I_2$ and a 2-cell of the form
  \[
%    \tikzset{external/remake next}
    \begin{tikzpicture}[baseline=(m-2-1.base)]
      \matrix (m) [matrix of math nodes, column sep=3.5em, row sep=3.5em,text
      height=1ex, text depth=.25ex]
        { \Sl\C{I_1} & \Sl\C{I_1}\\
          \Sl\C{I_2} & \Sl\C{I_2}\\};
      \path[morphism]
        (m-1-1) edge node[above]{$F_1$} (m-1-2)
        (m-2-1) edge node[below]{$F_2$} (m-2-2);
      \path[morphism]
        (m-1-1) edge node[over]{$\Sem{\AU{R}}$} (m-2-1)
        (m-1-2) edge node[over]{$\Sem{\AU{R}}$} (m-2-2);
      \draw[twocell]
        (m-1-2) -- (m-2-1);
      \draw ($(m-1-2)!.5!(m-2-1)$) node[over]{$\scriptstyle\rho$};
    \end{tikzpicture}
  \]
  where~$\rho$ is a natural transformation.
  A simulation~$(R,\rho)$ is \emph{equivalent} to a simulation~$(R',\rho')$
  iff~$F_2\varepsilon \circ \rho = \rho' \circ \varepsilon F_1$ for some
  natural isomorphism~$\varepsilon : \Sem{\AU{R}} \to \Sem{\AU{R'}}$.

  The category~$\FSim_\C$ is defined with:
    \begin{itemize}
      \item endofunctors over slices of~$\C$ as objects,
      \item equivalence classes of simulations as morphisms.
    \end{itemize}
\end{defi}
It is routine to check that this gives a category. (Recall that the
composition of two linear functors is again linear by
Lemma~\ref{lem:compositionSpans}).
Note that the notion of equivalence of simulations is inherited from~$\Span_\C$: by
Corollary~\ref{cor:isoBetweenPolynomials} an isomorphism
between~$\Sem{\AU{R}}$ and~$\Sem{\AU{R'}}$ amounts to a span isomorphism
between~$R$ and~$R'$.
As a particular subcategory, we have
\begin{defi}
  The category~$\PFSim_\C$ is the subcategory of~$\FSim_\C$ with
    \begin{itemize}
      \item \emph{polynomial} endofunctors over slices of~$\C$ as objects,
      \item equivalence classes of \emph{strong} simulations as morphisms
    \end{itemize}
  where a simulation~$(R,\rho)$ is strong if and only if the natural
  transformation~$\rho$ is strong.
\end{defi}
%Note that if there are non strong natural transformations, this subcategory is
%not full.
There is a functor from~$\PDSim_\C$ to~$\PFSim_\C$ that sends a polynomial
diagram to its corresponding polynomial functor and a simulation diagram to
its simulation cell. This functor is
\begin{itemize}
  \item surjective on objects by the definition of polynomial functor,
  \item full and faithful by Proposition~\ref{prop:sim}.
\end{itemize}
This implies that
\begin{prop}\label{prop:PEequivPEFun}
  The category~$\PDSim_\C$ and the category~$\PFSim_\C$ are equivalent.
\end{prop}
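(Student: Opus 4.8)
The plan is to exhibit the functor $\Phi : \PDSim_\C \to \PFSim_\C$ already described before the statement and to verify the three conditions — essential surjectivity, fullness, and faithfulness — that together characterize an equivalence of categories. First I would pin down $\Phi$ precisely. On objects it sends an endo polynomial diagram $P$ to its extension $\Sem{P}$, which is a polynomial endofunctor by definition. On morphisms it sends the equivalence class of a simulation diagram of the shape~(\ref{diag:simbis}) to the equivalence class of the strong simulation cell $(R,\rho)$ that Proposition~\ref{prop:sim} associates to it.

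Before calling $\Phi$ a functor I would check that the morphism assignment is well defined on equivalence classes. This is the cleanest point: the equivalence relation on simulation diagrams and the equivalence relation on simulation cells from Definition~\ref{def:simulationAsNatTransformation} both reduce, via Corollary~\ref{cor:isoBetweenPolynomials}, to span isomorphism of the side span $R$, so the two notions of equivalence match under the correspondence of Proposition~\ref{prop:sim}. For essential surjectivity I would simply observe that $\Phi$ is surjective on objects: a polynomial endofunctor is by definition naturally isomorphic to some $\Sem{P}$, so every object of $\PFSim_\C$ is hit. Fullness and faithfulness are exactly the content of Proposition~\ref{prop:sim}: it asserts that every strong simulation cell between extensions is represented by a simulation diagram (fullness), and that this representation is unique up to the choice of pullbacks (faithfulness, once one passes to equivalence classes). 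With these three properties in hand, the standard characterization — a full, faithful, essentially surjective functor is an equivalence — yields the result.

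The step I expect to be the main obstacle is verifying that $\Phi$ preserves composition. Composition in $\PDSim_\C$ glues two simulation diagrams along their common boundary, composing the side spans by pullback, whereas composition in $\PFSim_\C$ pastes the two $2$-cells vertically. Showing that the cell read off the glued diagram via Proposition~\ref{prop:strongNat} equals the pasting of the two cells amounts to tracking the unit, counit, and Beck-Chevalley isomorphisms of Proposition~\ref{prop:strongNat} through the four pullbacks of the composite, and invoking the bicategorical functoriality of $\AU{\BLANK}$ from Lemma~\ref{lem:compositionSpans}. This is a diagram chase rather than a conceptual difficulty, but it is where the real work lies; preservation of identities is immediate, since the identity polynomial diagram maps to the identity functor.
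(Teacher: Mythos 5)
Your proposal takes essentially the same route as the paper, which likewise proves the equivalence by observing that the functor sending a polynomial diagram to its extension and a simulation diagram to its simulation cell is surjective on objects (by the definition of polynomial functor) and full and faithful (by Proposition~\ref{prop:sim}). Your extra care about well-definedness on equivalence classes (via Corollary~\ref{cor:isoBetweenPolynomials}) and about preservation of composition (via Lemma~\ref{lem:compositionSpans}) fills in details the paper leaves implicit, but it is the same argument.
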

%Note that because all natural transformations are strong in~$\Set$, the
%category~$\PFSim_\Set$ is a full subcategory of~$\FSim_\Set$.
%

%%%>>>2

\subsection{Tensor Product} %%%<<<2

Recall that the tensor of two polynomials is the ``pointwise
cartesian product'':
\[
%  \tikzset{external/remake next}
  P_1\Tensor P_2 \quad\eqdef\quad
  \begin{tikzpicture}[baseline=(m-1-1.base)]
    \matrix (m) [matrix of math nodes, column sep=3.5em,text height=1ex, text
    depth=.25ex]
      { I_1\times I_2 & D_1\times D_2 & A_1\times A_2 & I_1\times I_2 \\ };
    \path[morphism]
    (m-1-2) edge node[above]{$n_1{\times}n_2$} (m-1-1)
    (m-1-2) edge node[above]{$d_1{\times}d_2$} (m-1-3)
    (m-1-3) edge node[above]{$a_1{\times}a_2$} (m-1-4);
  \end{tikzpicture}
  \ .
\]
This gives rise to an operation on polynomial
functors:~$\Sem{P_1}\Tensor\Sem{P_2} \eqdef \Sem{P_1 \Tensor P_2}$. However,
this definition is intensional because it acts on polynomial diagrams, i.e.,
on representations of polynomial functors. In particular, it doesn't even make
sense for functors that are not polynomial.
We will now show that it is possible to characterize~$\Sem{P_1\Tensor P_2}$ by a
universal property relying only on~$\Sem{P_1}$ and~$\Sem{P_2}$, thus
giving an extensional definition of the tensor of polynomial functors.

To avoid confusion, we will write~$\sq : \Sl\C A\times \Sl\C B \to
\Sl\C{A{\times}B}$ for the functor sending~$(x,y)$ in~$\Sl\C A\times\Sl\C B$
to~$x\times y$ in~$\Sl\C{A{\times}B}$ and~$(f,g)$ in~$\in\Sl\C
A[x,x']\times\Sl\C B[y,y']$ to~$f\times g$
in~$\Sl\C{A{\times}B}[x{\times}x',y{\times}y']$, we have:
\begin{prop}\label{prop:tensorLan}
  Let $P_1$ and $P_2$ be polynomial functors,
  the polynomial functor~$P_1\Tensor P_2$ is a left Kan-extension along~$\sq$:
  it is universal s.t.
  \[
%    \tikzset{external/remake next}
    \begin{tikzpicture}[baseline=(m-2-2.base)]
      \matrix (m) [matrix of math nodes, column sep=3.5em, row sep=3.5em,text
      height=1ex, text depth=.25ex]
        {
          \Sl\C{I_1}\times\Sl\C{I_2}  &  \Sl\C{I_1{\times}I_2} \\
          \Sl\C{I_1}\times\Sl\C{I_2}  &  \Sl\C{I_1{\times}I_2} \\
        };
      \path[morphism]
        (m-1-1) edge node[above]{$\sq$} (m-1-2)
        (m-2-1) edge node[below]{$\sq$} (m-2-2)
        (m-1-1) edge node[left]{$P_1 \times P_2$} (m-2-1)
        (m-1-2) edge node[right]{$P_1 \Tensor P_2$} (m-2-2);
      \draw[twocell]
        (m-2-1) -- node[above left]{$\varepsilon$} (m-1-2);
    \end{tikzpicture}
    \ .
  \]
  More precisely, $P_1 \Tensor P_2 = \Lan_{\sq} \big(P_1(\BLANK) \sq
  P_2(\BLANK)\big)$ in the category of endofunctors with natural
  transformations.
\end{prop}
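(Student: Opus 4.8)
The plan is to verify the universal property of the left Kan extension directly. Fixing representing diagrams $I_1\leftarrow D_1\to A_1\to I_1$ and $I_2\leftarrow D_2\to A_2\to I_2$ for $P_1$ and $P_2$, and writing $F=P_1(\BLANK)\sq P_2(\BLANK)=\sq\circ(P_1\times P_2)$, I must construct the unit $\varepsilon:F\Rightarrow(P_1\Tensor P_2)\circ\sq$ and show that precomposition with $\varepsilon$ is, for every endofunctor $G$ on $\Sl\C{I_1{\times}I_2}$, a bijection
\[
  \Nat\big(P_1\Tensor P_2,\ G\big)\ \iso\ \Nat\big(F,\ G\circ\sq\big),
\]
natural in $G$. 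The essential difficulty is that $G$ ranges over \emph{all} endofunctors, not merely polynomial ones, so the argument may use nothing about $G$ and must instead exploit the structure of the \emph{source} functors $P_1$ and $P_2$.

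The structural fact I would rely on is that each monomial of a polynomial functor is \emph{representable}. Writing $m^1_{a_1}:d_1^{-1}(a_1)\to I_1$ and $m^2_{a_2}:d_2^{-1}(a_2)\to I_2$ for the restrictions of $n_1$ and $n_2$, the $a_1$-component of $P_1$ is the representable $\Sl\C{I_1}[m^1_{a_1},\BLANK]$ (internally, $\Pi_{d_1}\Delta_{n_1}$ read fibrewise over $A_1$), and likewise for $P_2$; thus $P_1$ and $P_2$ are sums of representables indexed by $A_1$ and $A_2$. The decisive observation is that $\sq$ carries these representing objects to those of the tensor: because $(d_1{\times}d_2)^{-1}(a_1,a_2)=d_1^{-1}(a_1)\times d_2^{-1}(a_2)$ and $n_1{\times}n_2$ restricts to $m^1_{a_1}\times m^2_{a_2}$, the $(a_1,a_2)$-monomial of $P_1\Tensor P_2$ is represented precisely by $\sq(m^1_{a_1},m^2_{a_2})$. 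Consequently, by the Yoneda lemma, both sides of the displayed bijection compute to the same product $\prod_{(a_1,a_2)\in A_1\times A_2}G\big(\sq(m^1_{a_1},m^2_{a_2})\big)$, taken fibrewise over $I_1\times I_2$ along $a_1{\times}a_2$.

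Concretely, I would take $\varepsilon$ to be, on each monomial, the canonical comparison
\[
  \prod_{u_1}x_{n_1(u_1)}\times\prod_{u_2}y_{n_2(u_2)}
  \ \longrightarrow\
  \prod_{(u_1,u_2)}\big(x_{n_1(u_1)}\times y_{n_2(u_2)}\big),
\]
which is in general not invertible --- reflecting that $P_1\Tensor P_2$ is a genuine Kan extension rather than a pointwise product. Given $\eta:F\Rightarrow G\circ\sq$, I would define $\bar\eta$ by evaluating $\eta$ at the representing pair $(x,y)=(m^1_{a_1},m^2_{a_2})$ and feeding it the pair of \emph{generic points} sitting in the $a_1$- and $a_2$-monomials of $P_1(m^1_{a_1})$ and $P_2(m^2_{a_2})$; the resulting element of $G(\sq(m^1_{a_1},m^2_{a_2}))$ extends, by Yoneda, to a natural transformation on the whole $(a_1,a_2)$-representable summand, and summing along $a_1{\times}a_2$ yields $\bar\eta:P_1\Tensor P_2\Rightarrow G$.

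It then remains to check the factorisation $\eta=(\bar\eta\,\sq)\circ\varepsilon$ and the uniqueness of $\bar\eta$. Both reduce to naturality of $\eta$: any element of $F(x,y)$ is obtained from a generic pair at the appropriate representing object by applying the functorial actions of $P_1$ and $P_2$, so a two-variable natural transformation out of $F$ is completely determined by its values on generic pairs. Verifying that $\eta$ --- which is natural separately in $x$ and in $y$ --- factors \emph{uniquely} through the single-variable $\varepsilon$, even though $\varepsilon$ is not invertible, is the main obstacle; this is exactly the point at which the sum-of-representables presentation does the real work, since it is what converts the non-invertible comparison into a bijection on natural transformations. Everything above is phrased set-theoretically for readability, but transports verbatim to an arbitrary locally cartesian closed $\C$ through the internal language, with ``generic point'' replaced by the units of the adjunctions $\Delta\dashv\Pi$ and Yoneda replaced by the corresponding hom-adjunction, exactly as in the proof of Proposition~\ref{prop:strongNat}.
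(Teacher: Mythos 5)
Your proposal is correct and follows essentially the same route as the paper: after splitting off the sum part (the paper factors out $\Sigma_{a_1\times a_2}$, which commutes with left Kan extensions, just as you sum along $a_1\times a_2$), the paper also exploits fibrewise representability of the monomials --- its types $E_k(i)=\sum_{d_k}\Id{n_k\cdot d_k}{i}$ are exactly your representing objects $m^k_{a_k}$, its sections $f_k$ are your generic points, and its $\Theta_R(h)=F_{\overline h}\big(\rho_{E_1,E_2}\langle f_1,f_2\rangle\big)$ is precisely your ``evaluate at the generic pair and push forward along the Yoneda transpose $\overline h$'', with the same naturality squares discharging factorisation and uniqueness. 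The difference is purely presentational: the paper runs this Yoneda argument inside the internal dependent type theory of $\C$, where your external sum-of-representables bookkeeping becomes the identity-type construction of $E_1$ and $E_2$.
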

\begin{cor}\label{cor:TensorCoend}
  If~$\C$ has copowers, denoted by~$\odot$, we can express left
  Kan-extensions using coends. We then have
  \[
    P_1{\Tensor}P_2(r)
    \quad = \quad
    \Coend{x,y} \Sl\C{I_1{\times}I_2}[x\sq y,r] \odot \big(P_1(x)\sq P_2(y)\big)
    \ .
  \]
\end{cor}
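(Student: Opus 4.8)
The plan is to read the right-hand side as the standard pointwise expression of a left Kan extension as a coend, and then to conclude by Proposition~\ref{prop:tensorLan} together with the uniqueness of Kan extensions. Recall the general fact: for functors $K : \mathcal{A} \to \mathcal{B}$ and $G : \mathcal{A} \to \mathcal{E}$, whenever $\mathcal{E}$ has copowers by the relevant hom-sets and the coequalizers needed to form the coend, the assignment
\[
  r \ \longmapsto\ \Coend{a}\ \mathcal{B}[Ka,r]\odot G(a)
\]
is a left Kan extension $\Lan_K G$, with unit induced by the copower/coend structure.

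First I would instantiate this with $\mathcal{A} = \Sl\C{I_1}\times\Sl\C{I_2}$, $\mathcal{B} = \mathcal{E} = \Sl\C{I_1{\times}I_2}$, $K = \sq$, and $G = P_1(\BLANK)\sq P_2(\BLANK)$. The generic coend then reads exactly $\Coend{x,y}\Sl\C{I_1{\times}I_2}[x\sq y,r]\odot\big(P_1(x)\sq P_2(y)\big)$, so the right-hand side of the statement computes $\Lan_{\sq}\big(P_1(\BLANK)\sq P_2(\BLANK)\big)(r)$. By Proposition~\ref{prop:tensorLan} this left Kan extension is $P_1\Tensor P_2$; since any two left Kan extensions of the same functor along the same functor are canonically isomorphic, the coend agrees with $P_1\Tensor P_2(r)$, naturally in $r$, which is the asserted formula.

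The only real content beyond citing the standard coend formula is checking that these coends exist, so that the pointwise expression genuinely produces a left Kan extension; this is exactly where the hypothesis that $\C$ has copowers is used. Copowers supply the set-indexed coproducts $\Sl\C{I_1{\times}I_2}[x\sq y,r]\odot(\BLANK)$, and together with the coequalizers available in $\Sl\C{I_1{\times}I_2}$ they let one build the coend as a coequalizer of copowers. Note that one need not separately verify that the Kan extension of Proposition~\ref{prop:tensorLan} is pointwise: the moment the coends exist, the displayed formula is itself a left Kan extension, and uniqueness then forces it to coincide with the one of Proposition~\ref{prop:tensorLan}. The main obstacle is thus purely a question of colimit existence in the slices of~$\C$, rather than any genuinely new categorical computation.
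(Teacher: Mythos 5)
Your proposal is correct and matches the paper's own treatment: the corollary is stated there as an immediate consequence of Proposition~\ref{prop:tensorLan} together with the standard pointwise coend formula $\Lan_K G(r)\iso\Coend{a}\mathcal{B}[Ka,r]\odot G(a)$, with the copower hypothesis playing exactly the role you assign it. Your extra remark --- that once the coends exist the formula is itself a left Kan extension, so uniqueness identifies it with the one of Proposition~\ref{prop:tensorLan} without separately checking pointwiseness --- is a correct and welcome clarification of what the paper leaves implicit.
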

This definition is reminiscent of the tensor of predicate transformers
(Definition~7 in~\cite{denotPT}):
% which is also Back and von Wright ``product'' (Section~3.2
% in~\cite{productRC}):
if~$P_1:\Pow(S_1) \to \Pow(S_1)$ and $P_2:\Pow(S_2) \to \Pow(S_2)$ are
monotonic operators on subsets, then
{\everymath{\displaystyle}
\[\begin{array}{rccccl}
  P_1{\Tensor}P_2 &:& \Pow(S_1\times S_2) &\to& \Pow(S_1\times S_2)\\
                  & & r                 &\mapsto&
        \bigcup_{x\times y\subseteq r} P_1(x)\times P_2(y) &
   .
\end{array}\]}

The proof of proposition~\ref{prop:tensorLan} makes heavy use of the
internal language of locally cartesian closed categories.
%\footnote{It is rather instructive to look at the case
%where~$I=A=\One$. Then, the proposition means that~$\BLANK^{A\times B} =
%\Lan_{\times}(\BLANK^A \times \BLANK^B)$.}
%
First note that a polynomial~$I\leftarrow D\to A\to I$ can
be described by the following judgments:
\begin{enumerate}\label{internalPoly}
  \item ``$\vdash I\ \mathsf{type}$'' for the object~$I$ (slice over~$\One$),
  \item ``$i:I \vdash A(i)\ \mathsf{type}$'' for the slice~$a: A\to I$ in~$\Sl\C I$,
  \item ``$i:I, a:A(i) \vdash D(i,a)\ \mathsf{type}$'' for the slice~$d: D\to A$ in~$\Sl\C
  A$,
  \item ``$i:I, a:A(i), d:D(i,a) \vdash n(i,a,d) : I$'' for the morphism~$n:D\to I$.
\end{enumerate}
\begin{proof}[Proof of proposition~\ref{prop:tensorLan}]

  Because~$\Sigma_{a_1\times a_2}(\BLANK\sq\BLANK) =
  \Sigma_{a_1}(\BLANK)\sq\Sigma_{a_2}(\BLANK)$ is a left-adjoint, it
  commutes with all colimits, including left Kan-extensions. We thus have
  \begin{eqnarray*}
    \Lan_{\sq} \big(P_1(\BLANK) \sq P_2(\BLANK)\big)
    &=&
    \Lan_{\sq} \Big(\Sigma_{a_1}\Pi_{d_1}\Delta_{n_1}(\BLANK) \sq
    \Sigma_{a_2}\Pi_{d_2}\Delta_{n_2}(\BLANK)\Big)\\
    &=&
    \Lan_{\sq} \Big(\big(\Sigma_{a_1}\sq\Sigma_{a_2}\big)
    \big(\Pi_{d_1}\Delta_{n_1}(\BLANK) \sq
    \Pi_{d_2}\Delta_{n_2}(\BLANK)\big)\Big)\\
    &=&
    \Lan_{\sq} \Big(\Sigma_{a_1{\times}a_2}
    \big(\Pi_{d_1}\Delta_{n_1}(\BLANK) \sq
    \Pi_{d_2}\Delta_{n_2}(\BLANK)\big)\Big)\\
    &=&
    \Sigma_{a_1{\times}a_2}\Lan_{\sq} \Big(
    \Pi_{d_1}\Delta_{n_1}(\BLANK) \sq
    \Pi_{d_2}\Delta_{n_2}(\BLANK)\Big)
    \ .
  \end{eqnarray*}
  To save some parenthesis, we will write~$n_1\cdotp d_1$
  instead of~$n_1(i_1,a_1,d_1)$ and similarly for~$n_2\cdotp d_2$.
  We write~$F_1$ and~$F_2$ for~$\Pi_{d_1}\Delta_{n_1}$
  and~$\Pi_{d_2}\Delta_{n_2}$. Internally,~$F_1$ is
  thus~$X\mapsto\prod_{d_1}X(n_1\cdotp d_1)$ and~$F_1\Tensor F_2$
  is~$R\mapsto \prod_{d_1,d_2} R(n_1\cdotp d_1,n_2\cdotp d_2)$.

  To reduce verbosity, we will ignore the dependency on~$A_1$ and~$A_2$, i.e.,
  we'll ``pretend'' both are equal to~$\One$. To correct that, one simply
  needs to add~``$a_1{:}A_1(i_1), a_2{:}A_2(i_2)$'' to all contexts and make
  the constructions depend on~$a_1$ and~$a_2$.  Recall that if~$X$ and~$V$ are
  families indexed by~$U$ and~$V$,~$X\sq Y$ is the
  family~``$\fun\langle u,v\rangle . X(u)\times Y(v)$'' indexed by~$U\times V$. We
  use the same notation for functions:~$f\sq g$ stands for~$\fun\langle
  u,v\rangle . \langle f(u),g(v)\rangle$.
  We will
  \begin{enumerate}

    \item construct a natural transformation~$\varepsilon : F_1(\BLANK)\sq F_2(\BLANK)
    \implies F_1{\Tensor}F_2(\BLANK\sq\BLANK)$,

    \item show that~$\varepsilon$ is universal: if~$\rho:
      F_1(\BLANK)\sq F_2(\BLANK)
      \implies
      F(\BLANK\sq\BLANK)$,
      we construct a
      unique transformation~$\Theta:F_1{\Tensor}F_2(\BLANK)\implies
      F(\BLANK)$ such that~$\Theta\varepsilon=\rho$.
  \end{enumerate}
  In the internal language, if~$F$ and~$G$ are two functors from~$\Sl\C U$
  to~$\Sl\C V$, a natural transformation~$\alpha$ from~$F$ to~$G$, takes the
  form of
  \begin{itemize}
    \item families of functions~$v:V\vdash\alpha_{X}(v) : F(X)(v)\to G(X)(v)$
    for any~$U$-indexed type~$X$,
    \item subject to naturality: if~$f:X\to Y$, then~$\alpha_Y F_f=G_f
    \alpha_X$, i.e.,
    \[
      \alpha_Y(v)\big(F_f(v)(y)\big)
      \quad = \quad
      G_f(v)\big(\alpha_X(v)(y)\big)
    \]
    whenever~$u:U\vdash X(u)$ and~$\vdash v:V$.
  \end{itemize}

  The transformation~$\varepsilon:
  \Pi_{d_1}\Delta_{n_1}(\BLANK)\sq \Pi_{d_2}\Delta_{n_2}(\BLANK)
  \implies
  \Pi_{d_1{\times}d_2}\Delta_{n_1{\times}n_2}(\BLANK{\sq}\BLANK)$
  is defined as follows: for
  families~$X$ and~$Y$ over~$I_1$ and~$I_2$
  and~$\langle h_1,h_2\rangle:\prod_{d_1}X\big(n_1\cdotp
  d_1\big)\times \prod_{d_2}Y\big(n_2\cdotp d_2\big)$,
  we put
  \[ \vdash
    \varepsilon_{X,Y}\big\langle h_1,h_2\big\rangle
    \eqdef
    \fun \langle d_1,d_2\rangle . \big\langle h_1(d_1),h_2(d_2)\big\rangle
%    \big( h_1\times h_2\big)
    : \prod_{d_1,d_2}X\big(n_1\cdotp d_1\big)\times Y\big(n_2\cdotp d_2\big)
    \ .
  \]
  It is easy to check that this is natural. (Categorically
  speaking,~$\varepsilon_{X,Y}\langle h_1,d_2\rangle$ is just~$h_1\times h_2$.)

  To check universality, let~$\rho:
      \Pi_{d_1}\Delta_{n_1}(\BLANK)\sq \Pi_{d_2}\Delta_{n_2}(\BLANK)
      \implies
      F(\BLANK{\sq}\BLANK)$
  for some functor~$F$. We define the natural transformation~$\Theta :
  \Pi_{d_1{\times}d_2}\Delta_{n_1{\times}n_2}(\BLANK)\implies
  F(\BLANK)$ in several steps:
  \begin{enumerate}

    \item define the type~$E_1$ indexed by~$I_1$ as~``$i:I_1 \vdash E_1(i) \eqdef
      \sum_{d_1:D_1} \Id{\big.n_1\cdotp d_1}{i}$'' and similarly for~``$i:I_2\vdash
      E_2(i)$''.\footnote{$\Id{\BLANK}{\BLANK}$ is the extensional
      identity type. Its introduction rule is~``$a:A\vdash
      \refl(a):\Id{a}{a}$''.}

    \item Thus,~$\vdash \rho_{E_1,E_2} :
    \prod_{d_1}E_1\big(n_1\cdotp d_1\big) \sq \prod_{d_2}E_2\big(n_2\cdotp d_2\big)
    \to F(E_1 \sq E_2)$.

    \item Moreover, we have
      \[
        \vdash f_1 \eqdef \fun d_1 . \big\langle d_1, \refl\big(n_1\cdotp d_1\big)\big\rangle : \prod_{d_1} E_1\big(n_1\cdotp d_1\big)
      \]
    and similarly for~``$\vdash f_2$''.

    \item Given~$h$ of type~$\prod_{d_1,d_2} R\big(n_1\cdotp d_1,n_2\cdotp d_2\big)$, we
    construct
    \[
      h,i_1,i_2 \quad  \vdash \quad \overline{h}(i_1,i_2)
      \eqdef
      \fun
       \big\langle
         \langle d_1, e_1 \rangle ,
         \langle d_2, e_2 \rangle
       \big\rangle
       \ .\ %
       h \langle d_1,d_2 \rangle
    \]
    of type~$\prod_{i_1,i_2} E_1(i_1)\times E_2(i_2)\to R(i_1,i_2)$, i.e., of
    type~$E_1\sq E_2 \to R$. It works because~$h(d_1,d_2)$ is of
    type~$R\big(n_1\cdotp d_1,n_2\cdotp d_2\big)$ and since $e_k : \Id{n_k\cdotp
    d_k}{i_k}$ ($k=1,2$), we can
    substitute~$n_1\cdotp d_1$ and~$n_2\cdotp d_2$ for~$i_1$
    and~$i_2$.\footnote{Strictly speaking, we need to compose with an
    isomorphism as substitution works only ``up-to canonical isomorphisms'',
    see~\cite{Hofmann}.} % and~\cite{curien}.}

    \item We can now define~$\Theta_R$:
      \[
        h \quad\vdash\quad
        \Theta_{R}\big( h \big)
        \ \eqdef\ %
        F_{\overline{h}}\Big( \rho_{E_1,E_2} \big\langle f_1, f_2\big\rangle  \Big)
        \ .
      \]
      This is well typed because~$\rho_{E_1,E_2} \big\langle f_1,
      f_2\big\rangle$ is of type~$F(E_1\sq E_2)$ by points~2,~3
      and because~$F_{\overline{h}}$ is of type~$F(E_1\sq E_2)\to F(R)$.

  \end{enumerate}
  We have~$\Theta\varepsilon = \rho$ because:
  \begin{eqnarray*}
     \Theta_{X\times Y}\Big(\varepsilon_{X,Y}\big\langle h_1,h_2\big\rangle\Big)
       & \quad = \quad &
      F_{\overline{h_1{\times}h_2}}\Big(\rho_{E_1,E_2}\big\langle f_1,f_2\big\rangle\Big)\\
       & \quad = \quad &
    \rho_{X,Y}\big\langle h_1 , h_2 \big\rangle
  \end{eqnarray*}
  where the first equality is the definition of~$\Theta$ and the second
  follows from naturality of~$\rho$:
  \[
%    \tikzset{external/remake next}
    \begin{tikzpicture}[baseline=(m-2-1.base)]
      \matrix (m) [matrix of math nodes, column sep=6.5em, row sep=3.5em,text height=1.5ex, text depth=.25ex]
        {  F_1(E_1)\sq F_2(E_2) & F_1(X)\sq F_2(Y) \\
           F(E_1\sq E_2)        & F(X\sq Y) \\};
      \path[morphism]
        (m-1-1) edge node[above]
        {$F_{1\,\overline{h_1}}\sq F_{2\,\overline{h_2}}$} (m-1-2)
        (m-1-1) edge node[left]  {$\rho_{E_1,E_2}$} (m-2-1)
        (m-1-2) edge node[right]{$\rho_{X,Y}$} (m-2-2)
        (m-2-1) edge node[below]{$F_{\overline{h_1{\times}h_2}}$} (m-2-2);
    \end{tikzpicture}
    \ .
  \]
  It works because the action of~$F_1$ on morphisms is
  composition:~$F_{1\,f}(h) = f\circ h$, where~$f$ is of type~$\prod_i X(i)\to Y(i)$
  and~$h : \prod_{d_1}X\big(n_1\cdotp d_1\big)$. With that in mind, we find that
  \begin{eqnarray*}
    F_{1\,\overline{h_1}}\sq F_{2\,\overline{h_2}} \langle f_1,f_2\rangle
      &\quad=\quad&
    \big\langle\, %
      \overline{h_1}\circ f_1 \ ,\  \overline{h_2}\circ f_2
    \,\big\rangle\\
      &\quad=\quad&
    \big\langle\, %
      \fun d_1 . \overline{h_1}\langle d_1,\refl(n_1\cdotp d_1) \rangle
      \ ,\ %
      \dots
    \,\big\rangle\\
      &\quad=\quad&
    \big\langle\, %
      \fun d_1 . h_1(d_1) \ ,\  \dots
    \,\big\rangle\\
      &\quad=\quad&
    \big\langle\, h_1 \ ,\  h_2 \,\big\rangle
    \ .
  \end{eqnarray*}
  We now need to show that~$\Theta$ is unique with this property. It follows
  from the fact that~$\Theta$ is determined by its values on
  ``rectangles''~$X\sq Y$:
  \begin{eqnarray*}
    \Theta_R(h)
    &=&
    F_{\overline{h}}\Big(\Theta_{E_1\sq E_2} \big(f_1 \times f_2\big) \Big) \\
    &=&
    F_{\overline{h}}\Big(\rho_{E_1\sq E_2}\langle f_1,f_2\rangle\Big)\ .
  \end{eqnarray*}
  The second equality comes from~$\Theta\varepsilon=\rho$ and
  the first one follows from the naturality square
  \[
%    \tikzset{external/remake next}
    \begin{tikzpicture}[baseline=(m-2-1.base)]
      \matrix (m) [matrix of math nodes, column sep=6.5em, row sep=3.5em,text height=1.5ex, text depth=.25ex]
        {  F_1{\Tensor}F_2(E_1{\sq}E_2)     & F_1{\Tensor}F_2(R) \\
           F(E_1{\sq}E_2)                   & F(R) \\ };
      \path[morphism]
        (m-1-1) edge node[above]{$(F_1{\Tensor}F_2)_{\overline{h}}$} (m-1-2)
        (m-1-1) edge node[left]  {$\Theta_{E_1{\sq}E_2}$} (m-2-1)
        (m-1-2) edge node[right]{$\Theta_{R}$} (m-2-2)
        (m-2-1) edge node[below]{$F_{\overline{h}}$} (m-2-2);
    \end{tikzpicture}
  \]
  where like above, we have~$(F_1{\Tensor}F_2)_{\overline{h}}(f_1\sq f_2) = h$.

  This concludes the proof that~$F_1\Tensor F_2=\Lan_{\sq}\big(F_1(\BLANK)\sq
  F_2(\BLANK)\big)$ and thus the proof that~$P_1\Tensor P_2 =
  \Lan_{\sq}\big(P_1(\BLANK)\sq P_2(\BLANK)\big)$.
%%%
\end{proof}

This operation is a tensor product. This follows for example from the fact
that it is functorial in~$\PDSim_\C$ and that~$\PFSim_\C$ is equivalent to it
(Proposition~\ref{prop:PEequivPEFun}), but a direct proof is also possible.

%%%>>>2

\subsection{SMCC Structure} %%%<<<2

From Propositions~\ref{prop:SMCCDiagrams}, \ref{prop:PEequivPEFun}
and~\ref{prop:tensorLan}, we can deduce that
\begin{prop}\label{prop:linear}
  The category~$\PFSim_\C$ with~$\Tensor$ is symmetric
  monoidal closed, i.e., there is a functor~$\BLANK\Linear\BLANK$
  from~$\PFSim_\C^{\mathrm{op}}\times\PFSim_\C$ to~$\PFSim_\C$ with an adjunction
  \[
    \PFSim_\C[P_1\Tensor P_2\ ,\ P_3]
    \quad\iso\quad
    \PFSim_\C[P_1\ ,\ P_2\Linear P_3]
    \ ,
  \]
  natural in~$P_1$ and~$P_3$.
\end{prop}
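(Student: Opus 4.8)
The plan is to transport the symmetric monoidal closed structure of $\PDSim_\C$ (Proposition~\ref{prop:SMCCDiagrams}) across the equivalence of Proposition~\ref{prop:PEequivPEFun}, and then to use Proposition~\ref{prop:tensorLan} to check that the monoidal product obtained in this way is precisely the extensionally-defined tensor $\Tensor$. Write $E : \PDSim_\C \to \PFSim_\C$ for the equivalence (the functor sending a diagram to its extension) and $E^{-1}$ for a chosen quasi-inverse. The first, purely formal, observation is that a symmetric monoidal closed structure is transported along any equivalence: setting $F_1 \Tensor' F_2 \eqdef E\big(E^{-1}F_1 \Tensor E^{-1}F_2\big)$ and $F_2 \Linear F_3 \eqdef E\big(E^{-1}F_2 \Linear E^{-1}F_3\big)$, and conjugating the associator, unitors and symmetry of $\PDSim_\C$ by the unit and counit of the equivalence, one obtains an SMCC structure on $\PFSim_\C$; the coherence axioms and the closed adjunction transport automatically because $E$ is full, faithful and essentially surjective.

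The only step with genuine content is identifying the transported product $\Tensor'$ with the extensional $\Tensor$ of Proposition~\ref{prop:tensorLan}. Since $E$ sends a diagram to its extension, $E(P_1 \Tensor P_2) = \Sem{P_1 \Tensor P_2}$, and Proposition~\ref{prop:tensorLan} gives
\[
  \Sem{P_1 \Tensor P_2}
  \quad=\quad
  \Lan_{\sq}\big(\Sem{P_1}(\BLANK)\sq\Sem{P_2}(\BLANK)\big)
  \quad=\quad
  \Sem{P_1}\Tensor\Sem{P_2}
  \ ,
\]
where the right-hand $\Tensor$ depends only on the functors. Thus $E$ carries the comparison isomorphism making it strong monoidal from $(\PDSim_\C,\Tensor)$ to $(\PFSim_\C,\Tensor)$, so $\Tensor'$ is naturally isomorphic to $\Tensor$ and the SMCC structure above may be taken to have the stated $\Tensor$ as its product. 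In particular this also supplies the functoriality of $\Tensor$ on simulation cells in $\PFSim_\C$ (transported from $\PDSim_\C$).

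With $\Linear$ defined by transport as above, the required adjunction is the composite
\begin{eqnarray*}
  \PFSim_\C[P_1 \Tensor P_2,\, P_3]
  & \iso & \PDSim_\C[E^{-1}P_1 \Tensor E^{-1}P_2,\, E^{-1}P_3] \\
  & \iso & \PDSim_\C[E^{-1}P_1,\, E^{-1}P_2 \Linear E^{-1}P_3] \\
  & \iso & \PFSim_\C[P_1,\, P_2 \Linear P_3]
  \ ,
\end{eqnarray*}
natural in $P_1$ and $P_3$: the two outer isomorphisms use full faithfulness of $E$ together with the comparison $E(E^{-1}P_1 \Tensor E^{-1}P_2) \iso P_1 \Tensor P_2$ from the previous paragraph, and the middle one is the closed adjunction of Proposition~\ref{prop:SMCCDiagrams}. (One could also characterise $\Linear$ extensionally, as a right Kan extension / end dual to Corollary~\ref{cor:TensorCoend}, but this is not needed here.)

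The hard part is not the transport, which is abstract nonsense, but the bookkeeping of \emph{aligning the two descriptions of the tensor}: one must verify that the natural isomorphism $E(P_1 \Tensor P_2) \iso \Sem{P_1}\Tensor\Sem{P_2}$ furnished by Proposition~\ref{prop:tensorLan} is compatible with the associativity, unit and symmetry constraints transported from $\PDSim_\C$, so that the monoidal coherence diagrams still commute once $\Tensor'$ is replaced by $\Tensor$. Because every isomorphism in play is the canonical one, this reduces to checking naturality of a handful of squares rather than to any new construction, but it is the one place where an explicit verification, rather than a general transport principle, is required.
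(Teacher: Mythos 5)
Your proposal is correct and is essentially the paper's own argument: the paper states Proposition~\ref{prop:linear} as a direct consequence of Propositions~\ref{prop:SMCCDiagrams}, \ref{prop:PEequivPEFun} and~\ref{prop:tensorLan}, i.e., precisely your transport of the SMCC structure of~$\PDSim_\C$ along the equivalence, with the transported tensor identified with the extensional~$\Tensor$ via the Kan-extension characterization. Your explicit bookkeeping (the chain of hom-set isomorphisms and the coherence of the comparison isomorphism) merely fills in details the paper leaves implicit.
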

The concrete intensional definition of~$P_2\Linear P_3$, either in its type
theory version or its diagramatic version is rather verbose (Definition~3.7 or
Lemma~3.8 in~\cite{polyDiagrams}) and won't be needed here. However just as
with the tensor, it is possible to define~$P_2\Linear P_3$ without referring
to the representing polynomial diagrams. Not surprisingly, it takes the form a
right Kan-extension. To simplify the proof, we only state the result for the
case~$\C=\Set$:
\begin{prop}\label{prop:linearRan}
  Given two polynomial endofunctors~$P_2$ and~$P_3$ respectively on~$\Sl\Set{I_2}$
  and~$\Sl\Set{I_3}$,
  the polynomial endofunctor~$P_2\Linear P_3$ on~$\Sl\Set{I_2\times I_3}$ is a right Kan-extension
  along~$\tr$: it is universal such that
  \[
%    \tikzset{external/remake next}
    \begin{tikzpicture}[baseline=(m-2-2.base)]
      \matrix (m) [matrix of math nodes, column sep=3.5em, row sep=3.5em,text
      height=1ex, text depth=.25ex]
        {
          \Sl\Set{I_2}\times\Sl\Set{I_3}  &  \Sl\Set{I_2{\times}I_3} \\
          \Sl\Set{I_2}\times\Sl\Set{I_3}  &  \Sl\Set{I_2{\times}I_3} \\
        };
      \path[morphism]
        (m-1-1) edge node[above]{$\tr$} (m-1-2)
        (m-2-1) edge node[below]{$\tr$} (m-2-2)
        (m-1-1) edge node[left]{$P_2 \times P_3$} (m-2-1)
        (m-1-2) edge node[right]{$P_2 \Linear P_3$} (m-2-2);
      \draw[twocell]
        (m-1-2) -- node[above left]{$\eta$} (m-2-1);
    \end{tikzpicture}
  \]
  where~$\tr$ is defined as~$f\tr g \eqdef \Pi_{f{\times}1}(1{\times}g)$, or
  equivalently,~$f\tr g \eqdef \Pi_{f{\times}1}\Delta_{\pi_2}(g)$.
  More precisely, we have $P_2 \Linear P_3 = \Ran_{\tr}
  \big(P_2(\BLANK)\tr P_3(\BLANK)\big)$ in the category of endofunctors with natural
  transformations.
\end{prop}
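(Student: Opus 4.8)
The plan is to verify directly the universal property of the right Kan extension: for every endofunctor $F$ on $\Sl\Set{I_2\times I_3}$, I would produce a bijection $\Nat\big(F,\,P_2\Linear P_3\big)\iso\Nat\big(F\circ\tr,\ P_2(\BLANK)\tr P_3(\BLANK)\big)$, natural in $F$ and mediated by a counit $\eta$. Since the statement is formally dual to the tensor case (Proposition~\ref{prop:tensorLan}), the guiding idea is to transport that result across the tensor--hom adjunction (Proposition~\ref{prop:linear}) using the end/coend calculus. Restricting to $\C=\Set$ is precisely what makes this feasible: every slice is cartesian closed with fibrewise products and exponentials and has all (co)powers, so both sides of the desired bijection can be rewritten as (co)ends, starting from the coend presentation of the tensor in Corollary~\ref{cor:TensorCoend}.

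Concretely, I would first write the right-hand side as the end $\Nat\big(F\circ\tr,\ P_2(\BLANK)\tr P_3(\BLANK)\big)\iso\End{(x,y)}\Sl\Set{I_2\times I_3}\big[F(x\tr y),\,P_2(x)\tr P_3(y)\big]$ and expand $P_2,P_3$ in canonical form $\Sigma\Pi\Delta$. Because $\tr$ is assembled from the \emph{right} adjoints $\Pi$ and $\Delta$, and these commute with the limit defining the end, I would move them onto the source of the hom (dualising the way $\Sigma_{a_1\times a_2}$, a left adjoint, was pulled out of the left Kan extension in the tensor proof). The crucial structural step is then the fibrewise product--exponential adjunction $\Set[A\times B,C]\iso\Set[A,C^{B}]$: this is exactly what converts the cartesian product carried by the tensor coend of Corollary~\ref{cor:TensorCoend} into the exponential that defines $\tr$ at the level of fibres, turning $\sq$ into $\tr$. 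Reading off the image of the identity along this chain gives the counit $\eta$ as the evaluation/transpose map, and Yoneda packages the outcome as a representable functor, identifying it with the internal hom of Proposition~\ref{prop:linear}.

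The hard part will be the bookkeeping around the $\Pi$ inside $\tr$, which is \emph{indexed by $P_2(x)$} and hence varies with the Kan-extension variable $x$. In the tensor proof the operation pulled out of the $\Lan$ was the fixed functor $\Sigma_{a_1\times a_2}$, independent of the arguments; here no fixed functor can be extracted from the $\Ran$, so a naive term-by-term dualisation breaks down. Instead I expect to keep $x$ bound and exchange the two (co)ends --- the one presenting the Kan extension and the one presenting $P_2$ --- via Fubini, tracking the contravariance carefully. Establishing uniqueness of the induced factorisation then amounts, dually to the tensor argument's slogan that $\Theta$ is ``determined on rectangles $x\sq y$'', to showing that a transformation out of the $\Ran$ is determined by its restriction along the co-rectangles $x\tr y$; verifying this, together with the naturality of $\eta$ and its identification with the evaluation transpose, is the technical heart of the proof.
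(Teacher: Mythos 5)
Your proposal has a genuine gap, and it sits exactly where you yourself locate ``the technical heart''. The remedy you offer for the failure of term-by-term dualisation --- ``exchange the two (co)ends via Fubini'' --- is not a valid move: Fubini interchanges two ends or two coends, whereas what you need is to commute the \emph{limit} presenting the Ran (together with the products into which the sum $\Sigma_{a_2}$ of $P_2$ turns once it sits contravariantly inside $\tr$) past the \emph{colimits} and the $\Pi_{d_2}$'s that remain inside. That limit--colimit interchange is precisely the distributivity law of locally cartesian closed categories (the type-theoretic axiom of choice, diagram~(\ref{diag:distr})), and carrying it out amounts to reconstructing the intensional formula for $\Linear$ --- the $C=\sum_{f}\prod_{a_1}\dots$ of equation~(\ref{eqn:LinearFormulaPresheaf}) --- which is the very computation an extensional proof is meant to avoid. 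So the breakdown you honestly flag is real, and the proposed fix does not repair it; it conceals the entire intensional construction under the word ``bookkeeping''.

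There is a second structural problem: the adjunction of Proposition~\ref{prop:linear} that you want to transport lives in $\PFSim_\Set$, whose morphisms are simulations $(R,\rho)$ rather than plain natural transformations, and transposition changes the span (an $R$ between $I_1{\times}I_2$ and $I_3$ corresponds to an $R'$ between $I_1$ and $I_2{\times}I_3$); restricting to the identity span on one side does not yield the identity span on the other, and an arbitrary endofunctor $F$ is not even an object of $\PFSim_\Set$, so your target bijection $\Nat(F,P_2\Linear P_3)\iso\Nat(F\circ\tr,P_2(\BLANK)\tr P_3(\BLANK))$ cannot be extracted from that adjunction directly. The paper's proof is in effect the span-indexed version of your idea: it establishes, in one chain of end/coend isomorphisms, that $\Nat\big(\AU{R'}P_1,\ \Ran_\tr(P_2(\BLANK)\tr P_3(\BLANK))\AU{R'}\big)\iso\Nat\big(\AU{R}(P_1\Tensor P_2),\ P_3\AU{R}\big)$ for every span, \emph{without ever decomposing} $P_2$ or $P_3$ into $\Sigma\Pi\Delta$. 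Its pivot is Lemma~\ref{lem:AULeftAdjointRight}, $\Sl\Set{I_3}\big[\AU{r}(y),z\big]\iso\Sl\Set{I_2\times I_3}\big[r,y\tr z\big]$ --- the indexed form of your product--exponential step --- combined with $\AU{R}\dashv\DU{R^\sim}$ (Lemma~\ref{lem:AULeftAdjointLeft}), the isomorphism $\AU{\AU{R'(x)}}(y)\iso\AU{R}(x\sq y)$, and the already-proven coend formula for the tensor (Corollary~\ref{cor:TensorCoend}); uniqueness of adjoints then identifies the Ran with the diagrammatic $\Linear$, which simultaneously settles existence of the (large) Kan extension. Note finally that once the candidate is written via the end formula of Corollary~\ref{cor:LinearEnd}, the plain universal property against arbitrary $F$ is a routine Yoneda computation, so your ``determined on co-rectangles'' uniqueness step is not where the difficulty lies: the substance of the proposition is identifying that Ran with the internal hom of Proposition~\ref{prop:linear}, and that identification is exactly what forces the span calculus your plan omits.
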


\begin{cor}\label{cor:LinearEnd}
  We have
  \[
    P_2\Linear P_3(r)
    \quad = \quad
    \End{y,z} \Sl\Set{I_2{\times}I_3}[r,y\tr z] \pitchfork \big( P_2(y) \tr
    P_3(z) \big)
  \]
  where~$\pitchfork$ is the ``power'' operation.
\end{cor}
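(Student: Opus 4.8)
The statement asserts that the intensionally-defined functor $P_2\Linear P_3$ is the pointwise right Kan extension $\Ran_{\tr}\big(P_2(\BLANK)\tr P_3(\BLANK)\big)$; equivalently, that there is a bijection
\[
\Nat\big[G,\, P_2\Linear P_3\big] \;\iso\; \Nat\big[G\circ\tr,\; P_2(\BLANK)\tr P_3(\BLANK)\big]
\]
natural in every endofunctor $G$ on $\Sl\Set{I_2\times I_3}$, the end formula of Corollary~\ref{cor:LinearEnd} being the standard pointwise presentation of the result. My plan is to verify this universal property directly, by producing the counit $\eta$ and a unique transpose for each competitor, in formal duality to the proof of Proposition~\ref{prop:tensorLan}. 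As there, everything is carried out in the internal language of $\Set$, using the built-in adjunctions $\Sigma\dashv\Delta\dashv\Pi$ and the extensional identity types $E_k(i) \eqdef \sum_d \Id{n_k\cdotp d}{i}$; a preliminary remark records that $\tr$ is contravariant in its first argument (so that $P_2$ enters through $\Sl\Set{I_2}^{\op}$), which is already a sign that $\Linear$ is not obtained from a symmetric construction the way $\Tensor$ is.

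First I would unfold the explicit polynomial representation of $P_2\Linear P_3$ (Lemma~3.8 of~\cite{polyDiagrams}) in the internal language and build the counit
\[
\eta_{y,z}\colon (P_2\Linear P_3)(y\tr z)\ \longrightarrow\ P_2(y)\tr P_3(z),
\]
natural in $(y,z)$. Using the definition $f\tr g = \Pi_{f\times 1}\Delta_{\pi_2}(g)$ together with the units and counits of the above adjunctions, evaluating $P_2\Linear P_3$ at the rectangle $y\tr z$ collapses onto a dependent product that projects canonically onto $P_2(y)\tr P_3(z)$; this $\eta$ is the mirror image of the map $\varepsilon$ of Proposition~\ref{prop:tensorLan}.

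For universality, given any $\sigma\colon G\circ\tr \Rightarrow P_2(\BLANK)\tr P_3(\BLANK)$, the equation $\eta\circ(\bar\sigma\,\tr) = \sigma$ forces the transpose $\bar\sigma\colon G \Rightarrow P_2\Linear P_3$. Dually to the construction of $\Theta$ there, I would pin $\bar\sigma$ down from its behaviour on the diagonal rectangles $E_1\tr E_2$: feed the generic sections $f_k = \fun d.\langle d,\refl(n_k\cdotp d)\rangle$ into $G$ and $\sigma$, then reindex an arbitrary $r$ against $y\tr z$ by the substitution licensed by $e_k\colon\Id{n_k\cdotp d}{i}$. The four verifications—that $\bar\sigma$ is well-typed, natural, satisfies the triangle identity, and is unique—are then the same checks as in Proposition~\ref{prop:tensorLan}, with naturality of $\sigma$ (respectively the action of $G$ on morphisms) playing the role that naturality of $\rho$ (respectively $F$) played there.

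The crux, and the reason the statement is confined to $\C=\Set$, is that $\Linear$ is considerably less transparent than $\Tensor$. In the tensor proof the outer $\Sigma_{a_1\times a_2}$ factored straight out of the colimit because left adjoints preserve colimits; here the outer sums of $P_2$ and $P_3$ do \emph{not} factor out of the limit defining the right Kan extension, so there is no analogous reduction to bare ``monomials'' $\Pi_d\Delta_n$, and the contravariant occurrence of $P_2$ must be carried throughout. Concretely, the delicate point is to show that the reindexing defining $\bar\sigma$ typechecks uniformly in $r$ and is compatible with the action of $G$ and of $P_2\Linear P_3$ on morphisms—that is, that the single coherence consisting of the naturality of the transpose holds. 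Once it does, existence, the triangle identity, and uniqueness follow formally, and Corollary~\ref{cor:LinearEnd} drops out as the end presentation of the Kan extension just established.
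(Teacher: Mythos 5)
Your plan---verify directly that the intensional $P_2\Linear P_3$ satisfies the universal property of $\Ran_{\tr}\big(P_2(\BLANK)\tr P_3(\BLANK)\big)$ by dualizing the proof of Proposition~\ref{prop:tensorLan}---is not the paper's route, and as written it has a genuine gap. The paper never constructs the counit $\eta$ and never checks a universal property by hand: it already possesses the right adjoint $\Linear$ in $\PFSim_\Set$ (Propositions~\ref{prop:SMCCDiagrams}, \ref{prop:PEequivPEFun} and~\ref{prop:linear}), and it proves Proposition~\ref{prop:linearRan} by a purely formal end/coend calculation showing that the functor given by the end formula of the corollary is \emph{also} right adjoint to $P_1\Tensor\BLANK$ on simulation homs $\Nat\big(\AU{R}P_1\Tensor P_2\,,\,P_3\AU{R}\big)$; the ingredients are Lemma~\ref{lem:AULeftAdjointRight}, the adjunction $\Sem{\AU{R}}\dashv\Sem{\DU{R^\sim}}$ (Lemma~\ref{lem:AULeftAdjointLeft}), the coend formula for the tensor (Corollary~\ref{cor:TensorCoend}), and the isomorphism $\AU{\big.\AU{R'}(x)}(y)\iso\AU{R}(x\sq y)$. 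Uniqueness of adjoints then does double duty: it identifies the end-formula functor with the intensional $\Linear$ \emph{and} establishes that the Kan extension (a large end) exists at all.

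The gap in your version is twofold. First, your assertion that after building $\eta$ the ``four verifications are the same checks as in Proposition~\ref{prop:tensorLan}'' is contradicted by your own, correct, observation that the outer sums do not factor out of the limit. The engine of the tensor proof was precisely that reduction to monomials $\Pi_d\Delta_n$, together with the fact that every element of $F_1{\Tensor}F_2(R)$ is the image of the \emph{single} generic element $f_1\sq f_2$ under a reindexing $\overline{h}$; there is no mechanical dual of this. An element of $(P_2\Linear P_3)(r)$ is not the image of one generic element but a compatible family over all probes $r\to y\tr z$, so both the construction of $\eta$ from the explicit representation and, above all, the uniqueness of $\bar\sigma$ force you into the full structure of $\Linear$ (the dependent sum over $f$ and $\phi$ produced by distributivity), i.e., into essentially re-proving Lemma~3.8 of~\cite{polyDiagrams}---which your proposal only names, leaving the hardest step as an assertion. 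Second, even granting the global Kan-extension property, the corollary does not simply ``drop out'': a Kan extension need not be pointwise, and the end in the corollary is indexed by the \emph{large} category $\Sl\Set{I_2}\times\Sl\Set{I_3}$, so its existence is itself at stake. You must either compute the end directly---e.g., collapse the $z$-end via Lemma~\ref{lem:AULeftAdjointRight} and Yoneda to $\End{y}P_2(y)\tr P_3\big(\AU{r}(y)\big)$ and then use polynomiality of $P_2$ to tame the remaining large end---or obtain existence for free from the already-known adjoint, as the paper does.
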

Note that in the internal language,~$Y\tr Z$ is~``$i_2:I_2,i_3:I_3 \vdash
Y(i_2)\to Z(i_3)$''. Before proving Proposition~\ref{prop:linearRan}, we show:
\begin{lem}\label{lem:AULeftAdjointRight}
  There is a natural isomorphism
  \[
    \Sl\C{I_3}\big[\AU{r}(y)\,,\,z\big]
    \quad\iso\quad
    \Sl\C{I_2{\times}I_3}\big[r\,,\,y\tr z\big]
    \ .
  \]
\end{lem}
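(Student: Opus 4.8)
The plan is to build the isomorphism as a chain of standard adjunction bijections, unfolding $\AU r$ into its component functors and then reorganizing them around $r$. Write $\pi_{I_2}:I_2\times I_3\to I_2$ and $\pi_{I_3}:I_2\times I_3\to I_3$ for the two projections. An object $r\in\Sl\C{I_2{\times}I_3}$ is the same data as the span $I_2\xleftarrow{f} R\xrightarrow{g} I_3$ with $f=\pi_{I_2}\circ r$ and $g=\pi_{I_3}\circ r$, so by the description of the extension of $\AU{\BLANK}$ recalled in the proof of Lemma~\ref{lem:AULeftAdjointLeft} we have $\AU r(y)=\Sigma_g\Delta_f(y)$ (pull back along $f$, then sum along $g$). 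I would also record at the outset that, fiberwise, $y\tr z$ is the object $\Delta_{\pi_{I_2}}(y)\Rightarrow\Delta_{\pi_{I_3}}(z)$ of $\Sl\C{I_2{\times}I_3}$, where $\Rightarrow$ denotes the internal hom of that cartesian closed slice; this is exactly the reading $(i_2,i_3)\mapsto y(i_2)\to z(i_3)$ noted after Corollary~\ref{cor:LinearEnd}.

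First I would apply the adjunction $\Sigma_g\dashv\Delta_g$ to move everything over $R$, giving
\[
  \Sl\C{I_3}\big[\Sigma_g\Delta_f(y)\,,\,z\big]
  \ \iso\
  \Sl\C{R}\big[\Delta_f(y)\,,\,\Delta_g(z)\big]
  \ .
\]
Using $f=\pi_{I_2}\circ r$ and $g=\pi_{I_3}\circ r$ together with functoriality of $\Delta_{\BLANK}$, I rewrite $\Delta_f=\Delta_r\Delta_{\pi_{I_2}}$ and $\Delta_g=\Delta_r\Delta_{\pi_{I_3}}$; then the adjunction $\Sigma_r\dashv\Delta_r$ together with the standard identity $\Sigma_r\Delta_r(\BLANK)=(\BLANK)\times r$ (product taken in $\Sl\C{I_2{\times}I_3}$) yields
\[
  \Sl\C{R}\big[\Delta_r\Delta_{\pi_{I_2}}(y)\,,\,\Delta_r\Delta_{\pi_{I_3}}(z)\big]
  \ \iso\
  \Sl\C{I_2{\times}I_3}\big[\Delta_{\pi_{I_2}}(y)\times r\,,\,\Delta_{\pi_{I_3}}(z)\big]
  \ .
\]
Finally, cartesian closedness of $\Sl\C{I_2{\times}I_3}$ gives
\[
  \Sl\C{I_2{\times}I_3}\big[\Delta_{\pi_{I_2}}(y)\times r\,,\,\Delta_{\pi_{I_3}}(z)\big]
  \ \iso\
  \Sl\C{I_2{\times}I_3}\big[r\,,\,\Delta_{\pi_{I_2}}(y)\Rightarrow\Delta_{\pi_{I_3}}(z)\big]
  \ =\
  \Sl\C{I_2{\times}I_3}\big[r\,,\,y\tr z\big]
  \ ,
\]
which is the desired isomorphism. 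Every link is an instance of a natural adjunction or currying bijection, so the composite is natural in $y$ and $z$ (and in fact in $r$).

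The adjunction manipulations are routine; the one place to be careful is the bookkeeping in the middle step, namely that the two legs of $r$ factor the pullback functors as $\Delta_f=\Delta_r\Delta_{\pi_{I_2}}$ and $\Delta_g=\Delta_r\Delta_{\pi_{I_3}}$, and that $\Sigma_r\Delta_r(\BLANK)=(\BLANK)\times r$, so that after transposing along $\Sigma_r\dashv\Delta_r$ the factor ``$r$'' reappears in exactly the position needed for the final currying. I expect this to be the main (though still standard) obstacle. As a cross-check I would unwind the entire chain in the internal language, where it collapses to the iterated type isomorphism
\[
  \prod_{i_3}\Big(\textstyle\sum_{i_2} r(i_2,i_3)\times y(i_2)\ \to\ z(i_3)\Big)
  \ \iso\
  \prod_{i_2,i_3}\Big(r(i_2,i_3)\ \to\ \big(y(i_2)\to z(i_3)\big)\Big)
\]
given by sum-elimination and currying, whose right-hand side is precisely $\Sl\C{I_2{\times}I_3}[r,y\tr z]$ by the fiberwise description of $y\tr z$.
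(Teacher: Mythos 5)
Your argument is correct, and its main chain is essentially an externalized version of what the paper does --- indeed, the paper's entire proof is your closing cross-check. The paper simply notes that in the internal language the two homsets correspond to the types $\prod_{i_3}\bigl(\sum_{i_2} R(i_2,i_3)\times Y(i_2)\bigr)\to Z(i_3)$ and $\prod_{i_2,i_3} R(i_2,i_3)\to\bigl(Y(i_2)\to Z(i_3)\bigr)$, declares them ``indeed isomorphic'' (sum elimination plus currying), and stops, delegating the justification to the Seely/Hofmann correspondence of Appendix~\ref{app:DTT}. Your chain --- $\Sigma_g\dashv\Delta_g$, the factorizations $\Delta_f=\Delta_r\Delta_{\pi_{I_2}}$ and $\Delta_g=\Delta_r\Delta_{\pi_{I_3}}$, transposition along $\Sigma_r\dashv\Delta_r$ with $\Sigma_r\Delta_r(\BLANK)\iso(\BLANK)\times r$, then currying in the cartesian closed slice --- is precisely the sequence of adjunctions underlying that type-theoretic one-liner, made explicit. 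Your identification of $y\tr z$ with the internal hom $\Delta_{\pi_{I_2}}(y)\Rightarrow\Delta_{\pi_{I_3}}(z)$ is also sound and nearly definitional: $\Delta_{\pi_{I_2}}(y)$ is the map $y\times 1$, and since $\pi_{I_3}\circ(y\times 1)=\pi_2$, the internal hom $\Pi_{y\times 1}\Delta_{y\times 1}\Delta_{\pi_{I_3}}(z)$ is $\Pi_{y\times 1}\Delta_{\pi_2}(z)$, which is the paper's second formula for $\tr$; the one nitpick is that your final ``$=$'' should strictly be a canonical $\iso$, since $\Delta$ is only pseudofunctorial and pullbacks are chosen --- harmless, and consistent with the paper's own ``up-to canonical isomorphism'' stance. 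What your route buys is a self-contained proof valid in any locally cartesian closed $\C$ without invoking the internal-language machinery, with naturality in $y$, $z$ (and even $r$) visible link by link; what the paper's route buys is brevity.
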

\begin{proof}
%  The proof relies on the following: if the left square is a pullback, then so is the right one:
%  \[
%%    \tikzset{external/remake next}
%    \begin{tikzpicture}[baseline=(m-2-2.base)]
%      \matrix (m) [matrix of math nodes, column sep=3.5em, row sep=3.5em,text height=1.5ex, text depth=.25ex]
%        { B & R \\
%          Y & I_2\\ };
%      \path[morphism]
%        (m-1-1) edge node[above]{$l$} (m-1-2)
%        (m-2-1) edge node[below]{$y$} (m-2-2)
%        (m-1-1) edge node[left]{$k$} (m-2-1)
%        (m-1-2) edge node[right]{$r_3$} (m-2-2);
%      \pullback[.5cm]{m-2-1}{m-1-1}{m-1-2}{draw,-};
%    \end{tikzpicture}
%    \qquad
%%    \tikzset{external/remake next}
%    \begin{tikzpicture}[baseline=(m-2-2.base)]
%      \matrix (m) [matrix of math nodes, column sep=3.5em, row sep=3.5em,text height=1.5ex, text depth=.25ex]
%        { B & R \\
%          Y\times I_3 & I_2\times I_3\\ };
%      \path[morphism]
%        (m-1-1) edge node[above]{$l$} (m-1-2)
%        (m-2-1) edge node[below]{$y\times1$} (m-2-2)
%        (m-1-1) edge node[left]{$\langle k,r_3 l\rangle$} (m-2-1)
%        (m-1-2) edge node[right]{$\langle r_2,r_3\rangle$} (m-2-2);
%      \pullback[.5cm]{m-2-1}{m-1-1}{m-1-2}{draw,-};
%    \end{tikzpicture}
%  \]
%  One can deduce that~$\AU{r}(y) = \Sigma_{r_2}\Delta_{r_1}(y) \iso
%  \Sigma_{\pi_2}\Delta_{y\times1}(r)$ for any
%  span~$r\in\Sl\C{I_2{\times}I_3}$. The
%  adjunctions~$\Sigma_{\pi_2}\dashv\Delta_{\pi_2}$
%  and~$\Delta_{y{\times}1}\dashv\Pi_{y{\times}1}$ allow to conclude.

  In the internal language, those homsets correspond to the types
  \begin{itemize}
    \item $\prod_{i_3}\big(\sum_{i_2} R(i_2,i_3)\times Y(i_2)\big) \to Z(i_3)$
    \item and~$\prod_{i_2,i_3} R(i_2,i_3)\to\big(Y(i_2)\to Z(i_3)\big)$,
  \end{itemize}
which are indeed isomorphic.
%%%
\end{proof}

\begin{proof}[Proof of Proposition~\ref{prop:linearRan}]
  We will show, using the formulas for Kan extensions and the calculus of ends
  and coends \cite{MacLane}, that the adjoint to~$P_1\Tensor\BLANK$ (as
  given by proposition~\ref{prop:tensorLan}) is necessarily the above right
  Kan-extension.

  Suppose~$P_1$, $P_2$ and~$P_3$ are polynomial functors with
  domains~$I_1$,~$I_2$ and~$I_3$. Let~$R$ be a span between~$I_1\times I_2$
  an~$I_3$. We write~$R'$ for the corresponding span between~$I_1$
  and~$I_2\times I_3$. Besides the previous lemmas and propositions, we will
  use:
  \begin{itemize}

    \item if~$K_1\dashv K_2$, there is a natural
    isomorphism~$\Nat(FK_2,G)\iso\Nat(F,GK_1)$ for all functors~$F$ and~$G$ (note
    the inversion of left and right);

    \item the power~$X\pitchfork\BLANK$ is right-adjoint to the
    copower~$X\odot\BLANK$;

    \item if~$K_1\dashv K_2$, then~$[X\odot A , K_2(B)] \iso [X\odot K_1(A),B]$, and
    similarly for~$\pitchfork$;

    \item $\AU{\big.\AU{R'(x)}}(y) \iso \AU{R}(x\sq y)$: in the
    internal language, they are respectively
    \[
      i_3 \quad \vdash \quad
      \sum_{i_2} \left(\sum_{i_1} R(i_1,i_2,i_3)\times X(i_1)\right) \times Y(i_2)
    \]
    and
    \[
      i_3 \quad \vdash \quad
      \sum_{i_1,i_2} R(i_1,i_2,i_3)\times \big(X(i_1)\times Y(i_2)\big)
    \]
    which are naturally isomorphic.

  \end{itemize}

  We have:
{\allowdisplaybreaks
  \begin{eqnarray*}
    & &
    \Nat\Big(\AU{R'} P_1 \ ,\ \Ran_{\tr}\big(P_2(\BLANK)\tr P_3(\BLANK)\big)\AU{R'}\Big)\\
    &\iso&
    \End{x} \Sl{\Set}{I_2\times I_3}\Big[\AU{R'} P_1(x) \ ,\ \Ran_{\tr}\big(P_2(\BLANK)\tr P_3(\BLANK)\big)\AU{R'}(x)\Big]\\
    &\iso&
    \End{x} \Sl{\Set}{I_2\times I_3}\Big[\AU{R'} P_1(x) \ ,\ \End{y,z}
      \Sl{\Set}{I_2\times I_3}\big[\AU{R'}(x),y\tr z\big] \pitchfork
    \big(P_2(y)\tr P_3(z)\big)\  \Big]\\
    &\iso&
    \End{x,y,z} \Sl{\Set}{I_2\times I_3}\Big[\AU{R'} P_1(x) \ ,\ \Sl{\Set}{I_2\times I_3}\big[\AU{R'}(x),y\tr z\big] \pitchfork \big(P_2(y)\tr P_3(z)\big)\  \Big]\\
    &\iso&
    \End{x,y,z} \Sl{\Set}{I_2\times I_3}\Big[ \Sl{\Set}{I_2\times I_3}\big[\AU{R'}(x),y\tr z\big] \odot \AU{R'} P_1(x) \ ,\ P_2(y)\tr P_3(z)\  \Big]\\
    &\iso&
    \End{x,y,z} \Sl{\Set}{I_3}\Big[ \Sl{\Set}{I_3}\big[\AU{\big.\AU{R'}(x)}(y),z\big] \odot \AU{\big.\AU{R'}
    P_1(x)} P_2(y) \ ,\ P_3(z)\  \Big]\\
    &\iso&
    \End{x,y,z} \Sl{\Set}{I_3}\Big[ \Sl{\Set}{I_3}\big[\AU{R}(x\sq y),z\big] \odot \AU{R} \big(P_1(x) \sq
    P_2(y)\big) \ ,\ P_3(z)\  \Big]\\
    &\iso&
    \End{x,y,z} \Sl{\Set}{I_1\times I_2}\Big[ \Sl{\Set}{I_1\times I_2}\big[x\sq y,\DU{R^\sim}(z)\big] \odot \big(P_1(x) \sq
    P_2(y)\big)\ ,\ \DU{R^\sim} P_3(z)\  \Big]\\
    &\iso&
    \End{z} \Sl{\Set}{I_1\times I_2}\Big[ \Coend{x,y}\Sl{\Set}{I_1\times I_2}\big[x\sq y,\DU{R^\sim}(z)\big] \odot \big(P_1(x) \sq
    P_2(y)\big) \ ,\ \DU{R^\sim}P_3(z)\  \Big]\\
    &\iso&
    \End{z} \Sl{\Set}{I_1\times I_2}\Big[ P_1\Tensor P_2\big(\DU{R^\sim}(z)\big) \ ,\ \DU{R^\sim}P_3(z)\  \Big]\\
    &\iso&
    \Nat\Big( P_1\Tensor P_2 \DU{R^\sim} \ ,\ \DU{R^\sim} P_3\ \Big)\\
    &\iso&
    \Nat\Big( \AU{R} P_1\Tensor P_2 \DU{R^\sim} \ ,\ P_3\ \Big)\\
    &\iso&
    \Nat\Big( \AU{R} P_1\Tensor P_2\ ,\ P_3\AU{R}\ \Big)
    \ .
  \end{eqnarray*}}%
  Because in~$\Set$, all natural transformations are strong, these
  calculations show that there is a natural isomorphism
  between~$\PFSim_\Set[P_1\Tensor P_2,P_3]$
  and~$\PFSim_\Set\big[P_1,\Ran_{\tr}\big(P_2(\BLANK)\tr
  P_3(\BLANK)\big)\big]$.
  Note that because adjoints are unique up-to isomorphisms, the functor we
  just defined is necessarily isomorphic to the one defined on polynomial
  diagrams in~\cite{polyDiagrams}. This implies that~$P_2 \Linear P_2$ is
  indeed well defined and that the Kan extension exists.
%%%
\end{proof}
The previous proof relied on the fact that~$\C$ is~$\Set$ in two ways:
\begin{itemize}
  \item strong natural transformations and natural transformations are the
    same thing, so that strong natural transformations can be expressed as an
    end,
  \item $\Set$ has powers and copowers, so that we can use the end / coend
    formulas for~$\BLANK\Linear\BLANK$ and~$\BLANK\Tensor\BLANK$.
\end{itemize}
Proposition~\ref{prop:linearRan} holds for arbitrary~$\C$, but the sequence of
computations needs to be rewritten to use only the universal properties of
left and right Kan extensions, and we need to check that all the natural
isomorphisms respect the strength.

%Those formulas are only valid when specific powers and copowers exist. If they
%do not, we can temporarily replace~$\Set$ by~$\widehat\Set$, the category of
%presheaves over~$\Set$: this category has all powers and copowers (it is
%complete and cocomplete) and because the Yoneda embedding is full and
%faithful, the final result (not using powers and copowers) will hold
%in~$\Set$.

%%%>>>2

\subsection{Special Case: Polynomial Presheaves and Day's Convolution} %%%<<<2
\label{sub:dayConvolution}

When~$M$ is a small monoidal category, presheaves over~$M$ have a monoidal
structure using Day's convolution product:
\[
  F \Tensor G
  \quad\eqdef\quad
  \Coend{a,b \in M}  M[\BLANK, a\Tensor b] \times F(a)\times G(b)
\]
whenever~$F,G : M^\op \to \Set$. Moreover, this tensor has a right adjoint
making presheaves a symmetric monoidal \emph{closed} category.
The category~$M=\Set^\op$ is monoidal but not small. For~$F,G : \Set \to
\Set$, the formula for Day's
convolution becomes
\[
  F \Tensor G
  \quad\eqdef\quad
  \Coend{a,b \in \Set}  \Set[a\times b,\BLANK] \times F(a)\times G(b)
  \ .
\]
This coend needs not exist as it is indexed by a large category. However,
when~$F$ and~$G$ are polynomial, this is just a special case of
Corollary~\ref{cor:TensorCoend}.
\begin{defi}
  Call a presheaf~$P : \Set \to \Set$ polynomial if it is equivalent to
  \[
    X \quad\mapsto \sum_{a\in A} X^{D(a)}
  \]
  for some set~$A$ and family~$\D:A\to\Set$.
\end{defi}
Corollary~\ref{cor:TensorCoend} implies that polynomial presheaves are closed
under Day's convolution and we have the explicit formula:
\begin{equation}
\label{eqn:DayFormulaPresheaf}
  \Bigg(\sum_{a_1\in A_1} X^{D_1(a_1)}\Bigg)
  \Tensor
  \Bigg(\sum_{a_2\in A_2} X^{D_2(a_2)}\Bigg)
  \quad\iso\quad
  \sum_{(a_1,a_2)\in A_1\times A_2} X^{D_1(a_1)\times D_2(a_2)}
  \ .
\end{equation}
Moreover, the right-adjoint is also polynomial:
\begin{prop}
  The category of polynomial endofunctors on~$\Set$ with Day's convolution
  product is symmetric monoidal closed.
\end{prop}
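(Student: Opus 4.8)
The plan is to treat the category whose objects are polynomial presheaves $\Set\to\Set$ and whose morphisms are natural transformations, and to exhibit on it a tensor (Day's convolution), a unit, the coherence isomorphisms, and an internal hom, checking at each stage that we remain inside polynomial presheaves and that the required universal properties hold. The essential point --- and the reason this is not immediate from the general theory of Day's convolution --- is one of \emph{size}: the defining coend is indexed by the large category $\Set$ and need not exist for arbitrary presheaves, and restricting to \emph{polynomial} ones is exactly what makes it converge, by Corollary~\ref{cor:TensorCoend}.

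For the monoidal structure I would first invoke Corollary~\ref{cor:TensorCoend}, in the form of equation~(\ref{eqn:DayFormulaPresheaf}), to conclude that the convolution of two polynomial presheaves $(A_1,D_1)$ and $(A_2,D_2)$ exists and is again polynomial, namely $\big(A_1\times A_2,\ (a_1,a_2)\mapsto D_1(a_1)\times D_2(a_2)\big)$; thus the subcategory is closed under $\Tensor$. The unit is the identity presheaf $X\mapsto X$, that is the polynomial with $A=\One$ and $D(\ast)=\One$; it is the Day unit $\Set[\One,\BLANK]$, and the unit isomorphism is read off equation~(\ref{eqn:DayFormulaPresheaf}) directly. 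Associativity, symmetry and the unit laws then follow either from the general coherence of Day's convolution or, more concretely, straight from the explicit formula: on indices the tensor is the cartesian product $A_1\times A_2$ and on positions the pointwise product $D_1\times D_2$, so the associator, the symmetry and the unitors are inherited from those of $\times$ in $\Set$, and checking the pentagon, hexagon and triangle is routine.

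For the closed structure I would set $P_2\Linear P_3\eqdef\Ran_{\tr}\big(P_2(\BLANK)\tr P_3(\BLANK)\big)$, which is polynomial by Proposition~\ref{prop:linearRan} --- hence again an object of the category --- and whose explicit shape is the end of Corollary~\ref{cor:LinearEnd}. The tensor--hom adjunction, natural in $P_1$ and $P_3$, is obtained by specialising the end/coend computation in the proof of Proposition~\ref{prop:linearRan} to the \emph{trivial} span $R=R'=\One$. For that span $\Sem{\AU{R}}$ is the identity functor (its extension sends $X$ to $\One\times X\iso X$), so both outer occurrences of $\Nat(\AU{R}\,\BLANK,\BLANK\,\AU{R})$ in that chain collapse to ordinary $\Nat$, and the computation reads directly
\[
  \Nat\big(P_1,\ \Ran_{\tr}(P_2(\BLANK)\tr P_3(\BLANK))\big)
  \ \iso\
  \Nat\big(P_1\Tensor P_2,\ P_3\big)\ ,
\]
which is exactly the required adjunction.

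The step I expect to carry the real weight is the very first one: justifying that all the (co)ends involved genuinely exist despite being indexed by the large category $\Set$, and remain polynomial. For the tensor this is Corollary~\ref{cor:TensorCoend}; for the hom it is the combination of Corollary~\ref{cor:LinearEnd} and Proposition~\ref{prop:linearRan}. The only other point needing care is the legitimacy of the specialisation $R=\One$: one must verify that it truly recovers plain natural transformations, i.e.\ that $\Sem{\AU{\One}}=\id$, so that the bijection produced is the tensor--hom adjunction of the presheaf category rather than the simulation-level adjunction of $\PFSim_\Set$. Once these are in place the coherence conditions are routine and the proof is complete.
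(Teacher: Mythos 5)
Your proposal is correct and takes essentially the same route as the paper's own proof: closure of the polynomial presheaves under $\Tensor$ and $\Linear$ as a (non-full) subcategory of $\PFSim_\Set$, followed by rerunning the end/coend computation of Proposition~\ref{prop:linearRan} with $R$ replaced by the trivial span $\langle 1,1\rangle$, under which $\Sem{\AU{R}}$ is the identity and the chain collapses to the tensor--hom adjunction for plain natural transformations. Your extra attention to the unit and the coherence isomorphisms is more explicit than the paper's two-line argument but does not change the substance.
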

\begin{proof}
  The category of polynomial endofunctors on~$\Set$ with natural
  transformations between them is a (non full) subcategory of~$\PFSim_\Set$.
  It is thus closed under~$\BLANK\Tensor\BLANK$ and~$\BLANK\Linear\BLANK$.

  To show that~$\Tensor$ and~$\Linear$ are still adjoint in this category, we
  can rewrite the same proof as for Proposition~\ref{prop:linearRan} and
  replace~$R$ everywhere by the trivial span~$\langle1,1\rangle$. The proof
  carries through.

\end{proof}
There is an explicit formula for the right-adjoint~$\BLANK\Linear\BLANK$, but it is much less elegant than the formula for the tensor:
\begin{equation}
\label{eqn:LinearFormulaPresheaf}
    \Bigg(\sum_{a_1\in A_1} X^{D_1(a_1)}\Bigg)
    \Linear
    \Bigg(\sum_{a_2\in A_2} X^{D_2(a_2)}\Bigg)
    \quad\iso\quad
    \sum_{c\in C} X^{E(c)}
\end{equation}
where
  \[
    C \quad\eqdef\quad \sum_{f\in A_1 \to A_2} \ \prod_{a_1\in A_1} D_1(a_1)^{D_2(f(a_1))}
    \qquad\hbox{and}\qquad
    E\big((f,\phi)\big) \quad\eqdef\quad \sum_{a_1\in A_1} D_2\big(f(a_1)\big)
    \ .
  \]
(See~\cite{polyDiagrams}.)

%%%>>>2

%%%>>>1

\section{Additive and Exponential Structure} %%%<<<1

\subsection{Additive Structure} %%%<<<2

In~\cite{polyDiagrams}, it was shown that the category of polynomial diagrams
with simulations also has a cartesian / cocartesian structure whenever~$\C$ has
a well behaved coproduct. The coproduct of two diagrams, which is also their
product is defined as:
\[
%  \tikzset{external/remake next}
  P_1\Plus P_2 \quad\eqdef\quad
  \begin{tikzpicture}[baseline=(m-1-1.base)]
    \matrix (m) [matrix of math nodes, column sep=3.5em,text height=1ex, text
    depth=.25ex]
      { I_1+I_2 & D_1+D_2 & A_1+A_2 & J_1+J_2 \\ };
    \path[morphism]
    (m-1-2) edge node[above]{$n_1{+}n_2$} (m-1-1)
    (m-1-2) edge node[above]{$d_1{+}d_2$} (m-1-3)
    (m-1-3) edge node[above]{$a_1{+}a_2$} (m-1-4);
  \end{tikzpicture}
  \ .
\]
A category~$\C$ is \emph{extensive} if the canonical functor~$\Sl{\C}{I_1} \times
\Sl{\C}{I_2} \to \Sl{\C}{I_1+I_2}$ sending~$(f_1,f_2)$ to~$f_1+f_2$ is an equivalence of
category. It implies in particular the following:
\begin{lem}\label{lem:ext_plus}
  If~$\C$ is locally cartesian closed and extensive, we have
  \begin{itemize}
    \item $\Sigma_{f+g}(k+l) \iso \Sigma_f(k) + \Sigma_g(l)$,
    \item $\Delta_{f+g}(k+l) \iso \Delta_f(k) + \Delta_g(l)$,
    \item $\Pi_{f+g}(k+l) \iso \Pi_f(k) + \Pi_g(l)$
  \end{itemize}
  whenever the expressions make sense.
\end{lem}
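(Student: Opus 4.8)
The plan is to read the three isomorphisms as the single statement that each of the functors $\Sigma_{f+g}$, $\Delta_{f+g}$ and $\Pi_{f+g}$ respects the extensive decomposition of the slices. Writing $f:A\to B$ and $g:C\to D$, extensivity supplies equivalences $E:\Sl\C{A}\times\Sl\C{C}\to\Sl\C{A+C}$ and $E':\Sl\C{B}\times\Sl\C{D}\to\Sl\C{B+D}$, both acting by $(x,y)\mapsto x+y$. Being equivalences they are in particular fully faithful, so $\Sl\C{B+D}[u_1+u_2,\,v_1+v_2]\iso\Sl\C{B}[u_1,v_1]\times\Sl\C{D}[u_2,v_2]$; and their quasi-inverses may be taken to be ``pull back along the coproduct injections'', which yields $\Delta_{\inl}(k+l)\iso k$ and $\Delta_{\inr}(k+l)\iso l$, together with the universality isomorphism $u\iso\Delta_{\inl}(u)+\Delta_{\inr}(u)$ for every $u\in\Sl\C{A+C}$. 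These three consequences of extensivity, together with the adjunctions $\Sigma_f\dashv\Delta_f\dashv\Pi_f$ coming from local cartesian closure, are all I intend to use; the hypothesis ``whenever the expressions make sense'' is exactly what guarantees the domains and codomains align as above.

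The $\Sigma$ case is immediate: $\Sigma_f$ is post-composition with $f$, and coproduct of morphisms is functorial, so $\Sigma_{f+g}(k+l)=(f+g)\circ(k+l)=(f\circ k)+(g\circ l)=\Sigma_f(k)+\Sigma_g(l)$, with no isomorphism needed beyond the extensive identifications. The $\Delta$ case is where extensivity does the real work. I would form $P\eqdef\Delta_{f+g}(k+l)$, the pullback of $k+l$ along $f+g$, and decompose it by universality as $P\iso\Delta_{\inl}(P)+\Delta_{\inr}(P)$; then I identify the summands by pasting pullbacks. Since $(f+g)\circ\inl_A=\inl_B\circ f$, the pasting lemma and the inverse of $E'$ give $\Delta_{\inl_A}(P)\iso\Delta_f\Delta_{\inl_B}(k+l)\iso\Delta_f(k)$, and symmetrically $\Delta_{\inr_C}(P)\iso\Delta_g(l)$, whence $\Delta_{f+g}(k+l)\iso\Delta_f(k)+\Delta_g(l)$.

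The $\Pi$ case is the main obstacle, because $\Pi_{f+g}$ is a right adjoint and does not commute with coproducts on the nose, so it cannot be handled by the direct manipulation used for $\Sigma$. Instead I would argue by Yoneda. For an arbitrary $u=u_1+u_2\in\Sl\C{B+D}$, full faithfulness of $E'$ followed by the adjunctions $\Delta_f\dashv\Pi_f$ and $\Delta_g\dashv\Pi_g$ gives
\[
  \Sl\C{B+D}[u,\,\Pi_f(k)+\Pi_g(l)]
  \iso \Sl\C{B}[u_1,\Pi_f(k)]\times\Sl\C{D}[u_2,\Pi_g(l)]
  \iso \Sl\C{A}[\Delta_f u_1,k]\times\Sl\C{C}[\Delta_g u_2,l]\ ;
\]
reassembling with full faithfulness of $E$ and the $\Delta$ case just proved gives
\[
  \Sl\C{A}[\Delta_f u_1,k]\times\Sl\C{C}[\Delta_g u_2,l]
  \iso \Sl\C{A+C}[\Delta_f u_1+\Delta_g u_2,\,k+l]
  \iso \Sl\C{A+C}[\Delta_{f+g}u,\,k+l]\ ,
\]
and a final use of $\Delta_{f+g}\dashv\Pi_{f+g}$ identifies this with $\Sl\C{B+D}[u,\,\Pi_{f+g}(k+l)]$. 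As the composite is natural in $u$, Yoneda yields $\Pi_{f+g}(k+l)\iso\Pi_f(k)+\Pi_g(l)$. All three statements can alternatively be obtained by a uniform case analysis in the internal language, where they collapse to the computation rules for sum types together with the disjointness of the injections; the categorical route above has the merit of making explicit exactly which feature of extensivity each case consumes.
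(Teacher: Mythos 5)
Your proof is correct, and there is in fact nothing in the paper to compare it against: Lemma~\ref{lem:ext_plus} is stated immediately after the definition of extensivity with no proof (``It implies in particular the following''), so your argument supplies details the author omitted. Each step checks out. The $\Sigma$ case is indeed just functoriality of the coproduct. Your two extensivity facts for the $\Delta$ case are legitimate: coproducts in slices are computed as in $\C$, so $\Sl\C{B+D}[k+l,u]\iso\Sl\C{B}[k,\Delta_{\inl}u]\times\Sl\C{D}[l,\Delta_{\inr}u]$, i.e.\ $(\Delta_{\inl},\Delta_{\inr})$ is right adjoint to the canonical functor, and an adjoint of an equivalence is a quasi-inverse — whence $\Delta_{\inl}(k+l)\iso k$ and $u\iso\Delta_{\inl}u+\Delta_{\inr}u$; the pasting argument then works as you say. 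Your Yoneda computation for $\Pi$ is also sound, the needed naturality in $u$ holding because the decomposition $u\iso u_1+u_2$ is functorial. One remark on economy: there is a shorter, uniform route using exactly the machinery the paper sets up in Appendix~\ref{app:LCCC}. Extensivity makes the commuting squares $(f+g)\circ\inl_A=\inl_B\circ f$ (and their $\inr$ twins) into pullbacks, so the Beck--Chevalley isomorphisms give $\Delta_{\inl_B}\Pi_{f+g}\iso\Pi_f\Delta_{\inl_A}$ and $\Delta_{\inl_B}\Sigma_{f+g}\iso\Sigma_f\Delta_{\inl_A}$; applying these to $k+l$, using $\Delta_{\inl_A}(k+l)\iso k$, and reassembling via $v\iso\Delta_{\inl}v+\Delta_{\inr}v$ yields all three isomorphisms at once, with the $\Delta$ case handled by the same pattern. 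Your representability argument buys independence from Beck--Chevalley for $\Pi$ (it derives the commutation from the adjunctions alone), at the cost of some length; given the appendix, the Beck--Chevalley route is probably the argument the author had in mind.
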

With that in mind, we have directly that~$\Sem{P_1\Plus P_2}(x+y) \iso
\Sem{P_1}(x)+\Sem{P_2}(y)$ whenever~$x\in\Sl{\C}{I_1}$ and~$y\in\Sl{\C}{I_2}$.
We can thus express the additive structure on
polynomial functors without referring to the underlying polynomials. We have
\begin{lem}\label{lem:plusExtension}
  If~$\C$ is extensive with an initial object~$\Zero$, then:
  \begin{enumerate}
    \item the unique functor from~$\Sl\C\Zero$ to itself is a zero object
      in~$\FSim_\C$,
    \item\label{pt:Plus}
      if we define $F_1\Plus F_2$
      on~$\Sl\C{I_1{+}I_2}\iso\Sl\C{I_1}\times\Sl\C{I_2}$ with
      \[
        F_1{\Plus}F_2 (x+y) \quad\eqdef\quad F_1(x) + F_2(y)
        \ ,
      \]
    then ~$\BLANK\Plus\BLANK$ is a product as well as a coproduct in the
    category~$\FSim_\C$.
  \item $\Zero$ and~$\BLANK\Plus\BLANK$ are a zero object and a
    product/coproduct in~$\PFSim_\C$ as well.
  \end{enumerate}
\end{lem}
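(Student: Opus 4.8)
The plan is to treat the three points in order, reducing everything to the extensivity equivalence~$\Sl\C{I_1{+}I_2}\iso\Sl\C{I_1}\times\Sl\C{I_2}$ and its behaviour under~$\Sigma$,~$\Delta$,~$\Pi$ recorded in Lemma~\ref{lem:ext_plus}. For point~(1), I would first note that in an extensive category the initial object~$\Zero$ is \emph{strict}, so that~$\Sl\C\Zero$ is equivalent to the terminal category and the endofunctor on it is unique. A simulation from an arbitrary~$F$ over~$\Sl\C I$ to this zero functor~$Z$ consists of a span~$I\leftarrow R\rightarrow\Zero$ with a~$2$-cell; strictness forces the apex~$R\iso\Zero$, so the span is unique up to equivalence of spans, and the~$2$-cell is a natural transformation between functors valued in~$\Sl\C\Zero\iso\mathbf 1$, hence also unique. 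This makes~$Z$ terminal; the dual argument, starting from a span~$\Zero\leftarrow R\rightarrow I$, makes~$Z$ initial, so~$Z$ is a zero object in~$\FSim_\C$.

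For point~(2), the key observation is that under the equivalence~$F_1\Plus F_2$ is the functor~$(x_1,x_2)\mapsto\big(F_1(x_1),F_2(x_2)\big)$, and that a span between~$I_1{+}I_2$ and~$J$ splits, by pulling its leg on the~$I_1{+}I_2$ side back along~$\inl$ and~$\inr$, into a pair of spans~$R_1,R_2$. Using Lemma~\ref{lem:ext_plus} together with the fact that~$\Sigma$ along a copairing computes a coproduct in the slice, I would check that~$\Sem{\AU R}$ decomposes accordingly: for~$R=R_1{+}R_2$ one gets~$\Sem{\AU R}(x_1,x_2)\iso\big(\Sem{\AU{R_1}}(x_1),\Sem{\AU{R_2}}(x_2)\big)$ when~$R$ sits on the codomain side, and~$\Sem{\AU R}(w)\iso\Sem{\AU{R_1}}(w)+\Sem{\AU{R_2}}(w)$ when it sits on the domain side. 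The product structure is then immediate: I take projections~$\pi_k$ with span~$I_1{+}I_2\xleftarrow{\inl}I_k\xrightarrow{1}I_k$ and \emph{identity}~$2$-cell, and observe that a~$2$-cell for a simulation~$G\to F_1\Plus F_2$ lives in the product category~$\Sl\C{I_1}\times\Sl\C{I_2}$ and is therefore literally a pair of~$2$-cells, i.e.\ a pair of simulations~$G\to F_1$ and~$G\to F_2$. This bijection, which is visibly compatible with the projections, is exactly the universal property of the product.

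The coproduct is the delicate point. Here I define injections~$\iota_k$ with span~$I_k\xleftarrow{1}I_k\xrightarrow{\inl}I_1{+}I_2$ and the evident~$2$-cell, and given~$g_k\colon F_k\to G$ with spans~$S_k$ I form the mediating simulation on the span~$S_1{+}S_2$. The obstacle is that the~$2$-cell of an arbitrary simulation~$F_1\Plus F_2\to G$ (over a span~$T\iso T_1{+}T_2$) is a map \emph{out of} a coproduct~$\Sem{\AU{T_1}}(F_1(x_1))+\Sem{\AU{T_2}}(F_2(x_2))$, and its two components a priori still depend on \emph{both}~$x_1$ and~$x_2$, so they are not immediately~$2$-cells for simulations~$F_k\to G$. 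To remove this cross-dependence I would use naturality of the~$2$-cell along the morphism~$(1,!)\colon(x_1,\Zero)\to(x_1,x_2)$ together with the fact that~$\Sem{\AU{T_2}}$, being a left adjoint (Lemma~\ref{lem:AULeftAdjointLeft}), preserves the initial object, so that~$\Sem{\AU{T_2}}(\Zero)\iso\Zero$; this identifies the first component with~$G$ of an injection applied to a genuine~$2$-cell~$F_1\Rightarrow G$ over the span~$T_1$, and symmetrically for the second. Consequently a simulation out of~$F_1\Plus F_2$ is determined by, and can be reconstructed from, the pair obtained by precomposing with~$\iota_1,\iota_2$, which is the universal property of the coproduct.

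Finally, for point~(3) I would transport the structure along the equivalence~$\PFSim_\C\iso\PDSim_\C$ of Proposition~\ref{prop:PEequivPEFun}. The zero functor and~$F_1\Plus F_2$ are polynomial, being the extensions of the diagrams~$\Zero\leftarrow\Zero\to\Zero\to\Zero$ and~$P_1\Plus P_2$, the latter matching~$F_1\Plus F_2$ by Lemma~\ref{lem:ext_plus}; and the structural cells~$\iota_k,\pi_k$ together with the mediating cells built above are strong, being pairings, copairings and composites of strong transformations with the strong functors~$\Sem{\AU R}$. Since~$\PDSim_\C$ already carries this cartesian/cocartesian structure for extensive~$\C$~\cite{polyDiagrams} and an equivalence preserves zero objects and biproducts, the same holds in~$\PFSim_\C$, with the product/coproduct necessarily given by the formula above. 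The main difficulty of the whole proof remains the coproduct decomposition in point~(2); the rest is bookkeeping with the extensivity isomorphisms.
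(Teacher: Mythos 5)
Your proposal is correct, but on the central point~(2) it takes a genuinely different route from the paper. The paper proves the coproduct property by a single global computation: it splits the span as~$R\iso R_1+R_2$ exactly as you do, but then \emph{transposes} the simulation cell along the adjunction~$\Sem{\AU{R}}\dashv\Sem{\DU{R^\sim}}$ of Lemma~\ref{lem:AULeftAdjointLeft}, turning~$\Nat\big(\AU{R}\,F_1{\Plus}F_2,\ G\AU{R}\big)$ into~$\Nat\big(\AU{R}\,F_1{\Plus}F_2\,\DU{R^\sim},\ G\big)$, and then evaluates using~$\Sem{\DU{R^\sim}}(z)\iso\Sem{\DU{R_1^\sim}}(z)+\Sem{\DU{R_2^\sim}}(z)$, the pointwise definition of~$\Plus$, the decomposition~$\Sem{\AU{R}}(w_1+w_2)\iso\Sem{\AU{R_1}}(w_1)+\Sem{\AU{R_2}}(w_2)$, the universal property of~$+$ in~$\Sl\C{J}$, and the fact that ends commute with products. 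This transposition dissolves precisely the cross-dependence you flag as the delicate point: after precomposing with~$\Sem{\DU{R^\sim}}$ both components depend only on the single variable~$z$, so no decoupling argument is needed. Your pointwise substitute --- naturality along~$(1,!)\colon(x_1,\Zero)\to(x_1,x_2)$ combined with~$\Sem{\AU{T_2}}(\Zero)\iso\Zero$ because~$\Sem{\AU{T_2}}$ is a left adjoint --- is a valid alternative; I checked that it does identify the first component of the cell with~$G$ of an injection applied to a genuine cell over~$T_1$, and that precomposition with~$\iota_k$ recovers exactly these restrictions. What your version buys is elementarity and a trivial product case (a $2$-cell into the product category~$\Sl\C{I_1}\times\Sl\C{I_2}$ is literally a pair); what the paper's calculation buys is uniformity (its product case is ``similar'') and a Hom-set bijection produced naturally in one stroke, which eases the bookkeeping over equivalence classes of spans that you only sketch. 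Your points~(1) and~(3) coincide with the paper's treatment: (1) is declared ``direct'' there, your strictness argument being the obvious expansion, and for~(3) the paper offers exactly your transport along the equivalence of Proposition~\ref{prop:PEequivPEFun} as one option, leaving the strength-preservation check --- which you likewise only assert --- as an exercise. One slip to fix: in your preliminary display, the formula for the case where the input is a pair should read~$\Sem{\AU{R}}(x_1+x_2)\iso\Sem{\AU{R_1}}(x_1)+\Sem{\AU{R_2}}(x_2)$, a coproduct in~$\Sl\C{J}$ rather than a pair (the pair-valued decomposition is the one for a single input~$w$, landing in~$\Sl\C{I_1{+}I_2}\iso\Sl\C{I_1}\times\Sl\C{I_2}$); you use the correct forms where it matters, so this is only a matter of notation.
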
\noindent
\begin{proof}
  The first point is direct. The second point boils down to the following:
  because~$\C$ is extensive, any span~$I_1+I_2\leftarrow R\to J$ is
  isomorphic to a span~$I_1+I_2\leftarrow R_1+R_2 \to J$ where the left
  leg is~$r_1+r_2$ with~$r_k : I_k\to R_k$ and the right leg is~$[s_1,s_2]$,
  with~$s_k:R_k\to J$. Let's write~$R_k$ for the obvious span~$I_k\leftarrow
  R_k \to J$: its legs are~$r_k$ and~$s_k$.
  Extensivity of~$\C$ implies that:
\begin{itemize}
  \item $\Sem{\AU{R}}(x+y) = [\Sem{\AU{R_1}}(x),\Sem{\AU{R_2}}(y)] : \Sl{\C}J$
    for any~$x+y : \Sl{\C}{I_1+I_2}$;
  \item $\Sem{\DU{R^\sim}}(z) = \Sem{\DU{R_1^\sim}}(z) + \Sem{\DU{R_2^\sim}}(z) :
    \Sl{\C}{I_1+I_2}$ for any~$z:\Sl{\C}{J}$.
(Recall that~$\Sem{\DU{R^\sim}}$ is the right adjoint of~$\Sem{\AU{R}}$ as per
Lemma~\ref{lem:AULeftAdjointLeft}.)
\end{itemize}

We have:
  \begin{eqnarray*}
    &&\Nat\Big(\AU{R} F_1\Plus F_2, G \AU{R}\Big)\\
    &\iso&
    \Nat\Big(\AU{R} F_1\Plus F_2\DU{R^\sim}, G\Big)\\
    &\iso&
    \End{z\in\Sl{\C}{J}} \Sl{\C}{J}\Big[\AU{R}F_1\Plus F_2\DU{R^\sim}(z) , G(z)\Big]\\
    &\iso&
    \End{z} \Sl{\C}{J}\Big[\AU{R}F_1\Plus F_2\Big(\DU{R_1^\sim}(z)+\DU{R_2^\sim}(z)\Big) , G(z)\Big]\\
    &\iso&
    \End{z} \Sl{\C}{J}\Big[\AU{R}\Big(F_1 \DU{R_1^\sim}(z) + F_2\DU{R_2^\sim}(z)\Big) , G(z)\Big]\\
    &\iso&
    \End{z} \Sl{\C}{J}\Big[\AU{R_1}F_1 \DU{R_1^\sim}(z) + \AU{R_2}F_2\DU{R_2^\sim}(z) , G(z)\Big]\\
    &\iso&
    \End{z} \Sl{\C}{J}\Big[\AU{R_1}F_1 \DU{R_1^\sim}(z),G(z)\Big] \times \Sl{\C}{J}\Big[\AU{R_2}F_2\DU{R_2^\sim}(z) , G(z)\Big]\\
    &\iso&
    \End{z} \Sl{\C}{J}\Big[\AU{R_1}F_1 \DU{R_1^\sim}(z),G(z)\Big] \times
    \End{z} \Sl{\C}{J}\Big[\AU{R_2}F_2\DU{R_2^\sim}(z) , G(z)\Big]\\
    &\iso&
    \Nat\Big(\AU{R_1}F_1 \DU{R_1^\sim},G\Big)\times
    \Nat\Big(\AU{R_2}F_2\DU{R_2^\sim}, G\Big)\\
    &\iso&
    \Nat\Big(\AU{R_1}F_1,G\AU{R_1}\Big)\times
    \Nat\Big(\AU{R_2}F_2,G\AU{R_2}\Big)\ .\\
  \end{eqnarray*}
This shows that~$\Plus$ is indeed the coproduct in~$\FSim_\C$. Note that this
proof doesn't rely on the functors~$F_1$ and~$F_2$ being polynomial. The proof
that it is also a product is similar.

To get the last point, i.e., that~$\BLANK\Plus\BLANK$ is also a
coproduct in~$\PFSim_\C$, one needs to show that the natural
isomorphisms given preserves the strengths of natural transformations. This is
left as an exercise...
Another way to prove the last point is simply to use Lemma~\ref{lem:ext_plus}
and the fact that~$\Plus$ is the product and coproduct
in~$\PDSim_\C$~\cite{polyDiagrams}.

\end{proof}
%%%>>>2

\subsection{Exponential Structure} %%%<<<2

As hinted in~\cite{polyDiagrams}, the category of~$\PDSim_\Set$ has free
commutative~$\Tensor$-comonoids. For a set~$I$, we write~$\Mf(I)$ for the
collection of finite multisets of elements of~$I$. The free
commutative~$\Tensor$-comonoid for~$I\leftarrow D
\rightarrow A \rightarrow I$ is given by
\begin{equation}\label{eqn:freeComonoid}
%  \tikzset{external/remake next}
  \begin{tikzpicture}[baseline=(m-1-1.base)]
    \matrix (m) [matrix of math nodes, column sep=4em, text height=1.5ex, text depth=.25ex]
      {   \Mf(I) & D^* & A^*  & \Mf(I) \\   };
    \path[morphism]
      (m-1-2) edge node[above]{$c_I\circ n^*$} (m-1-1)
      (m-1-2) edge node[above]{$d^*$} (m-1-3)
      (m-1-3) edge node[above]{$c_I\circ a^*$} (m-1-4);
  \end{tikzpicture}
\end{equation}
where
\begin{itemize}
  \item $\BLANK^* : \Set \to \Set$ is the ``list functor'' sending a set~$X$ to the collection of finite
    sequences of elements in~$X$,
  \item $c_I : I^* \to \Mf(I)$ sends a sequence to its equivalence class under
  permutations (multiset).
\end{itemize}
%Because~$\PDSim_\Set$ is equivalent to~$\PFSim_\Set$, we know that~$\PFSim_\Set$
%has free commutative~$\Tensor$-comonoids and that if~$P :
%\Sl\Set I \to \Sl\Set I$, the functor~$!P$ goes from~$\Sl\Set {\Mf(I)}$
%to~$\Sl\Set {\Mf(I)}$.
\begin{conj}
  In~$\PFSim_\Set$, the free commutative~$\Tensor$-comonoid over~$F$ is given by
  \[
    !F \quad\eqdef\quad \Sigma_{c_I}\circ F^*\circ\Delta_{c_I}
    \ .
  \]
\end{conj}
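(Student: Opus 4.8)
The plan is to obtain the free comonoid in $\PFSim_\Set$ by transporting the one already available in $\PDSim_\Set$ across the equivalence of Proposition~\ref{prop:PEequivPEFun}, and then to read off the announced formula by computing a single extension. The whole argument rests on the observation that this equivalence is not merely an equivalence of categories but of \emph{symmetric monoidal} categories.

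First I would make the monoidal structure explicit. The functor $\PDSim_\Set \to \PFSim_\Set$ of Proposition~\ref{prop:PEequivPEFun} sends a diagram $P$ to its extension $\Sem{P}$, and by definition $\Sem{P_1\Tensor P_2} = \Sem{P_1}\Tensor\Sem{P_2}$; together with the monoidal data on $\PDSim_\Set$ provided by Proposition~\ref{prop:SMCCDiagrams}, this makes the functor strong symmetric monoidal, preserving the unit and the associativity/symmetry coherences on the nose. A strong symmetric monoidal equivalence induces an equivalence between the categories of commutative $\Tensor$-comonoids that is compatible with the forgetful functors to the underlying categories; since the free commutative $\Tensor$-comonoid is, by definition, the value of the adjoint to such a forgetful functor, equivalences send free commutative $\Tensor$-comonoids to free commutative $\Tensor$-comonoids. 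Consequently, if $G$ denotes the diagram~(\ref{eqn:freeComonoid}) — the free commutative $\Tensor$-comonoid on $P$ in $\PDSim_\Set$ — then $\Sem{G}$ is the free commutative $\Tensor$-comonoid on $\Sem{P}=F$ in $\PFSim_\Set$.

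It then remains to identify $\Sem{G}$ with the stated formula. The diagram~(\ref{eqn:freeComonoid}) is $\Mf(I)\xleftarrow{c_I\circ n^*} D^*\xrightarrow{d^*} A^*\xrightarrow{c_I\circ a^*}\Mf(I)$, so by functoriality of $\Sigma$ and contravariant functoriality of $\Delta$,
\[
  \Sem{G}
  \;=\; \Sigma_{c_I\circ a^*}\,\Pi_{d^*}\,\Delta_{c_I\circ n^*}
  \;=\; \Sigma_{c_I}\,\bigl(\Sigma_{a^*}\,\Pi_{d^*}\,\Delta_{n^*}\bigr)\,\Delta_{c_I}
  \;=\; \Sigma_{c_I}\circ F^*\circ\Delta_{c_I},
\]
where $F^* = \Sigma_{a^*}\Pi_{d^*}\Delta_{n^*}$ is precisely the extension of the ``listed'' diagram $I^*\xleftarrow{n^*}D^*\xrightarrow{d^*}A^*\xrightarrow{a^*}I^*$ of $F$; by Corollary~\ref{cor:isoBetweenPolynomials} this depends on the chosen representation of $F$ only up to isomorphism. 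A convenient simplification here is that $\C=\Set$: all natural transformations are automatically strong, so there is no separate check that the comultiplication, counit and the mediating morphisms live in $\PFSim_\Set$ rather than merely in $\FSim_\Set$.

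The hard part — and the reason the statement is phrased as a conjecture rather than a proposition — is the very first input: establishing that diagram~(\ref{eqn:freeComonoid}) really is the free commutative $\Tensor$-comonoid in $\PDSim_\Set$, which is only hinted at in~\cite{polyDiagrams}. This requires equipping it with its comultiplication and counit (coming from the deconcatenation/unshuffling bialgebra structure on lists, together with the empty list and empty multiset) and, more delicately, proving the universal property: for every commutative $\Tensor$-comonoid $C$ with a simulation $C\to P$, there is a unique comonoid simulation $C\to G$ over it. The combinatorics of passing from the free monoid $(\BLANK)^*$ to the free commutative monoid $\Mf(\BLANK)$ — that is, the quotient by the symmetric-group actions, realised here by the maps $c_I$ — is exactly where the argument ceases to be routine, and it is what a complete proof would have to pin down before the transport argument above can be invoked.
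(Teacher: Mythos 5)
Your route is exactly the one the paper itself envisions: transport the free commutative $\Tensor$-comonoid across the monoidal equivalence of Proposition~\ref{prop:PEequivPEFun} (the comparison functor is indeed strict monoidal, since $\Sem{P_1\Tensor P_2}=\Sem{P_1}\Tensor\Sem{P_2}$ holds by definition of $\Tensor$ on functors), and your computation of the extension of diagram~$(\ref{eqn:freeComonoid})$ as $\Sigma_{c_I}\circ F^*\circ\Delta_{c_I}$ agrees with the identity $\Sem{!P}=\Sigma_{c_I}\circ\Sem{P}^*\circ\Delta_{c_I}$ used in the paper's proof of the weaker lemma. You are also right that no proof is being claimed: the statement is a conjecture precisely because the freeness of~$(\ref{eqn:freeComonoid})$ in $\PDSim_\Set$ is not established, and your reduction of the conjecture to that single input matches the paper's own remark that such a proof ``would prove the conjecture''.

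One correction, though, on where the open gap actually sits. What Proposition~3.2 of~\cite{polyDiagrams} proves is freeness in the \emph{quotiented} category $\PDSim_{\Set\sim}$, in which two simulations $(R,\rho)$ and $(R',\rho')$ are identified as soon as $R\iso R'$; this already contains all of the $(\BLANK)^*$-versus-$\Mf(\BLANK)$ combinatorics, the quotient maps $c_I$, and the comonoid structure, and it is exactly what powers the paper's unconjectured lemma about $\PFSim_{\Set\sim}$. So the symmetric-group bookkeeping that you single out as ``where the argument ceases to be routine'' is, at the span level, already done in the literature. The genuinely missing step is proof-relevance: lifting the universal property from $\PDSim_{\Set\sim}$ to $\PDSim_\Set$, i.e., showing existence and uniqueness of the mediating morphism as a simulation \emph{cell} (a 2-cell $\rho$, equivalently the full diagram of Proposition~\ref{prop:sim}), not merely of its underlying span up to isomorphism. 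Your transport argument is sound once that refinement is supplied, but as written your proposal directs the remaining effort at the half of the problem that is already settled.
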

This conjecture is a strengthening of the following lemma:
\begin{lem}
  If we write~$\PFSim_{\Set\sim}$ for the category of polynomial functors and
  simulations, where two simulations~$(R,\rho)$ and~$(R',\rho')$ are
  identified when~$R\iso R'$, then:
  \begin{itemize}
    \item $\PFSim_{\Set\sim}$ with~$\Tensor$ and~$\Linear$ is symmetric
      monoidal closed,
    \item $\PFSim_{\Set\sim}$ with~$\Zero$ and~$\Plus$ is cartesian and
      cocartesian,
    \item $\PFSim_{\Set\sim}$ has free commutative~$\Tensor$-comonoids given
      by
  \[
    !F \quad\eqdef\quad \Sigma_{c_I}\circ F^*\circ\Delta_{c_I}
    \ .
  \]
  \end{itemize}
\end{lem}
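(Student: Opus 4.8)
The plan is to realise $\PFSim_{\Set\sim}$ as a quotient of $\PFSim_\Set$ and to descend all three structures from the diagram-level material already available. Concretely, once simulations with isomorphic central spans are identified, a morphism $F_1\to F_2$ in $\PFSim_{\Set\sim}$ is nothing but an isomorphism class of spans $I_1\leftarrow R\to I_2$ for which \emph{some} simulation $2$-cell $\rho$ exists; the choice of $\rho$ is forgotten. Composition is composition of spans by pullback (well defined since spans compose by Lemma~\ref{lem:compositionSpans} and simulation $2$-cells compose by pasting). Via the coarse-equivalence analogue of Proposition~\ref{prop:PEequivPEFun}, $\PFSim_{\Set\sim}$ is equivalent to the diagram category $\PDSim_{\Set\sim}$ ($\PDSim_\Set$ with the same coarsening), so each universal property may be checked at the level of spans, where the bookkeeping of $2$-cells disappears.

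For the monoidal closed structure I would let the operations act on objects exactly as in Propositions~\ref{prop:tensorLan} and~\ref{prop:linearRan}, and on morphisms through their underlying spans — the product of central spans for $\Tensor$, the regrouping of the three legs for $\Linear$ — so that $R\iso R'$ forces the resulting spans to be isomorphic and the operations descend to $\PFSim_{\Set\sim}$. The adjunction is then immediate: $\PFSim_{\Set\sim}[F_1\Tensor F_2,F_3]$ is the set of span classes $R$ between $I_1{\times}I_2$ and $I_3$ admitting a simulation, $\PFSim_{\Set\sim}[F_1,F_2\Linear F_3]$ is the set of span classes between $I_1$ and $I_2{\times}I_3$ admitting a simulation, and the leg-regrouping bijection $R\leftrightarrow R'$ matches the two, the chain of natural isomorphisms in the proof of Proposition~\ref{prop:linearRan} guaranteeing that one $\Nat$-set is inhabited exactly when the other is. Naturality in $F_1$ and $F_3$ is inherited from that proof.

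The additive structure descends in the same way. Using extensivity of $\Set$, a span between $I_1{+}I_2$ and $J$ splits as $R\iso R_1+R_2$, and the computation in the proof of Lemma~\ref{lem:plusExtension} shows that a simulation over $R$ exists precisely when simulations over $R_1$ and over $R_2$ both exist. Hence $\PFSim_{\Set\sim}[F_1\Plus F_2,G]\iso \PFSim_{\Set\sim}[F_1,G]\times\PFSim_{\Set\sim}[F_2,G]$, so $\Plus$ is a coproduct; the dual computation makes it a product and $\Zero$ a zero object, so that $\Plus$ is in fact a biproduct.

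The substantive part is the comonoid. In $\PDSim_\Set$ the free commutative $\Tensor$-comonoid over $I\leftarrow D\to A\to I$ is diagram~$(\ref{eqn:freeComonoid})$~\cite{polyDiagrams}; using $\Sigma_{c_I}\Sigma_{a^*}=\Sigma_{c_I a^*}$ and $\Delta_{n^*}\Delta_{c_I}=\Delta_{c_I n^*}$ one reads off that its extension is exactly $\Sigma_{c_I}\circ F^*\circ\Delta_{c_I}$, where $F^*$ is the list lifting $I^*\leftarrow D^*\to A^*\to I^*$ of $F$. I would then transport the comonoid structure — counit, comultiplication, and the generator $!F\to F$ — together with its universal property along the equivalence $\PDSim_{\Set\sim}\simeq\PFSim_{\Set\sim}$. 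Existence of the factoring comonoid morphism and the comonoid laws descend directly from the diagram-level statement. The hard part, and the precise reason this is only a lemma rather than the full conjecture for $\PFSim_\Set$, is \emph{uniqueness} of the factorisation: when one records the full $2$-cell $\rho$, distinct strong transformations over one span can produce the same composite and uniqueness fails, whereas in $\PFSim_{\Set\sim}$ a morphism is determined by its span alone, so uniqueness reduces to uniqueness of the classifying span, which is supplied by the freeness of diagram~$(\ref{eqn:freeComonoid})$ in $\PDSim_{\Set\sim}$.
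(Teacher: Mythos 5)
Your argument is correct and is essentially the paper's own proof: the first two bullets descend from the earlier results (Propositions~\ref{prop:tensorLan} and~\ref{prop:linearRan}, Lemma~\ref{lem:plusExtension}), and the third transports the freeness of diagram~(\ref{eqn:freeComonoid}) in~$\PDSim_{\Set\sim}$ (Proposition~3.2 of~\cite{polyDiagrams}) along the coarsened equivalence of Proposition~\ref{prop:PEequivPEFun}, with the same computation~$\Sigma_{c_I\circ a^*}\Pi_{d^*}\Delta_{c_I\circ n^*} \iso \Sigma_{c_I}\circ\Sem{P}^*\circ\Delta_{c_I}$ identifying the extension. One caveat: your closing assertion that uniqueness of the comonoid factorisation \emph{fails} in~$\PFSim_\Set$ overreaches and contradicts the paper's Conjecture, which asserts precisely that freeness does hold at the fine level --- the lemma is restricted to the coarse quotient only because~\cite{polyDiagrams} proves freeness in~$\PDSim_{\Set\sim}$ and no written proof yet extends it to~$\PDSim_\Set$, not because uniqueness is known to break when the 2-cells~$\rho$ are recorded.
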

\begin{proof}
  The first two points follow from earlier results in the paper, and the third
  one is a consequence of the fact that~$(\ref{eqn:freeComonoid})$ is the
  free commutative~$\Tensor$-comonoid
  in~$\PDSim_{\Set\sim}$~\cite{polyDiagrams}. It gives the extensional
  definition of the lemma since it implies that~$\Sem{!P} =
  \Sigma_{c_I}\circ\Sem{P}^* \circ \Delta_{c_I}$.

\end{proof}
It doesn't look too difficult to extend the proof that~$(\ref{eqn:freeComonoid})$
gives the free commutative~$\Tensor$-comonoid in~$\PDSim_{\Set\sim}$
(Proposition~3.2 in~\cite{polyDiagrams}) to the whole of~$\PDSim_\Set$. 
A complete and concise (or at least readable) proof of this fact would be
most welcome and would prove the conjecture...
%%%>>>2

\subsection{Failure of Classical (Linear) Logic} %%%<<<2

It is natural to ask if the resulting model for intuitionistic linear logic
can be extended to a model for classical linear logic, i.e., if the
monoidal closed structure of the category~$\PFSim_{\Set}$ can be extended to
a~$*$-autonomous structure. The answer is, perhaps unsurprisingly, no.

The full proof isn't very enlightening but let's look at what happens with
the ``natural'' choice of~$\bot \eqdef 1\leftarrow 1\to 1 \to 1$ as a potential
dualizing object.
Take arbitrary objects~$A$ and~$B$ in~$\C$ and define the polynomial
functor~$P_{A,B}(X) = A\times X^B$. As in the case of presheaves on sets
(page~\pageref{eqn:LinearFormulaPresheaf}), there is a simple explicit formula
for~$P_{A,B}^\bot = P_{A,B}\Linear\bot$: we have~$P_{A,B}^\bot(X) =
B^A \times X^A$ and
\[
  P_{A,B}^{\bot\bot}(X) \quad = \quad A^{B^A} X^{B^A}
  \ .
\]
Asking that the canonical simulation from~$P_{A,B}$ to~$P_{A,B}^{\bot\bot}$ is
an isomorphism would imply (by a variant of
Corollary~\ref{cor:isoBetweenPolynomials}) that the canonical map from~$A$
to~\smash{$A^{B^A}$} in~$\C$ is an isomorphism. This is not possible in
general:
\begin{lem}
  Any cartesian closed category~$\C$ in which the canonical natural transformation
  from~$A$ to~\smash{$A^{B^A}$} is an isomorphism is posetal.

  If~$\C$ is also cocartesian, then it becomes a (possibly large)
  ``pre boolean algebra''.
\end{lem}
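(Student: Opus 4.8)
The plan is to prove the two assertions in turn, after first pinning down what the hypothesis actually says about the \emph{canonical} map. The canonical natural transformation $\lambda_A : A \to A^{B^A}$ is the transpose of the projection $A \times B^A \to A$; in the internal language it sends $a$ to the constant function $\fun g.\,a$. Using only the defining adjunction of the exponential, I would show that $\lambda_A$ is an isomorphism \emph{precisely} when every morphism $B^A \to A$ is constant, i.e.\ factors through a global point $\One \to A$: if $\mu$ inverts $\lambda_A$, then for any $f : B^A \to A$ with name $\widetilde f : \One \to A^{B^A}$ the point $a_0 \eqdef \mu\circ\widetilde f$ satisfies $\widetilde f = \lambda_A\circ a_0$, which is the name of the constant map at $a_0$; hence $f = \mathrm{const}_{a_0}$.

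Next I would reduce \textbf{posetality} to a statement about points. In any cartesian closed $\C$ one has $\C[X,Y] \iso \C[\One, Y^X]$, so $\C$ is a preorder as soon as \emph{every} object has at most one global point. I would establish the latter by contraposition. Suppose some object $W$ carries two distinct points $p \ne q : \One \to W$. Instantiate the hypothesis at $A = B = W$, so that $B^A = W^W$, and consider evaluation at $p$, the map $\mathrm{ev}_p : W^W \to W$ which internally is $h \mapsto h(p)$. Evaluating $\mathrm{ev}_p$ on the names of $\id_W$ and of $\mathrm{const}_q$ returns $p$ and $q$ respectively; since $p \ne q$ these are two distinct points of $W^W$ on which $\mathrm{ev}_p$ takes different values, so $\mathrm{ev}_p$ is \emph{not} constant. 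This contradicts the characterization above, so $\lambda_W$ is not an isomorphism. Hence every object has at most one point, and $\C$ is posetal.

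For the \textbf{cocartesian} case I would first record that a posetal cartesian closed category is exactly a (possibly large, possibly non-antisymmetric) Heyting preorder: the product is meet $\wedge$, the terminal object is $\top$, and the exponential is Heyting implication, $Y^X = (X \Rightarrow Y)$, characterized by $Z \le (X \Rightarrow Y) \iff Z \wedge X \le Y$; distributivity of $\wedge$ over joins is automatic since $X \wedge (\BLANK)$ is a left adjoint. Adding finite coproducts supplies joins $\vee$ and a least element $\Zero = \bot$, making $\C$ a ``pre Heyting algebra''. It then remains to read the hypothesis off in this language: taking $B = \Zero$ gives $B^A = \Zero^A = \neg A$, and the iso becomes $A \iso (\neg A \Rightarrow A)$, i.e.\ \emph{consequentia mirabilis}, a form of Peirce's law. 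A Heyting algebra in which $(\neg A \Rightarrow A) \le A$ holds for all $A$ validates excluded middle and is therefore Boolean, so $\C$ is a (pre) Boolean algebra.

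The main obstacle I anticipate is not the collapse argument itself — which is a clean Cantor/Lawvere-style diagonalization powered by the non-constant map $\mathrm{ev}_p$ — but the bookkeeping needed to be certain that the canonical map read off the double-dualization unit $P_{A,B} \to P_{A,B}^{\bot\bot}$ (via the variant of Corollary~\ref{cor:isoBetweenPolynomials}) is genuinely the constant/weakening map $\lambda_A$ rather than some other transpose, together with the care required to state ``posetal'' and ``pre boolean algebra'' correctly for a large, merely preordered structure.
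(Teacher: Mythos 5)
Your proof is correct and takes essentially the same route as the paper: both arguments use the hypothesis to force every morphism $B^A \to A$ to be constant and then contradict this with a non-constant evaluation map separating two putative distinct points (the paper takes $A=B=C\times C$ and applies $\fun f.\,f\langle c_1,c_2\rangle$ to $\id$ and $\tw$, while you take $A=B=W$ and apply evaluation at $p$ to $\id_W$ and $\mathrm{const}_q$), before concluding posetality from $\C[X,Y]\iso\C[\One,Y^X]$. Your second half also matches the paper in substance, since your instantiation $B=\Zero$ giving \emph{consequentia mirabilis} is exactly the special case of the paper's appeal to Peirce's law $\big((A\Rightarrow B)\Rightarrow A\big)\to A$ in the resulting Heyting algebra.
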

\begin{proof}
  Because~$\C$ is cartesian closed, we may use simply
  typed~$\fun$-calculus as its internal language. We'll write~$A^B$
  as~$B\Rightarrow A$ as is natural in type theory.
  Taking both~$A$ and~$B$ to be~$C\times C$ above, we have the following
  isomorphism:
  \begin{eqnarray*}
    \C[1,C\times C]
    &\quad\iso\quad&
    \C[1,(C\times C\Rightarrow C\times C) \Rightarrow C\times C]\\
    &\quad\iso\quad&
    \C[C\times C \Rightarrow C\times C, C\times C]
  \end{eqnarray*}
  where each pair~$\langle x,y\rangle$ in~$\C[1,C\times C]$ is sent
  to~$\fun f.\langle x,y\rangle$ in~$\C[C\times C \Rightarrow C\times C, C\times C]$.

  Take~$c_1, c_2$ in~$\C[1,C]$, we have~$\varphi = \fun f. f\,\langle
  c_1,c_2\rangle$ in~$\C[C\times C \Rightarrow C\times C, C\times C]$ which
  satifisfies~$\varphi\,\id = \langle c_1,c_2\rangle$ and~$\varphi\,\tw =
  \langle c_2,c_1\rangle$, where~$\tw$ in~$\C[C\times C, C\times C]$ exchanges
  the left and right components of a pair.
  However, Because of the isomorphism above, $\varphi$ must be of the
  form~$\fun f.\langle x,y\rangle$ for some~$x$ and~$y$ in~$\C[1,C]$. This
  implies that~$\varphi\,\id = \varphi\,\tw$, and thus, that~$c_1=c_2$.
  Because~$\C$ is cartesian closed, any~$f_1,f_2$ in~$\C[A,B]$ correspond
  precisely to constants~$\ulcorner\!f_1\!\urcorner$ and~$\ulcorner\!f_2\!\urcorner$
  in~$\C[1,A\Rightarrow B]$ and are thus equal. The category~$\C$ is thus posetal.

  Because~$\C$ is cartesian closed, it is thus a (possibly large) Heyting
  semi-lattice; and if~$\C$ is also cocartesian, it becomes a Heyting algebra.
  Now, the existence of a transformation from~$(A\Rightarrow B)\Rightarrow A$
  to~$A$ amounts to saying the law of Peirce is satisfied. This makes the
  Heyting algebra boolean...

\end{proof}

Note however that the model of predicate transformers from~\cite{denotPT}
can be seen as a ``proof irrelevant'' variant of~$\PFSim_\Set$:
\begin{itemize}
  \item we collapse the categories~$\Sl{\Set}{I}$ into preorders, making each~$\Sl{\Set}{I}$
    equivalent to the algebra of subsets of~$I$,
  \item we collapse the categories of spans into preorders, making each~$\Span[I,J]$
    equivalent to the algebra of relations between~$I$ and~$J$,
  \item we identify simulations~$(R,\alpha)$ and~$(R,\beta)$.
\end{itemize}
This gives a non-trivial~$*$-autonomous category:
\begin{itemize}
  \item polynomial functors on~$\Sl{\Set}{I}$ become monotonic transformations
    on~$\Pow(I)$, all of which are in fact ``polynomial'',
  \item simulations become relations satisfying a closure property,
  \item duality gives~$P^\bot(x) = \overline{P (\overline{x})}$ whenever~$P
  : \Pow(I) \to \Pow(I)$, where~$\overline y$ is the complement of~$y$ with
  respect to~$I$.
  \item the dualizing object is~$\One$, the unit of the tensor, i.e.,
    $\One=\bot$.
  \item However, the category is not compact closed as~$\BLANK\Tensor\BLANK$ is
    different from its dual.
\end{itemize}

%%%>>>2

\subsection{Future Work} %%%<<<2

The first question that comes to mind is what happens if we consider analytic
functors on~$\Set$, i.e., those of the form
\[
  X \quad\mapsto \sum_{a\in A} X^{D(a)}/_{G(a)}
\]
where each~$G(a)$ is a subgroup of the automorphisms of~$D(a)$, acting in an
obvious way on~$X^{D(a)}$. Formula~\ref{eqn:DayFormulaPresheaf} from
Section~\ref{sub:dayConvolution} has a natural generalization to this context.
Does it work as an intensional formula for Day's convolution? What about the
analogous to formula~\ref{eqn:LinearFormulaPresheaf}?
%%%>>>2
%%%>>>1

%%TODO
\section{Thanks}

The author would like to thank both Neil Ghani and Paul-Andr\'e Melli\`es for
helpful discussions on early versions of this work.

%%%<<<1 Bibliography
\bibliographystyle{amsalpha}
\bibliography{poly}

%%%>>>1

\newpage
\appendix

%%\section{The Categories}
%%
%%\begin{itemize}
%%  \item $\Poly_\C$: bicategory with objects of~$\C$ and polynomial diagrams,
%%  \item $\PolyFun_\C$: category with objects of~$\C$ and polynomial functors,
%%  \item $\PDSim_\C$: category with polynomial endodiagrams and equivalence
%%    classes of simulation diagrams,
%%  \item $\PFSim_\C$: category with polynomial endofunctors and equivalence
%%    classes of strong simulations,
%%  \item $\FSim_\C$: category of endofunctors and equivalence classes of
%%    simulations.
%%\end{itemize}

\section{Locally Cartesian Closed Categories} %%%<<<1
\label{app:LCCC}

For a category~$\C$ with finite limits, we write~``$\One$'' for its terminal
object and~``$A\times B$'' for the cartesian product of~$A$ and~$B$.
The ``pairing'' of~$f:C\to A$ and~$g:C\to B$ is written~$\langle f,g\rangle :
C\to A\times B$.

If~$f:A\to B$ is a morphism, it induces a pullback functor~$\Delta_f$ from
slices over~$B$ to slices over~$A$. This functor has a left adjoint~$\Sigma_f$
which is ``pre-composition by~$f$''. When all the~$\Delta_f$s also
have a right adjoint, we say that~$\C$ is \emph{locally cartesian closed}.
The right adjoint is written~$\Pi_f$. We thus have
\[
  \Sigma_f \quad\dashv\quad \Delta_f \quad\dashv\quad \Pi_f \quad .
\]

Besides the isomorphisms coming from the adjunctions, slices enjoy two
fundamental properties:
\begin{itemize}
  \item \emph{the Beck-Chevalley isomorphisms:}
    \[
      \Pi_g \, \Delta_l  \quad\iso\quad  \Delta_k \, \Pi_f
      \qquad\hbox{and}\qquad
      \Sigma_g \, \Delta_l  \quad\iso\quad  \Delta_k \, \Sigma_f
    \]
    whenever
    \[
%      \tikzset{external/remake next}
      \begin{tikzpicture}[baseline=(m-1-1.base)]
        \matrix (m) [matrix of math nodes, row sep={4em,between origins},
        column sep={4em,between origins},text height=1ex, text depth=.25ex]
          { \cdot & \cdot \\ \cdot & \cdot \\ };
        \path[morphism]
          (m-1-1) edge node[auto]{$g$} (m-1-2)
          (m-1-1) edge node[left]{$l$} (m-2-1)
          (m-2-1) edge node[below]{$f$} (m-2-2)
          (m-1-2) edge node[auto]{$k$} (m-2-2);
        \pullback[.5cm]{m-2-1}{m-1-1}{m-1-2}{draw,-};
      \end{tikzpicture}
    \]
    is a pullback,

  \item \emph{distributivity:} when $b:C\to B$ and $a:B\to
    A$, we have a commuting diagram
    \begin{equation}\label{diag:distr}
%      \tikzset{external/remake next}
      \begin{tikzpicture}[baseline=(m-2-1.base)]
        \matrix (m) [matrix of math nodes, row sep={2.5em,between origins},
        column sep={2.5em,between origins},text height=1ex, text depth=.25ex]
          {   & \cdot && \cdot \\
            C &       &&       \\
              &  B    && A     \\};
        \path[morphism]
          (m-1-2) edge node[auto]{$a'$} (m-1-4)
          (m-1-2) edge (m-3-2)
          (m-1-2) edge node[over]{$\epsilon$} (m-2-1)
          (m-2-1) edge node[below left]{$b$} (m-3-2)
          (m-3-2) edge node[below]{$a$} (m-3-4)
          (m-1-2) edge node[left]{$u'$} (m-3-2)
          (m-1-4) edge node[auto]{$u \eqdef \Pi_a(b)$} (m-3-4);
        \pullback[.5cm]{m-3-2}{m-1-2}{m-1-4}{draw,-};
      \end{tikzpicture}
    \end{equation}
    where~$\epsilon$ is the co-unit of~$\Delta_a\dashv \Pi_a$. For such a
    diagram, we have
    \[
      \Pi_a \, \Sigma_b \quad\iso\quad \Sigma_{u} \, \Pi_{a'} \, \Delta_\epsilon
      \ .
    \]

\end{itemize}

%\medbreak
%Moreover, any slice category~$\Sl\C I$ is canonically enriched over~$\C$ by
%putting\label{sub:enrichment}
%\[
%  \Hom(x,y) \quad \eqdef \quad \Pi_I\Pi_x\Delta_x(y)
%\]
%whenever~$x,y\in\Sl\C I$. (Here,~$I$ also stands for the unique map from~$I$
%to~$\One$.) Any~$\Sl\C I$ is also canonically tensored over~$\C$ by using the
%left adjoint of~$\Hom(x,\BLANK)$:
%\[
%  A \odot x \quad \eqdef \quad \Sigma_x\Delta_x\Delta_I(A)
%\]
%for any~$x\in\Sl\C I$ and object~$A$.

%%%>>>1

\section{Dependent Type Theory} %%%<<<1
\label{app:DTT}

In~\cite{Seely}, Seely showed how an extensional version of Martin L\"of's
theory of dependent types~\cite{ML84} could be regarded as the internal
language for locally cartesian closed categories. A little later, Hofmann
showed in~\cite{Hofmann} that Seely's interpretation works only ``up-to
canonical isomorphisms'' and proposed a solution.

A type~$A$ in context~$\Gamma$, written~``$\Gamma\vdash A\ \mathsf{type}$'' is interpreted as a
morphism~$a:\Gamma_{\!A}\to\Gamma$, that is as an object in the slice over
(the interpretation of)~$\Gamma$. Then, a term of type~$A$ in context~$\Gamma$,
written~``$\Gamma\vdash t:A$'' is interpreted as a
morphism~$u:\Gamma\to\Gamma_{\!A}$ such that~$a u=1$, i.e., a section of (the
interpretation of) its type.  When~$A$ is a type in context~$\Gamma$, we
usually write~$A(\gamma)$ to emphasize the dependency on the context and we
silently omit irrelevant parameters.
If we write~$\inter{a}{\Gamma\vdash A}$ to mean that the interpretation
of type~$\Gamma\vdash A$ is~$a$, the main points of the Seely semantics are:
\[
  \Rule
  {\inter{a}{\Gamma \vdash A} \qquad \inter{b}{\Gamma,x:A\vdash B(x)}}
  {\inter{\Pi_a(b)}{\Gamma \vdash \prod_{x:A} B(x)}}
  {product}
  \ ,
\]
\[
  \Rule
  {\inter{a}{\Gamma \vdash A} \qquad \inter{b}{\Gamma,x:A\vdash B(x)}}
  {\inter{\Sigma_a(b)}{\Gamma \vdash \sum_{x:A} B(x)}}
  {sum}
  \ ,
\]
\[
  \Rule
  {\inter{f}{\Gamma \vdash \vec u:\Delta} \qquad \inter{u}{\Delta\vdash
  A(\vec x)}}
  {\inter{\Delta_f(u)}{\Gamma \vdash A(\vec u)}}
  {substitution}
  \ .
\]
Of particular importance is the distributivity condition~$(\ref{diag:distr})$
whose type theoretic version is an intensional version of the axiom of choice:
\begin{equation}\label{diag:AC}
  \Gamma \ \vdash \ \prod_{x:A} \ \sum_{y:B(x)} U(x,y)
  \qquad\iso\qquad
  \Gamma \ \vdash \ \sum_{f:\prod_{x:A} B(x)} \ \prod_{x:A} \ U\big(x,f(x)\big)
  \ .
\end{equation}

Extensional type theory has a special type for equality ``proofs''. Its
formation and introduction rules are as follows:
\[
  \Rule
  {\Gamma \vdash u_1 : A \qquad \Gamma \vdash u_2 : A}
  {\Gamma \vdash \IdX{A}{u_1}{u_2}}
  {}
\qquad
  \Rule
  {\Gamma \vdash u:A}
  {\Gamma \vdash \refl_A(u) : \IdX{A}{u}{u}}
  {}
  \ .
\]
This type for equality is ``extensional'' because having an inhabitant
of~$\IdX{X}{u}{v}$ implies that~$u$ and~$v$ are definitionally (extensionally) equal. This
is reflected in its interpretation:
\[
  \Rule
  {\inter{f_1}{\Gamma \vdash u_1:A} \qquad \inter{f_2}{\Gamma \vdash u_2:A}}
  {\inter{\eq(f_1,f_2)}{\Gamma \vdash \IdX{X}{x_1}{x_2}}}
  {identity type}
\]
where~$\eq(f_1,f_2)$ is the equalizer of~$f_1$ and~$f_2$. (This is indeed a
slice over the interpretation of~$\Gamma$.)
%%%>>>1

\pdfinfo{
  /Title (A Linear Category of Polynomial Functors -- II)
  /Author (Pierre Hyvernat)
  /Subject (Category Theory)
  /Keywords (polynomial functors; category theory; linear logic)}

\end{document}

%% vim600:set foldmarker=<<<,>>> foldmethod=marker fileencoding=ascii spelllang=en spell: %%